\renewcommand\nomgroup[1]{%
  \item[\bfseries
  \ifstrequal{#1}{M}{Matrices and Vectors}{%
  \ifstrequal{#1}{T}{Tensors}{%
  \ifstrequal{#1}{R}{Real Numbers}{}}}%
]}
\def\vec{\hbox{\rm vec}}
\def\Cov{{\rm Cov}}
\def\Var{{\rm Var}}
\def\col{\hbox{\rm Col}}
\DeclareMathOperator{\rank}{rank}
\DeclareMathOperator{\tr}{tr}
\newcommand{\as}{\ensuremath{\,\text{a.s.}}}
\newcolumntype{P}[1]{>{\centering\arraybackslash}p{#1}}
\DeclareMathOperator*{\argmin}{argmin}
\newtheorem{definition}{Definition}
\newtheorem{assumption}{Assumption}
\newtheorem{lemma}{Lemma}
\newtheorem{corollary}{Corollary}
\newtheorem{theorem}{Theorem}
\newcommand{\h}[1]{\boldsymbol{#1}}
\newcommand{\hA}{\h{A}}
\newcommand{\hX}{\h{X}}
\newcommand{\bA}{\h{A}}
\newcommand{\bB}{\h{B}}
\newcommand{\bC}{\h{C}}
\newcommand{\bD}{\h{D}}
\newcommand{\bE}{\h{E}}
\newcommand{\bG}{\h{G}}
\newcommand{\bH}{\h{H}}
\newcommand{\bI}{\h{I}}
\newcommand{\bL}{\h{L}}
\newcommand{\bM}{\h{M}}
\newcommand{\bP}{\h{P}}
\newcommand{\bQ}{\h{Q}}
\newcommand{\bR}{\h{R}}
\newcommand{\bS}{\h{S}}
\newcommand{\bU}{\h{U}}
\newcommand{\bV}{\h{V}}
\newcommand{\bW}{\h{W}}
\newcommand{\bX}{\h{X}}
\newcommand{\bY}{\h{Y}}
\newcommand{\bZ}{\h{Z}}
\newcommand{\bc}{\h{c}}
\newcommand{\bw}{\h{w}}
\newcommand{\bx}{\h{x}}
\newcommand{\by}{\h{y}}
\newcommand{\bz}{\h{z}}
\newcommand{\bd}{\h{d}}
\newcommand{\bepsilon}{\h{\epsilon}}
\newcommand{\bbeta}{\h{\beta}}
\newcommand{\balpha}{\h{\alpha}}
\newcommand{\bzero}{\h{0}}
\newcommand{\cX}{{\mathcal X}}
\newcommand{\cY}{{\mathcal Y}}
\title{\textbf{Cointegrated Matrix Autoregression Models}}
\author{Zebang Li and Han Xiao}
\affil{Rutgers University}
\date{}
\begin{document}

\maketitle

\begin{abstract}
 We propose a novel cointegrated autoregressive model for matrix-valued time series, with bi-linear cointegrating vectors corresponding to the rows and columns of the matrix data. Compared to the traditional cointegration analysis, our proposed matrix cointegration model better preserves the inherent structure of the data and enables corresponding interpretations. To estimate the cointegrating vectors as well as other coefficients, we introduce two types of estimators based on least squares and maximum likelihood. We investigate the asymptotic properties of the cointegrated matrix autoregressive model under the existence of trend and establish the asymptotic distributions for the cointegrating vectors, as well as other model parameters. We conduct extensive simulations to demonstrate its superior performance over traditional methods. In addition, we apply our proposed model to Fama-French portfolios and develop a effective pairs trading strategy. 
\\KEYWORDS: Cointegration; Multivariate Time Series; Matrix-valued Time Series; Pairs Trading.

\end{abstract}
\newpage

\section{Introduction}\label{SEC:introduction}
Cointegration refers to a long-run, stable relationship between two or more non-stationary time series data. It has become more and more observed and recognized in econometrics, finance, and other fields. The term `cointegration' was first introduced in the late 1980s, see \cite{granger1981some}, \cite{granger1983time} and \cite{engle1987co}. Since then, the phenomenon of cointegration has been widely studied and various methods have been developed for testing and estimation. Statistical analyses of cointegration models were developed by \cite{johansen1988statistical}, \cite{johansen1991estimation}, and \cite{johansen1995likelihood}, among others. They derived the testing of cointegration ranks between multiple time series variables and the maximum likelihood estimator of the cointegrating vectors through a vector error correction model (VECM), which is now widely used and considered to be standard in the literature.

Recently, \cite{wang2014martingale} studied the martingale limit theorem for a nonlinear cointegrated regression model. \cite{doornik2017maximum} proposed a novel estimation procedure for the I(2) cointegrated vector autoregressive model. \cite{franchi2017improved} studied the vector cointegration model allowing for multiple near unit roots using adjusted quantiles. \cite{cai2017new} studied the bivariate threshold VAR cointegration model. \cite{zhang2019identifying} proposed a new method for identifying cointegrated components of nonstationary time series. \cite{she2020inference} studied the full rank and reduced rank least squares estimators of the heavy-tailed vector error correction models. \cite{gao2021modeling} proposed a new procedure to build factor models for high-dimensional unit-root time series. \cite{guo2022automated} proposed an automated estimation method of heavy-tailed vector error correction models. To the best of our knowledge, there has been no prior research on cointegration for matrix-valued time series. This paper aims to fill this gap in the literature by presenting a cointegrated matrix autoregressive models (CMAR).

In this paper, we present a novel extension of the cointegrated VAR model, which focuses on the cointegration of matrix-valued time series, denoted by $\bX_t$, a $d_1 \times d_2$ matrix observed at time $t$. We introduce the error correction model in bilinear form, with row-wise and column-wise cointegrating vectors $\bbeta_1$ and $\bbeta_2$, respectively, which enable the transformation of $\bX_t$ into a stationary process $\bbeta_1'\bX_t\bbeta_2$, while the original process is a non-stationary I(1) process. 
Our proposed approach presents a promising alternative to traditional cointegration analysis, which is in line with the recent work on autoregressive models for stationary processes by \cite{chen2021autoregressive}, \cite{MARRR}. 

To estimate the coefficient matrices, especially the cointegrating vectors, for the proposed model, we introduce two different estimators based on the least squares and likelihood principles respectively. We obtain both the least squares estimator (LSE) and the maximum likelihood estimator (MLE) using iterative algorithms. We establish the asymptotics for all the proposed estimators, with a special focus on the estimators of the cointegration space, represented by $\hat\bbeta (\hat\bbeta' \hat\bbeta)^{-1} \hat\bbeta'$. Our analysis reveals that the process $\bX_t$ may exhibit different properties in different directions. These findings have important implications for the interpretation and analysis of the cointegration phenomenon of the matrix-valued time series. The proposed estimators are evaluated through extensive simulations, with better performance over traditional vector cointegration methods.

To demonstrate the practical application of our proposed approach, we develop a pairs trading strategy using real-world data from the Fama-French portfolios. Given that the stocks can be naturally grouped into a matrix format based on factors such as Operating Profitability and Book-to-Market ratio, we apply our model and build an equilibrium relationship among the portfolios. Leveraging this stationary relationship, we can execute short and long trades accordingly. The back-testing results demonstrate that our strategy has the potential to outperform the market, especially during bear markets.

The rest of this paper is organized as follows. We briefly review the cointegrated vector autoregressive models, and then introduce the cointegrated matrix autoregressive model in Section~\ref{SEC:CMARmodels}, and discuss its basic properties.
The estimators are presented in Section~\ref{SEC:estimation}. Asymptotic properties of the estimators are established in Section~\ref{SEC:Theorems}. In Section~\ref{SEC:simulation}, we carry out extensive numerical studies to demonstrate the properties and performance of the model and the corresponding estimators, and compare with the classical vector cointegration models. In Section~\ref{SEC:trading}, we develop a pairs trading strategy based on the proposed matrix cointegration model. All the proofs and some additional theorems and simulations are collected in Appendix~\ref{Appendix}.

\textbf{Notations}. For ease of reading, here we highlight some notations that are frequently used. Throughout this paper, we use the bold uppercase letters for matrices (e.g. $\bX_t$), and the bold lowercase letters for vectors (e.g. $\bx$). We use $\|\cdot\|_F$ and $\|\cdot\|_s$ to denote the matrix Frobenius norm and spectral norm respectively. The Kronecker product of two matrices is denoted by $\otimes$. Note that the notations $\bG_{ij}$, $\bB_{ij}$ refer to some new matrices to be defined later, and they should not be regarded as entries of $\bG$ or $\bB$. For any $d_1 \times d_2$ matrix $\bbeta$ of full rank (assuming $d_1 \leq d_2$), we denote by $\bbeta_{\bot}$ a $d_1 \times (d_1 - d_2)$ matrix of rank $d_1 - d_2$ such that $\bbeta' \bbeta_{\bot} = \bzero$. We define $\bar{\bbeta} = \bbeta(\bbeta' \bbeta)^{-1}$, such that $\bbeta'\bar{\bbeta} = \bI_{d_2}$, and $\mathbb{P}_{\beta} = \bar{\bbeta} \bbeta'$ is the projection of $\mathbb{R}^{d_1}$ onto the space spanned by the columns of $\bbeta$. We use $\Delta$ to denote the difference operator such that $\Delta \bx_t = \bx_t - \bx_{t-1}$. 

\section{Cointegrated Matrix Autoregression Models}\label{SEC:CMARmodels}
\subsection{Cointegrated Vector Autoregression Models}
To fix notations, we briefly introduce some basic concepts about the cointegrated vector autoregression models (CVAR). For a more thorough analysis on the vector cointegration models, see \cite{johansen1995likelihood}.

Consider a vector time series $\{\bx_t\}$. If we allow unit roots in the characteristic polynomial of the process that $\bx_t$ is non-stationary, but some linear combination $\bbeta' \bx_t$, $\bbeta \neq \bzero$, can be made stationary by a suitable choice of $\bx_0$, then $\bx_t$ is called cointegrated and $\bbeta$ is the cointegrating vectors. The number of linearly independent cointegrating vectors is called the cointegration rank, and the space spanned by the cointegrating vectors is the cointegration space. We define the order of cointegration as follows.
\begin{definition}
The vector process $\{\bx_t \in \mathbb{R}^{d}\}$ is called integrated of order zero I(0) if $\bx_t = \sum_{i=0}^{\infty} \bC_i \bepsilon_{t-i}$ such that $\sum_{i=0}^{\infty} \bC_i \neq 0$. It is called integrated of order one I(1) if $\Delta \bx_t$ is I(0).
\end{definition}
The autoregressive equations are given in the error-correction form
\begin{equation}\label{CVAR}
    \Delta\bx_t = \Pi \bx_{t-1} + \sum_{i=1}^{k} \Gamma_i \Delta\bx_{t-i} +  \bd + \bepsilon_t, t=1.\ldots,T,
\end{equation}
where the $\bepsilon_t$ are IID $N_p(0,\Sigma)$, $\bd$ is a constant term. The equations determine the process $\bx_t$ as a function of initial values $\bx_0,\ldots,\bx_k$, and the $\bepsilon_t$. Let $A(z)$ denote the vector polynomial derived from (\ref{CVAR}),
\begin{equation*}\label{Az}
    A(z) = (1-z) \bI_d - \Pi z - \sum_{i=1}^{k} \Gamma_i z^{i} (1-z).
\end{equation*}
The basic assumption which will be used throughout is as follows,
\begin{assumption}
The characteristic polynomial satisfies the condition that if $|A(z)|=0$, then either $z>1$ or $z=1$.
\end{assumption}
If $z=1$ is a root we say that the process has a unit root. The following representation theorem in \cite{johansen1995likelihood} provides sufficient and necessary conditions on the coefficients of the autoregressive model for the process to be integrated of order one.
\begin{theorem}\label{Theorem:represent}
If $|A(z)|=0$ implies that $|z|>1$ or $z=1$, and $\rank \Pi = r < d$, such that $\Pi=\balpha \bbeta'$, where $\balpha$, $\bbeta$ are $d \times r$ matrices of rank $r$. A necessary and sufficient condition that $\Delta\bx_t - \mathbb{E}(\Delta\bx_t)$ and $\bbeta'\bx_t - \mathbb{E}(\bbeta'\bx_t)$ can be given initial distributions such that they become I(0) is that $\balpha_{\bot}' \Gamma \bbeta_{\bot}$ has full rank. In this case the solution of (\ref{CVAR}) has the representation
\begin{equation}
    \bx_t = \bC \sum_{i=1}^{t} (\bepsilon_i + \bd) + C(L) (\bepsilon_t + \bd) + \mathbb{P}_{\beta_{\bot}} \bx_0,
\end{equation}
where $\bC = \bbeta_{\bot} (\balpha_{\bot}'\Gamma\bbeta_{\bot})^{-1}\balpha_{\bot}'$, $\Gamma = \bI - \sum_{i=1}^{k}\Gamma_i$. Thus $\bx_t$ is a cointegrated I(1) process with cointegrating vectors $\bbeta$.
\end{theorem}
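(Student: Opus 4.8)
The plan is to reproduce Johansen's representation theorem argument, which runs through an analysis of the matrix polynomial $A(z)$ near its unit root $z=1$. First I would expand $A(z)$ about $z=1$: writing $A(z) = (1-z)\bI_d - \Pi z - \sum_{i=1}^k \Gamma_i z^i (1-z)$ and substituting $u = 1-z$, one finds $A(1) = -\Pi = -\balpha\bbeta'$ and $\dot A(1) = \bI_d - \sum_{i=1}^k \Gamma_i + \Pi$ up to sign, i.e. $\Gamma$ (modulo the $\Pi$-term which will be absorbed by the projections). The key algebraic fact is that because $\rank\Pi = r < d$, the determinant $|A(z)|$ has a zero at $z=1$, and the reduced-rank structure $\Pi = \balpha\bbeta'$ lets us factor the singularity. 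I would invoke the standard lemma (e.g. from Johansen's book, which the paper is clearly following) that $A(z)^{-1}$ has at most a simple pole at $z=1$ if and only if $\balpha_{\bot}'\Gamma\bbeta_{\bot}$ is nonsingular, and in that case the coefficient of the $(1-z)^{-1}$ term in the Laurent expansion of $A(z)^{-1}$ equals $\bC = \bbeta_{\bot}(\balpha_{\bot}'\Gamma\bbeta_{\bot})^{-1}\balpha_{\bot}'$.

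Next I would translate the analytic statement about $A(z)^{-1}$ into the time-domain representation. Since $A(L)\bx_t = \bepsilon_t + \bd$ (with the convention handling the $(1-L)$ and lag terms as in \eqref{CVAR}), one applies the Granger–Johansen decomposition $A(z)^{-1} = \bC (1-z)^{-1} + C^*(z)$, where $C^*(z)$ is a power series convergent in a neighborhood of the closed unit disc (this uses Assumption~1 to rule out roots inside or on the unit circle other than $z=1$, and the simple-pole condition just established). Applying $(1-L)^{-1}$ to a constant-plus-noise input produces the random walk component $\bC\sum_{i=1}^t(\bepsilon_i + \bd)$, the convergent part $C^*(L)$ produces the stationary component $C(L)(\bepsilon_t + \bd)$, and the remaining freedom in initial conditions produces the $\mathbb{P}_{\beta_\bot}\bx_0$ term — here one checks that $\bC\bbeta = \bzero$ is irrelevant but $\balpha_{\bot}'\bC$-type identities pin down which part of $\bx_0$ survives, namely its projection onto $\col(\bbeta_{\bot})$. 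Then I would verify the two claimed consequences: $\Delta\bx_t = \bC(\bepsilon_t+\bd) + (1-L)C^*(L)(\bepsilon_t+\bd)$ is a linear process whose coefficients sum to $\bC \neq 0$ in general but is manifestly I(0); and $\bbeta'\bx_t = \bbeta'\bC\sum(\cdots) + \bbeta'C^*(L)(\cdots) + \bbeta'\mathbb{P}_{\beta_\bot}\bx_0$, where the first term vanishes because $\bbeta'\bC = \bbeta'\bbeta_{\bot}(\cdots) = \bzero$ and the last term vanishes because $\bbeta'\mathbb{P}_{\beta_\bot} = \bzero$, leaving a stationary process — so with the appropriate initial distribution $\bbeta'\bx_t$ is I(0).

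The necessity direction would be handled by the converse: if $\balpha_{\bot}'\Gamma\bbeta_{\bot}$ is singular, then $A(z)^{-1}$ has a pole of order $\geq 2$ at $z=1$, which forces $\bx_t$ to contain an I(2) component (a term growing like $\sum_{s\le t}\sum_{i\le s}\bepsilon_i$), so no choice of initial distribution can make $\Delta\bx_t$ stationary. This is the standard multivariate extension of the scalar fact that a double unit root yields a quadratic-in-$t$ stochastic trend.

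The main obstacle I anticipate is the careful bookkeeping in the Laurent expansion of $A(z)^{-1}$ at $z=1$ — specifically, identifying the residue precisely as $\bbeta_{\bot}(\balpha_{\bot}'\Gamma\bbeta_{\bot})^{-1}\balpha_{\bot}'$ rather than some equivalent-looking expression, and correctly accounting for the interaction between the $-\Pi z$ term and the $(1-z)$-weighted lag terms when expanding $\dot A(1)$. This requires the block decomposition of $\mathbb{R}^d$ along $\col(\balpha),\col(\balpha_\bot)$ and $\col(\bbeta),\col(\bbeta_\bot)$ and the identity relating $\Gamma$ restricted to these complementary subspaces to the invertibility of the full local inverse; it is routine in principle but is where all the structure of the theorem is concentrated. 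Everything downstream — passing from the analytic expansion to the time-domain sums and checking the I(0)/I(1) claims — is then a mechanical translation. Since the paper states this as a cited result from \cite{johansen1995likelihood}, I would expect the actual "proof" in the paper to be a reference rather than a reproduction, so the above is the argument one would cite.
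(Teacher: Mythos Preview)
Your proposal is correct and your anticipation is exactly right: the paper does not reproduce the argument at all but simply states ``Proof of Theorem~\ref{Theorem:represent} can be found in Chapter~4 \cite{johansen1995likelihood}.'' The Laurent-expansion approach you sketch (expand $A(z)$ at $z=1$, show the pole of $A(z)^{-1}$ is simple iff $\balpha_{\bot}'\Gamma\bbeta_{\bot}$ is nonsingular with residue $\bC$, then translate to the time domain) is precisely Johansen's argument that the paper is deferring to.
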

By building upon the aforementioned results, a comprehensive framework was established for formulating and estimation in the context of cointegration, as detailed in works such as \citep{johansen1988statistical, johansen1991estimation, johansen1995likelihood, johansen1999testing}, among others. In the following sections, we extend these works to the case of matrix time series.

\subsection{Cointegrated MAR Models}
At each time $t$, a matrix $\bX_t \in \mathbb{R}^{d_1 \times d_2}$ is observed. We introduce the cointegrated matrix autoregression models (CMAR) of the form
\begin{equation}\label{model form 2}
    \Delta \bX_t = \bA_1 \bX_{t-1} \bA_2' + \sum_{i=1}^{k} \bB_{i1} \Delta \bX_{t-i} \bB_{i2}' + \bD + \bE_t, \ t=1,\ldots,T,
\end{equation}
where $\bA_j$ are $d_j \times d_j$ coefficient matrices of ranks $r_j \le d_j$, and $\bB_{ij}$ are $d_j \times d_j$ coefficient matrices without rank constraints, for $j=1,2$, $i=1,\ldots,k$. The $d_1 \times d_2$ matrix $\bD$ is a constant term. $\bE_t$ is a $d_1 \times d_2$ matrix white noise. The low rank assumption implies that $\bA_j = \balpha_j \bbeta_j'$ where $\balpha_j$ and $\bbeta_j$ are both $d_j \times r_j$ full rank matrices. $\bbeta_1$ and $\bbeta_2$ are cointegrating vectors.

The model (\ref{model form 2}) offers a parsimonious representation of the cointegrated vector autoregressive models (CVAR). Taking vectorization on both sides of (\ref{model form 2}), it becomes
\begin{equation}\label{vecCMAR}
    \Delta \vec(\bX_t) = (\bA_2 \otimes \bA_1) \vec(\bX_{t-1}) + \sum_{i=1}^{k}(\bB_{i2} \otimes \bB_{i1}) \Delta \vec(\bX_{t-i}) + \vec(\bD) + \vec(\bE_t).
\end{equation}
In comparison to the CVAR model (\ref{CVAR}), the CMAR model imposes further constraints on the coefficients by requiring them to take the form of a Kronecker product, such that $\Pi = \bA_2 \otimes \bA_1$, $\Gamma_i = \bB_{i2} \otimes \bB_{i1}$, and the cointegrating vectors $\bbeta = \bbeta_2 \otimes \bbeta_1$. If the vector process $\vec(\bX_t)$ is integrated of order one such that $\bbeta' \vec(\bX_t)$ is stationary, then it is equivalent to say the matrix process $\bbeta_1' \bX_t \bbeta_2$ is stationary. The concept of I(0) and I(1) for matrix-valued time series can be naturally extended from the vector case.
\begin{definition}
The $d_1 \times d_2$ matrix process $\{\bX_t\}$ is called integrated of order zero I(0) if $\vec(\bX_t) = \sum_{i=0}^{\infty} \bC_i \bepsilon_{t-i}$ such that $\sum_{i=0}^{\infty} \bC_i \neq 0$. It is called integrated of order one I(1) if $\Delta \bX_t$ is I(0).
\end{definition}
The characteristic polynomial for the process $\bX_t$ given in (\ref{model form 2}) is
\begin{equation*}
   A(z) = (1-z) \bI - (\bA_2 \otimes \bA_1)z - \sum_{i=1}^{k} (\bB_{i2} \otimes \bB_{i1}) (1-z)z^i.
\end{equation*}
The unit root condition of a matrix time series can be expressed as follows: 
\begin{assumption}\label{rootA}
If $\left|A(z)\right| = 0$, then either $|z| > 1$ or $z = 1$.
\end{assumption}
A necessary and sufficient condition that $\Delta\bX_t - \mathbb{E}(\Delta \bX_t)$ and $\bbeta_1' \bX_t \bbeta_2 - \mathbb{E}(\bbeta_1' \bX_t \bbeta_2)$ can be given initial distributions such that they become I(0) is that $\vec(\bX_t)$ satisfies the conditions of Theorem~\ref{Theorem:represent}. It is worth noting that if $\bX_t$ is I(0), then $\bbeta_1' \bX_t \bbeta_2$ is stationary. However, either $\bbeta_1' \bX_t$ or $\bX_t \bbeta_2$ may not be stationary. Therefore, both row-wise and column-wise cointegrating vectors are necessary.

\subsection{Identifiability}
If we require that $\|\bA_1\|_F=1$ and $\|\bB_{j1}\|_F=1$, $1\leq j\leq k$, then each coefficient matrix $\bA_1$, $\bA_2$, $\bB_{j1}$, $\bB_{j2}$ is identified up to a sign change. We also assume $\bbeta_i$ is the left singular vectors of $\bA_i$ such that $\bbeta_i' \bbeta_i = \bI_{r_i}$, $i=1,2$. The estimators $\hat\bbeta_i$ can be normalized by any non-zero matrix $\h{c}$, denoted as $\hat\bbeta_{ic} = \hat\bbeta_{i} (\h{c}' \hat\bbeta_{i})^{-1}$.
Note that the projection matrix $\mathbb{P}_{\beta_i} = \bbeta_i (\bbeta_i' \bbeta_i)^{-1} \bbeta'_i$ would remain unique regardless of how we normalize $\bbeta_i$, i.e., the choice of non-zero matrix $\bc$ would not affect the projection matrix. In section~\ref{SEC:Theorems}, we first derive asymptotic results for $\tilde\bbeta_{i} = \hat\bbeta_i (\bbeta_i' \hat\bbeta_i)^{-1}$, such that $\tilde\bbeta_i$ is normalized by $\bbeta_{i}' \tilde\bbeta_i = \bI_{r_i}$. We then extend this to the estimators $\hat\bbeta_{ic} = \hat\bbeta_i (\bc' \hat\bbeta_i)^{-1}$ with any normalizing matrix $\bc$ that $\bc' \hat\bbeta_{ic} = \bI_{r_i}$. Then we establish the asymptotic distribution for the projection matrix $\hat{\mathbb{P}}_{\beta_i} = \hat\bbeta_{i} (\hat\bbeta_i' \hat\bbeta_i)^{-1} \hat\bbeta_i'$.

\section{Estimation}\label{SEC:estimation}
We propose to use the alternating algorithm, updating one, while holding the other fixed. Specifically, suppose in the model (\ref{model form 2}), $\bA_2$, $\bB_{i2}$ are given, we discuss how to estimate $\bA_1$, $\bB_{i1}$, and $\bD$. The previous studies by \cite{chen2020autoregressive}, \cite{Li2021}, and \cite{xiao2021} have also employed a similar alternating algorithm to estimate the coefficients, but their analysis assumed that the process ${\bX_t}$ was stationary, while our model is in the error-correction form. To avoid confusion, we adopt the convention that $\bA[,j]$ denotes the $j$-th column of $\bA$. The $j$-th column of the model equation (\ref{model form 2}) can be written as
\[\Delta\bX_{t}[,j] = \bA_1 \bX_{t-1} \bA_2'[,j] + \sum_{i=1}^{k} \bB_{i1} \Delta\bX_{t-1} \bB_{i2}'[,j] + \bD[,j] + \bE_t[,j],\]
which can be viewed as the cointegrated vector autoregressive models when $\bA_2$, $\bB_{i2}$ are fixed. For the estimation, in Section~\ref{SEC:mle}, we introduce the MLE estimator when $\Cov(\vec(\bE_t))$ is separable, which is related to maximum likelihood analysis in \cite{johansen1995likelihood} and canonical correlation analysis in \cite{anderson2003introduction}, \cite{velu2013multivariate}. Then we consider the least squares method in Section~\ref{sec:lse}. The MLE and LSE estimators are respectively denoted as $\hat\bA_{i}^{\text{ml}}$ and $\hat\bA_{i}^{\text{ls}}$.

\subsection{Maximum Likelihood Estimation}\label{SEC:mle}
We commence our analysis of the likelihood by examining a basic model,
\begin{equation}\label{model form 1}
    \Delta \bX_t = \bA_1 \bX_{t-1} \bA_2' + \bE_t, \ t=1,\ldots,T,
\end{equation}
which we later extend it to the model (\ref{model form 2}). We assume that the covariance matrix $\Sigma_e$ of $\vec{(\bE_t)}$ takes the form of a Kronecker product
\begin{equation}\label{sigma_e}
    \Sigma_e = \Sigma_2 \otimes \Sigma_1.
\end{equation}
Under the normality, the log likelihood of the model (\ref{model form 1}) is given apart from a constant by
\begin{equation}\label{like}
    -d_2(T-1) \log|\Sigma_1| - d_1(T-1)\log|\Sigma_2| - \sum_{t=2}^{T}\tr{\left[\Sigma_{1}^{-1} \left(\Delta \bX_t -\bA_1 \bX_t \bA_2'\right)\Sigma_2^{-1}\left(\Delta \bX_t -\bA_1 \bX_t \bA_2'\right)'\right]}.
\end{equation}
We describe an iterative procedure to estimate $\bA_1$ and $\Sigma_1$ given $\bA_2$ and $\Sigma_2$. We rewrite the model by columns as
$$\left( \Delta\bX_t \Sigma_1^{-1/2} \right) [,j] = \bA_1 \left(\bX_{t-1} \bA_2' \Sigma_1^{-1/2} \right) [,j] + \left( \bE_t \Sigma_1^{-1/2}\right)[,j].$$
Let $\by_{tj} = \left(\Delta\bX_t \Sigma_1^{-1/2} \right)[,j]$, $\bx_{tj} = \left(\bX_{t-1} \bA_2' \Sigma_1^{-1/2} \right)[,j]$ and $\bepsilon_{tj} = \left(\bE_t \Sigma_1^{-1/2} \right) [,j]$, then the preceding equation can be viewed as a cointegrated vector autoregressive model with IID errors:
\begin{equation}
    \by_{tj} = \bA_1 \bx_{tj} + \bepsilon_{tj}, \quad 2 \le t \le T, 1\le j \le d_2.
\end{equation}
Then we can apply the estimation procedure of Maximum Likelihood Estimation (MLE) in the classical vector cointegration analysis \citep{johansen1995likelihood}, which is also similar to the reduced rank regression \citep{anderson2003introduction, velu2013multivariate}. Let
\begin{align*}
    \tilde{\bS}_{xx} &= \sum_t \sum_j \bx_{tj} \bx_{tj}' = \sum_t \bX_{t-1} \bA_2' \Sigma_2^{-1} \bA_2 \bX_{t-1}', \\
    \tilde{\bS}_{yx} &= \sum_t \sum_j \by_{tj} \bx_{tj}' = \sum_t \Delta \bX_{t} \Sigma_2^{-1} \bA_2 \bX_{t-1}', \\
    \tilde{\Sigma}_{\epsilon \epsilon} &= \sum_t \left(\Delta \bX_t  - \tilde{\bA}_1 \bX_{t-1} \bA_2'\right) \Sigma_2^{-1}\left(\Delta \bX_t  - \tilde{\bA}_1 \bX_{t-1} \bA_2'\right)',
\end{align*}
where the full rank estimation $\tilde\bA_1 = \tilde\bS_{yx} \tilde\bS_{xx}^{-1}$. Take $\tilde{\bU}:=[\tilde{U}_1, \tilde{U}_1,\ldots,\tilde{U}_{k_1}]$, where $\tilde{U}_j$ is the $j$-th leading unit eigenvector of $\tilde{\Sigma}_{\epsilon \epsilon}^{-1/2} \tilde{\bS}_{yx} \tilde{\bS}_{xx}^{-1} \tilde{\bS}_{xy} \tilde{\Sigma}_{\epsilon \epsilon}^{-1/2}$. Then $\bA_1$, $\Sigma_1$ can be updated as
\begin{align*}
    \hat{\bA}_{1}^{\text{ml}} &= \tilde{\Sigma}_{\epsilon}^{1/2} \tilde\bU \tilde\bU' \tilde{\Sigma}_{\epsilon}^{-1/2} \tilde\bS_{yx} \tilde\bS_{xx}^{-1},\\
    \hat{\Sigma}_1 &= \frac{1}{T-1} \sum_t \left(\Delta\bX_t - \hat{\bA}_1 \bX_{t-1} \bA_2' \right)\Sigma_2^{-1}\left(\Delta\bX_t - \hat{\bA}_1 \bX_{t-1} \bA_2'  \right)'. 
\end{align*}
Given $\bA_1$ and $\Sigma_1$, an update of $\bA_2$ and $\Sigma_2$ can be similarly obtained.

Next, we extend the aforementioned analysis to model (\ref{model form 2}), which involves additional differentiation of previous observations. Similarly, we can express model (\ref{model form 2}) by columns as follows:
\begin{align*}
    \left( \Delta\bX_t \Sigma_2^{-1/2} \right)[,j] = \bA_1 \left(\bX_{t-1} \bA_2' \Sigma_2^{-1/2} \right) [,j] &+ \sum_{i=1}^{k} \bB_{i1} \left(\Delta\bX_{t-i} \bB_{i2}' \Sigma_2^{-1/2} \right) [,j] \\
    &+  \bD  \Sigma_2^{-1/2} [,j] + \left( \bE_t \Sigma_2^{-1/2}\right)[,j].
\end{align*}
Let $\by_{tj} = \left(\Delta\bX_t \Sigma_2^{-1/2} \right)[,j]$, $\bx_{tj} = \left(\bX_{t-1} \bA_2' \Sigma_2^{-1/2} \right)[,j]$, and $\bz_{tj}'$ be stacked vector variables 
$$\left(\Delta\bX_{t-1}\bB_{12}' \Sigma_2^{-1/2} \right)[,j]',\ldots,\left(\Delta\bX_{t-k}\bB_{k2}' \Sigma_2^{-1/2} \right)[,j]', \Sigma_2^{-1/2}[,j]',$$
with length $d_1k+d_2$.
Let $\Psi_1$ be the coefficient corresponding to $\bz_{tj}$ with dimension $d_1 \times (d_1k+d_2)$, which is the matrix consisting of $\bB_{11},\ldots,\bB_{k1},\bD$. Let $\bepsilon_{tj} = \left(\bE_t \Sigma_1^{-1/2} \right) [,j]$. 
Thus, the above equation can be regarded as a cointegrated vector autoregressive model in the error-correction form,
\begin{equation}
    \by_{tj} = \bA_1 \bx_{tj} + \Psi_1 \bz_{tj} + \bepsilon_{tj}, \quad 2 \le t \le T, 1\le j \le d_2.
\end{equation}
Apart from the previously defined $\tilde{\bS}_{xx}$ and $\tilde{\bS}_{yx}$, we require the following notations:
\begin{align*}
    \tilde{\bS}_{yz} &= \sum_t \begin{pmatrix}
    \Delta\bX_{t} \Sigma_2^{-1} \bB_{12} \Delta \bX_{t-1}' & \cdots & 
    \Delta\bX_{t} \Sigma_2^{-1} \bB_{k2} \Delta \bX_{t-k}'
    & \Delta\bX_{t} \Sigma_2^{-1}
    \end{pmatrix}, \\
    \tilde{\bS}_{xz} &= \sum_t \begin{pmatrix}
    \bX_{t-1} \bA_{2}' \Sigma_2^{-1} \bB_{12} \Delta \bX_{t-1}' & \cdots & \bX_{t-1} \bA_{2}' \Sigma_2^{-1} \bB_{k2} \Delta \bX_{t-k}' & \bX_{t-1} \bA_{2}' \Sigma_2^{-1} 
    \end{pmatrix}, \\
    \tilde{\bS}_{zz} &= 
    \sum_t \begin{pmatrix}
    \Delta\bX_{t-1} \bB_{12}' \Sigma_2^{-1} \bB_{12} \Delta \bX_{t-1}' &\cdots& \Delta\bX_{t-1} \bB_{12}' \Sigma_2^{-1} \bB_{k2} \Delta \bX_{t-k}' &\Delta\bX_{t-1} \bB_{12}' \Sigma_2^{-1}\\
    \cdots & \cdots & \cdots & \cdots \\
    \Delta\bX_{t-k} \bB_{k2} \Sigma_2^{-1} \bB_{12} \Delta \bX_{t-1}' &\cdots& \Delta\bX_{t-k} \bB_{k2}' \Sigma_2^{-1} \bB_{k2} \Delta \bX_{t-k}' &\Delta\bX_{t-k} \bB_{k2}' \Sigma_2^{-1}\\
    \Sigma_2^{-1} \bB_{12} \Delta \bX_{t-1}' &\cdots& \Sigma_2^{-1} \bB_{k2} \Delta \bX_{t-k}' &\Sigma_2^{-1}\\
    \end{pmatrix}.
\end{align*}
The full rank least squares estimator is given by $\tilde\bA_1 = \tilde\bS_{yx.z} \tilde\bS_{xx.z}^{-1}$ and $\tilde\Psi_1 = \tilde\bS_{yz} \tilde\bS_{zz}^{-1} - \tilde\bA_1 \tilde\bS_{xz} \tilde\bS_{zz}^{-1}$, where $\tilde\bS_{yx.z} = \tilde\bS_{yx} - \tilde\bS_{yz} \tilde\bS_{zz}^{-1} \tilde\bS_{zx}$, $\tilde\bS_{xx.z} = \tilde\bS_{xx} - \tilde\bS_{xz} \tilde\bS_{zz}^{-1} \tilde\bS_{zx}$. Denote $\tilde\bR_t := \Delta \bX_t  - \tilde{\bA}_1 \bX_{t-1} \bA_2' - \sum_{i=1}^{k} \tilde\bB_{i1} \Delta \bX_{t-i} \bB_{i2}' - \tilde\bD$. Let
\begin{align*}
    \tilde{\Sigma}_{\epsilon \epsilon} = \sum_t \sum_j \left(\by_{tj} - \tilde{\bA}_1 \bx_{tj} -  \tilde{\Psi}_1 \bz_{tj} \right)\left(\by_{tj} - \tilde{\bA}_1 \bx_{tj} -  \tilde{\Psi}_1 \bz_{tj} \right)' = \sum_t \tilde \bR_t \Sigma_2^{-1} \tilde \bR_t'.
\end{align*}
Take $\tilde{\bU}:=[\tilde{U}_1, \tilde{U}_1,\ldots,\tilde{U}_{r_1}]$. $\tilde{U}_j$ is the $j$-th leading unit eigenvector of $\tilde{\Sigma}_{\epsilon \epsilon}^{-1/2} \tilde{\bS}_{yx.z} \tilde{\bS}_{xx.z}^{-1} \tilde{\bS}_{xy.z} \tilde{\Sigma}_{\epsilon \epsilon}^{-1/2}$. Then $\bA_1$, $\Psi_1$, $\Sigma_1$ can be updated as
\begin{align*}
    \hat{\bA}_{1}^{\text{ml}} &= \tilde{\Sigma}_{\epsilon\epsilon}^{1/2} \tilde\bU \tilde\bU' \tilde{\Sigma}_{\epsilon\epsilon}^{-1/2} \tilde\bS_{yx} \tilde\bS_{yx}^{-1},\\
    \hat{\Psi}_1^{\text{ml}} &= \tilde\bS_{yz} \tilde\bS_{zz}^{-1} - \hat\bA_1 \tilde\bS_{xz} \tilde\bS_{zz}^{-1}, \\
    \hat{\Sigma}_1 &= \frac{1}{T-k-1} \sum_t \hat\bR_t \Sigma_2^{-1} \hat\bR_t',
\end{align*}
where $\hat\bR_t = \Delta \bX_t  - \hat{\bA}_1 \bX_{t-1} \bA_2' - \sum_{i=1}^{k} \hat\bB_{i1} \Delta \bX_{t-i} \bB_{i2}' -  \hat\bD$. It involves a joint estimation with stacked estimated parameters $\hat\Psi_1^{\text{ml}}$, such that the leading $d_1 \times d_1k$ sub-matrix is $\hat\bB_{11}, \ldots, \hat\bB_{k1}$, and the last $d_1 \times d_2$ sub-matrix is $\hat\bD$.
Given $\bA_1$, $\Sigma_1$ and $\bB_{i1}$, an update of $\bA_2$, $\Sigma_2$ and $\Psi_{2}$ can be similarly obtained.

\subsection{Least Squares Estimation}\label{sec:lse}
We denote the least squares estimator by $\hat\bA_i^{\text{ls}}$. Suppose $\bA_2, \bB_{12}, \ldots, \bB_{k2}$ are given. The alternating lease squares estimator of the model (\ref{model form 2}) minimizes the following residuals under the rank constraint $\rank(\bA_1)=r_1$:
\begin{equation}\label{minlse}
    \min_{\bA_1, \bB_{i1}, \bD, \text{rank}(\bA_1)=r_1} \sum_t\|\Delta \bX_t - \bA_1\bX_{t-1}\bA_2' - \sum_{i=1}^k  \bB_{i1}\bX_{t-i}\bB_{i2}' - \bD \|^2_F.
\end{equation}
Take $\hat{\bU}:=[\hat{U}_1, \hat{U}_1,\ldots,\hat{U}_{r_1}]$, where $\hat{U}_j$ is the $j$-th leading unit eigenvector of $\hat{\bS}_{yx.z} \hat{\bS}_{xx.z}^{-1} \hat{\bS}_{xy.z}$. Define $\Psi_1 = \{\bB_{11},\ldots,\bB_{k1},\bD\}$. Then $\hat{\bA}_1^{\text{ls}}$, $\hat{\Psi}_1^{\text{ls}}$ can be updated as
\begin{align*}
    \hat{\bA}_1^{\text{ls}} &=  \hat\bU \hat\bU' \hat\bS_{yx} \hat\bS_{yx}^{-1},\\
    \hat{\Psi}_1^{\text{ls}} &= \hat\bS_{yz} \hat\bS_{zz}^{-1} - \hat\bA_1 \hat\bS_{xz} \hat\bS_{zz}^{-1},
\end{align*}
where $\hat{\bS}_{xx} =  \sum_t \bX_{t-1} \bA_2' \bA_2 \bX_{t-1}'$, $\hat{\bS}_{yx} =  \sum_t \Delta \bX_{t} \bA_2 \bX_{t-1}'$, 
\begin{align*}
    \hat{\bS}_{yz} &= \sum_t \begin{pmatrix}
    \Delta\bX_{t} \bB_{12} \Delta \bX_{t-1}' & \cdots&
    \Delta\bX_{t} \bB_{k2} \Delta \bX_{t-k}'
    & \Delta\bX_{t} 
    \end{pmatrix}, \\
    \hat{\bS}_{yz} &= \sum_t \begin{pmatrix}
    \Delta\bX_{t} \bB_{12} \Delta \bX_{t-1}' & \cdots & 
    \Delta\bX_{t} \bB_{k2} \Delta \bX_{t-k}'
    & \Delta\bX_{t} 
    \end{pmatrix}, \\
    \hat{\bS}_{xz} &= \sum_t \begin{pmatrix}
    \bX_{t-1} \bA_{2}'  \bB_{12} \Delta \bX_{t-1}' & \cdots & \bX_{t-1} \bA_{2}'  \bB_{k2} \Delta \bX_{t-k}' & \bX_{t-1} \bA_{2}' 
    \end{pmatrix}, \\
    \hat{\bS}_{zz} &= 
    \sum_t \begin{pmatrix}
    \Delta\bX_{t-1} \bB_{12}' \bB_{12} \Delta \bX_{t-1}' &\cdots& \Delta\bX_{t-1} \bB_{12}'  \bB_{k2} \Delta \bX_{t-k}' &\Delta\bX_{t-1} \bB_{12}' \\
    \cdots & \cdots & \cdots & \cdots \\
    \Delta\bX_{t-k} \bB_{k2}  \bB_{12} \Delta \bX_{t-1}' &\cdots& \Delta\bX_{t-k} \bB_{k2}'  \bB_{k2} \Delta \bX_{t-k}' &\Delta\bX_{t-k} \bB_{k2}' \\
    \bB_{12} \Delta \bX_{t-1}' &\cdots&  \bB_{k2} \Delta \bX_{t-k}' & \bI_{d_2}\\
    \end{pmatrix}.
\end{align*}
Given $\bA_1$ and $\bB_{i1}$, an update of $\bA_2$ and $\Psi_2$ can be similarly obtained.

\section{Asymptotics}\label{SEC:Theorems}
In this section, we give the asymptotic distribution of the LSE and MLE estimators of the general form (\ref{model form 2}), where denoted as $\hat\bbeta^{\text{ls}}$ and $\hat\bbeta^{\text{ml}}$. We first establish the central limit theorem of $\hat\balpha$, $\hat\bB_i$ and $\hat\bD$. Then we derive asymptotic results for $\tilde\bbeta_{i} = \hat\bbeta_i (\bbeta_i' \hat\bbeta_i)^{-1}$, such that $\tilde\bbeta$ is normalized by $\bbeta_{i}' \tilde\bbeta = \bI_{r_i}$. We then extend this to the estimators $\hat\bbeta_{ic} = \hat\bbeta_i (\bc' \hat\bbeta_i)^{-1}$ with any normalizing matrix $\bc$ that $\bc' \hat\bbeta_{ic} = \bI_{r_i}$. We also establish the asymptotic distribution for the projection matrix $\hat{\mathbb{P}}_{\beta} = \hat\bbeta_c (\hat\bbeta_c' \hat\bbeta_c)^{-1} \hat\bbeta_c'$. The asymptotic analysis involve heavy notations. First of all, we make the convention that $\bbeta_i'\bbeta_i = \hat\bbeta_i'\hat\bbeta_i = \bI_{r_i}$, and $\|\bA_1\|_F=\|\hat\bA_1\|_F=1$, $i=1,2$. In the table below, we list notations that appear in Theorems of both LSE and MLE estimators.
\begin{table}[!ht]
\centering
\begin{tabular}{l|l}
\hline
Notations &  \\ \hline
$\bd$  & $\vec{(\bD)}$ \\
$\h{\delta}_1$  & $(\vec{(\balpha_1)}', \h{0}')'$ \\
$\h{\delta}_i$  &  $(\h{0}_{d_1r_1+id_2r_2}', \vec{(\bB_{i1})}', \h{0}')'$ for $i=2,\ldots,k$ \\
$\bC$ & $\bbeta_{\bot} (\balpha_{\bot}' \Gamma \bbeta_{\bot})^{-1} \balpha_{\bot}'$, where $\Gamma = \bI - \sum_{i=1}^{k}\bB_{i2} \otimes \bB_{i1}$\\
$\h\theta$ & $\{\balpha_1, \balpha_2', \bB_{11}, \bB_{12}',\ldots, \bB_{k1}, \bB_{k2}'\}$ \\
$\tilde\bbeta_{i}$  & the estimator normalized by $\bbeta_i$ such that $\tilde\bbeta_{i} = \hat\bbeta_{i} (\bbeta_i'\hat\bbeta_{i})^{-1}$ \\
$\hat\bbeta_{ic}$  & the estimator normalized by $\bc$ such that $\hat\bbeta_{i} = \hat\bbeta_{i} (\bc'\hat\bbeta_{i})^{-1}$ \\
$\left(\h{\tau}_1, \h{\gamma}_1, \bbeta_1 \right)$ & $\h{\tau}_1 = \bU_1$, and $\h{\gamma}_1$ orthogonal to $\h{\tau}_1$ and $\bbeta_1$, where \\
 & $\bU_1 \Lambda_1 \bV_1' = [\vec^{-1} (\bC \bd)] \bbeta_2$ is the SVD decomposition \\
 $\bB_{Ti}$ & $(\bar{\h{\gamma}}_i, T^{-\frac{1}{2}} \bar{\h{\tau}}_i)$, where $\bar{\h{\gamma}}_i = \h{\gamma}_i(\h{\gamma}_i' \h{\gamma}_i)^{-1}$, $\bar{\h{\tau}}_i = \h{\tau}_i(\h{\tau}_i' \h{\tau}_i)^{-1}$, $i=1,2$\\
 $\bU_{Ti}$ & $(\h{\gamma}_i, T^{\frac{1}{2}}\h{\tau}_i)' \tilde \bbeta_i$, $i=1,2$\\
\hline
\end{tabular}
\end{table}
Define 
\begin{align*}
\boldsymbol{W}_{t-1} = \begin{pmatrix} 
\bbeta_1' (\bX_{t-1} - \bar{\bX}_{t-1}) \bbeta_2 \balpha_2' \otimes \bI_{d_1}\\
\bI_{d_2} \otimes \bbeta_2' (\bX_{t-1} - \bar{\bX}_{t-1})' \bbeta_1 \balpha_1'\\
(\Delta\bX_{t-i} - \Delta\bar{\bX}_{t-i})  \bB_{i2} \otimes \bI_{d_1} \\
\bI_{d_2} \otimes (\Delta\bX_{t-i} - \Delta\bar{\bX}_{t-i})' \bB_{i1} \\
i=1,\ldots,k 
\end{pmatrix},
\end{align*}
where $\bar{\bX_t} = \sum_{t=1}^{T} \bX_t / T$. Then define $\bH_1 = \mathbb{E} \left( \bW_t \bW_t'\right) + \sum_{i=1}^{k}\h{\delta}_i \h{\delta}_i'$, and $\bH_2 = \mathbb{E} \left( \bW_t \Sigma^{-1} \bW_t' \right) + \sum_{i=1}^{k}\h{\delta}_i \h{\delta}_i'$,  where $\h{\delta}_1 = (\vec{(\balpha_1)}', \h{0}')'$, and $\h{\delta}_i = (\h{0}_{d_1r_1+id_2r_2}', \vec{(\bB_{i1})}', \h{0}')'$ for $i=2,\ldots,k$.

\subsection{Asymptotics for matrix I(1) processes}
The probability properties of a matrix I(1) process provide the foundation for analyzing the asymptotic distribution of proposed estimators. In this section, we extend some classical results for vector I(1) processes to matrix I(1) processes.
By the Representation Theorem~\ref{Theorem:represent}, after vectorization, the process (\ref{model form 2}) can be expressed as a combination of a random walk, a linear trend and a stationary process:
\begin{align}\label{vecProcess}
    \vec{(\bX_t)} = \bC \sum_{i=1}^{t} \bepsilon_i + \bC \bd t + C(L)(\bepsilon_t + \bd) + \mathbb{P}_{\beta_{\bot}}\vec{(\bX_0)},
\end{align}
where $\bepsilon_i = \vec{(\bE_i)}$, $\bd = \vec{(\bD)}$, $\bC = \bbeta_{\bot} (\balpha_{\bot}' \Gamma \bbeta_{\bot})^{-1} \balpha_{\bot}'$, $\Gamma = \bI - \sum_{i=1}^{k}\bB_{i2} \otimes \bB_{i1}$, and $C(L)(\bepsilon_t)$ is the stationary part. Denote $\bw(u)$, $u \in [0,1]$, as the $d$-dimensional Brownian motion such that $\bw(t)$ is $N_d(0, t \Sigma)$. Then we have
$$T^{-\frac{1}{2}} \sum_{i=1}^{[Tu]} \bepsilon_i \overset{w}{\longrightarrow} \bw(u), u\in[0,1].$$
Denote $\bW(u) = \vec^{-1}\left(\bC \bw(u)\right) \in \mathbb{R}^{d_1 \times d_2}$.
By the formula of $\bC$, we have $\col(\left[\vec^{-1}(\bC\bd)\right] \bbeta_2) \subseteq \col(\bbeta_{1\bot})$. If $\rank(\left[\vec^{-1}(\bC\bd)\right] \bbeta_{2}) = k_1$, the Singular Value Decomposition (SVD)
\[\left[\vec^{-1}(\bC\bd)\right] \bbeta_{2} = \bU_1 \h\Lambda_1 \bV_1',\]
in which $\h\Lambda_1$ is a square diagonal matrix of size $k_1 \times k_1$, $\bU_1$ is a $d_1 \times k_1$ semi-unitary matrix. Then we can take $\h{\tau}_1 = \bU_1$ and $\h{\gamma}_1 \in \mathbb{R}^{d_1 \times (d_1-r_1-k_1)}$ be orthogonal to $\h{\tau}_1$ and $\bbeta_1$, such that $\bar{\h{\gamma}}_1\h{\gamma}_1' + \bar{\h{\tau}}\h{\tau}' = \bar\bbeta_{1\bot}\bbeta_{1\bot}'$.
Then the estimator
\begin{align}
\tilde\bbeta_i = \hat\bbeta_i (\bbeta_i' \hat\bbeta_i)^{-1} = \bbeta_i + \bB_{Ti} \bU_{Ti},
\end{align}
where $\bB_{Ti} = (\bar{\h{\gamma}}_i, T^{-\frac{1}{2}} \bar{\h{\tau}}_i)\in\mathbb{R}^{d_i \times (d_i - r_i)}$, $\bU_{Ti} =  (\h{\gamma}_i, T^{\frac{1}{2}}\h{\tau}_i)' \tilde \bbeta_i \in \mathbb{R}^{(d_i-r_i) \times r_i}$, $i=1,2$.

\begin{lemma}\label{G}
Let $\h{\tau}_1 = \bU_1 \in \mathbb{R}^{d_1 \times k_1}$ and $\h{\gamma}_1 \in \mathbb{R}^{d_1 \times (d_1 - r_1 - k_1)}$ be orthogonal to $\h{\tau}_1$ and $\bbeta_1$ where $\left[\vec^{-1}(\bC\bd)\right] \bbeta_{2} = \bU_1 \h\Lambda_1 \bV_1'$ is the compact SVD decomposition. Then $(\bbeta_1, \h{\tau}_1, \h{\gamma}_1)$ has full rank $d_1$. As $T \to \infty$ and $u \in [0,1]$, we have
\begin{align}
    T^{-\frac{1}{2}} \bar{\h{\gamma}}_1' \bX_{[Tu]} \bbeta_2  &\to \bar{\h{\gamma}}_1' \bW(u) \bbeta_2 \in \mathbb{R}^{(d_1 -r_1 - k_1) \times r_2}, \\
    T^{-1} \bar{\h{\tau}}'_1 \bX_{[Tu]}  \bbeta_2 &\to \h\Lambda_1 \bV_1' u \in \mathbb{R}^{k_1 \times r_2}.
\end{align}
Define $\bB_{T1} = (\bar{\h{\gamma}}_1, T^{-\frac{1}{2}} \bar{\h{\tau}}_1)$ then we have
\[T^{-\frac{1}{2}} \bB_{T1}' \bX_{[Tu]} \bbeta_2 \to \begin{pmatrix} 
\bar{\h{\gamma}}_1' \bW(u) \bbeta_2  \\
\h\Lambda_1 \bV_1' u\\
\end{pmatrix} \in \mathbb{R}^{(d_1 - r_1) \times r_2}.\]
It follows that
\begin{align}
    T^{-\frac{1}{2}} \bB_{T1}' (\bX_{[Tu]} - \bar\bX_T)  \bbeta_2 \to  \begin{pmatrix} 
\bar{\h{\gamma}}_1' (\bW(u) - \bar\bW) \bbeta_2 \\
\h\Lambda_1 \bV_1' (u - \frac{1}{2})\\
\end{pmatrix}
\end{align}
where $\bar{\bW} = \int_{0}^{1} \bW(u) du$.
\end{lemma}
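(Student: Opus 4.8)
The plan is to derive everything from the Granger-type representation~(\ref{vecProcess}). Undoing the vectorization gives $\bX_t = \vec^{-1}\!\big(\bC\sum_{i=1}^t\bepsilon_i\big) + [\vec^{-1}(\bC\bd)]\,t + \vec^{-1}\!\big(C(L)(\bepsilon_t+\bd)\big) + \vec^{-1}(\mathbb{P}_{\beta_{\bot}}\vec(\bX_0))$, i.e.\ a matrix random walk, a deterministic linear trend, a stationary (plus constant) part, and a fixed initial term. I would first record the linear algebra. From $\col(\bC)\subseteq\col(\bbeta_{\bot})$ with $\bbeta=\bbeta_2\otimes\bbeta_1$ we get $(\bbeta_2'\otimes\bbeta_1')\bC\bd=\vec\!\big(\bbeta_1'[\vec^{-1}(\bC\bd)]\bbeta_2\big)=\bzero$, so $\bbeta_1'[\vec^{-1}(\bC\bd)]\bbeta_2=\bzero$ and hence $\col([\vec^{-1}(\bC\bd)]\bbeta_2)\subseteq\col(\bbeta_{1\bot})$; in particular the left singular vectors $\h\tau_1=\bU_1$ sit in $\col(\bbeta_{1\bot})$ and are orthogonal to $\bbeta_1$. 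Taking $\h\gamma_1$ orthogonal to $\h\tau_1$ and $\bbeta_1$ then produces three mutually orthogonal blocks with $r_1+k_1+(d_1-r_1-k_1)=d_1$ columns, so $(\bbeta_1,\h\tau_1,\h\gamma_1)$ has full rank $d_1$ and $(\h\tau_1,\h\gamma_1)$ is a basis of $\col(\bbeta_{1\bot})$.

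Next I would isolate two exact cancellations in the trend term. Since $\bar{\h\gamma}_1'\h\tau_1=\bzero$, one has $\bar{\h\gamma}_1'[\vec^{-1}(\bC\bd)]\bbeta_2=\bar{\h\gamma}_1'\h\tau_1\h\Lambda_1\bV_1'=\bzero$, so the linear trend is annihilated along $\bar{\h\gamma}_1$; since $\h\tau_1=\bU_1$ is semi-unitary, $\bar{\h\tau}_1=\h\tau_1$ and $\bar{\h\tau}_1'[\vec^{-1}(\bC\bd)]\bbeta_2=\h\Lambda_1\bV_1'$, so along $\bar{\h\tau}_1$ the trend contributes exactly $[Tu]\,\h\Lambda_1\bV_1'$. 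For the stochastic pieces I would invoke the functional CLT $T^{-1/2}\sum_{i=1}^{[Tu]}\bepsilon_i\overset{w}{\longrightarrow}\bw(u)$ in $D[0,1]$; because $\vec^{-1}(\cdot)$ and left/right multiplication by fixed matrices are linear and continuous, the continuous mapping theorem gives $T^{-1/2}\bar{\h\gamma}_1'\vec^{-1}\!\big(\bC\sum_{i=1}^{[Tu]}\bepsilon_i\big)\bbeta_2\overset{w}{\longrightarrow}\bar{\h\gamma}_1'\bW(u)\bbeta_2$, and likewise with $\bar{\h\tau}_1$. The stationary part is $O_p(1)$ uniformly over $t\le T$ — a maximal inequality for the summable linear process makes $T^{-1/2}\max_{t\le T}\|\cdot\|=o_p(1)$ — and the initial term is a fixed constant, so both vanish after $T^{-1/2}$ scaling. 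Hence: along $\bar{\h\gamma}_1$ at rate $T^{-1/2}$ the trend drops out and the random walk is the only surviving term; along $\bar{\h\tau}_1$ at rate $T^{-1}$ the random-walk part is $O_p(T^{-1/2})=o_p(1)$ while $T^{-1}[Tu]\h\Lambda_1\bV_1'\to u\,\h\Lambda_1\bV_1'$. Stacking the two rows yields the $\bB_{T1}$ display, with joint convergence automatic because the second block has a deterministic limit.

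For the demeaned statement I would write $T^{-1/2}\bB_{T1}'\bar\bX_T\bbeta_2=T^{-1}\sum_{t=1}^T\big(T^{-1/2}\bB_{T1}'\bX_t\bbeta_2\big)$, a Riemann sum of the process just shown to converge, equal to $\int_0^1 T^{-1/2}\bB_{T1}'\bX_{[Tu]}\bbeta_2\,du$ up to an $o_p(1)$ boundary term; applying the continuous mapping theorem to $g\mapsto(g(\cdot),\int_0^1 g)$ and then subtracting gives the limit with $\int_0^1\bW(u)\,du=\bar\bW$ and $\int_0^1 u\,du=\tfrac12$. The main obstacle is bookkeeping rather than depth: one must keep the $\vec$/Kronecker identities straight and handle the two rates simultaneously, making sure the trend is \emph{exactly} annihilated along $\bar{\h\gamma}_1$ (so the $O(T^{1/2})$ random walk dominates there) but is the dominant $O(T)$ term along $\bar{\h\tau}_1$, and that the $D[0,1]$ convergence is uniform in $u$ so that the Riemann-sum step is valid.
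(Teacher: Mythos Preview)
Your proposal is correct and follows essentially the same approach as the paper: start from the Granger representation~(\ref{vecProcess}), split $\bX_t$ into random-walk, linear-trend, stationary, and initial pieces, use orthogonality of $\h\gamma_1$ and $\h\tau_1$ to kill or isolate the trend along each direction, apply the functional CLT plus continuous mapping for the random-walk part, and handle the average via the continuous map $g\mapsto\int_0^1 g$. Your write-up is in fact more explicit than the paper's (you spell out the $\vec$/Kronecker bookkeeping, $\bar{\h\tau}_1=\h\tau_1$, the uniform $O_p(1)$ control of the stationary part, and the joint-convergence remark), but the underlying argument is the same.
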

The proof is given in Appendix~\ref{Appendix:proofs}. Thus, the asymptotic properties of the process depend on which linear combination of the process we consider. If we examine the processes $\h{\tau}'_1 \bX_t \bbeta_2$, $\h{\gamma}'_1 \bX_t \bbeta_2$, and $\bbeta'_1 \bX_t \bbeta_2$ separately, we observe different types of dominant terms. In particular, for $\h{\tau}'_1 \bX_t \bbeta_2$, the process is dominated by the linear trend, while for $\h{\gamma}'_1 \bX_t \bbeta_2$, the dominating term is the random walk part. If we consider linear combinations of the form $\bbeta'_1 \bX_t \bbeta_2$, the process becomes stationary.

To introduce the asymptotics of estimators, we introduce some notations following the results of Lemma~\ref{G}.
\begin{equation}
    \bG_1(u) := \left(\begin{pmatrix} 
\bar{\h{\gamma}}_1' (\bW(u) - \bar\bW) \bbeta_2 \\
\h\Lambda_1 \bV_1' (u - \frac{1}{2})\\
\end{pmatrix} \balpha_2' \right)\otimes \balpha_1' \in \mathbb{R}^{(d_1 - r_1)r_1 \times d_1d_2},
\end{equation}
such that
\[
T^{-\frac{1}{2}} \bB_{T1}' (\bX_{[Tu]} - \bar\bX_T)  \bA_2 \otimes \balpha_1' \to
\bG_1(u).
\]
Similarly, we can define $\bG_2(u)$ and we combined the rows of $\bG_1$, $\bG_2$ to make $\bG$,
\[
\bG(u) :=
 \begin{pmatrix} 
\bG_1(u)\\
\bG_2(u)
\end{pmatrix}
\ \text{of size} \
\begin{pmatrix}
(d_1-r_1)r_1 \times d_1d_2 \\
(d_2-r_2)r_2 \times d_1d_2 
\end{pmatrix}.
\]

\subsection{Asymptotics for LSE estimators}\label{CMAR:TheoremsLSE}

\begin{theorem}\label{Theorem:LSEtheta}
Assume that $\{\bE_t\}$ are IID with mean zero and finite second moments. Assume that $0<\rank\bA_i = r_i$, and the Assumption~\ref{rootA} holds. Then the asymptotic distribution of the LSE estimator $\hat{\h\theta}^{\text{ls}}$ is given by
\begin{align}\label{limit:LSEtheta}
    \sqrt{T} \vec{(\hat{\h{\theta}}^{\text{ls}} - \h{\theta})} \to N(\h{0}, \Xi_1),
\end{align}
where $\Xi_1 = \bH_1^{-1}\mathbb{E}(\bW_t \Sigma \bW_t')\bH_1^{-1}$, $\h\theta = \{\balpha_1, \balpha_2', \bB_{11}, \bB_{12}',\ldots, \bB_{k1}, \bB_{k2}'\}$.
\end{theorem}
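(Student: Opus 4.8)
The plan is to read off the limit from the (profiled) first-order conditions of the alternating least-squares iteration, treating $\h{\theta}$ as a regular $\sqrt{T}$-parameter while exploiting the faster convergence of the cointegrating estimators $\hat\bbeta_1,\hat\bbeta_2$ supplied by Lemma~\ref{G}. Reparametrize $\bA_i=\balpha_i\bbeta_i'$ with $\bbeta_i'\bbeta_i=\bI_{r_i}$, and use the normalizations $\|\bA_1\|_F=\|\bB_{j1}\|_F=1$. A consistent fixed point of the iteration minimizing the criterion in~(\ref{minlse}) is then a stationary point of
\[
\sum_t\Big\|\Delta\bX_t-\balpha_1\bbeta_1'\bX_{t-1}\bbeta_2\balpha_2'-\sum_{i=1}^{k}\bB_{i1}\Delta\bX_{t-i}\bB_{i2}'-\bD\Big\|_F^2
\]
over $(\h{\theta},\bbeta_1,\bbeta_2,\bD)$ subject to those normalizations, the rank constraint being absorbed into the factorization and $\bD$ into centering by $\bar\bX_T$. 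Linearizing the fitted map around the truth, differentiation with respect to $\vec(\h{\theta})$ reproduces, by construction, exactly the stacked centered matrix $\bW_{t-1}$, while the $\bbeta_i$- and $\bD$-directions are the remaining blocks; eliminating those blocks (Schur complement) and collecting terms, the leading-order relation is
\begin{equation*}
\Big(\tfrac{1}{T}\sum_t\bW_{t-1}\bW_{t-1}'+\sum_{i=1}^{k}\h{\delta}_i\h{\delta}_i'\Big)\sqrt{T}\,\vec(\hat{\h{\theta}}^{\text{ls}}-\h{\theta})=\tfrac{1}{\sqrt{T}}\sum_t\bW_{t-1}\vec(\bE_t)+o_p(1),
\end{equation*}
so that~(\ref{limit:LSEtheta}) follows once both sides are identified with their limits.

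The extra term $\sum_{i=1}^{k}\h{\delta}_i\h{\delta}_i'$ in $\bH_1$ comes from the scale normalizations, which I would treat through the Lagrange-multiplier form of the constrained normal equations. Because the fitted map is bilinear in the pairs $(\balpha_1,\balpha_2)$ and $(\bB_{i1},\bB_{i2})$, the unconstrained Gram matrix $\mathbb{E}(\bW_t\bW_t')$ is singular, its null directions being generated by the scaling vectors $(\vec(\balpha_1)',-\vec(\balpha_2')',\h{0}')'$ and the per-lag analogues; these are precisely the directions removed by $\|\bA_1\|_F=1$ and $\|\bB_{j1}\|_F=1$, whose gradients are the $\h{\delta}_i$. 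Adding $\sum_{i=1}^{k}\h{\delta}_i\h{\delta}_i'$ both regularizes the Hessian and, since the Euler relations force $\bW_{t-1}$ to annihilate the scaling vectors (so the score $T^{-1/2}\sum_t\bW_{t-1}\vec(\bE_t)$ lies in the range of $\mathbb{E}(\bW_t\bW_t')$), makes the multipliers vanish to first order, yielding the displayed system with $\bH_1=\mathbb{E}(\bW_t\bW_t')+\sum_{i=1}^{k}\h{\delta}_i\h{\delta}_i'$; in particular $N(\h{0},\Xi_1)$ is supported on the tangent space of the constraint manifold.

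It then remains to evaluate the two terms. Since $\bbeta_1'\bX_{t-1}\bbeta_2$ is stationary by the Representation Theorem~\ref{Theorem:represent} (the random-walk and trend parts of~(\ref{vecProcess}) cancelling under the projection $\bbeta_1'(\cdot)\bbeta_2$) and $\Delta\bX_{t-i}$ is I(0), the centered matrix $\bW_{t-1}$ equals, uniformly up to $o_p(1)$, a stationary ergodic functional of $\{\bepsilon_s:s\le t-1\}$ (with $\bbeta_i'\bar\bX_T\bbeta_j$ converging to its stationary mean). Hence $T^{-1}\sum_t\bW_{t-1}\bW_{t-1}'\to\mathbb{E}(\bW_t\bW_t')$ and $T^{-1}\sum_t\bW_{t-1}\Sigma\bW_{t-1}'\to\mathbb{E}(\bW_t\Sigma\bW_t')$ by the ergodic theorem, while $T^{-1/2}\sum_t\bW_{t-1}\vec(\bE_t)\to N(\h{0},\mathbb{E}(\bW_t\Sigma\bW_t'))$ by a martingale-difference central limit theorem ($\vec(\bE_t)$ i.i.d.\ with covariance $\Sigma$, $\bW_{t-1}$ measurable with respect to the past and square-integrable). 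Substituting and inverting $\bH_1$ gives $\sqrt{T}\,\vec(\hat{\h{\theta}}^{\text{ls}}-\h{\theta})\to N(\h{0},\bH_1^{-1}\mathbb{E}(\bW_t\Sigma\bW_t')\bH_1^{-1})=N(\h{0},\Xi_1)$.

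The main obstacle is showing that this linearization carries $o_p(T^{-1/2})$ remainder uniformly at the algorithm's fixed point, in spite of the deterministic trend $\bC\bd\,t$ in~(\ref{vecProcess}) and the faster, anisotropic rate of $\hat\bbeta_i$. Substituting $\tilde\bbeta_i=\hat\bbeta_i(\bbeta_i'\hat\bbeta_i)^{-1}=\bbeta_i+\bB_{Ti}\bU_{Ti}$ into the fitted map creates cross terms such as $\balpha_1(\tilde\bbeta_1-\bbeta_1)'\bX_{t-1}\bbeta_2\balpha_2'$, whose component along $\h{\tau}_1$ multiplies $\h{\tau}_1'\bX_{[Tu]}\bbeta_2$, which by Lemma~\ref{G} grows linearly in $T$. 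One must check, using Lemma~\ref{G} together with the mean-zero stationarity of the genuine regressors in $\bW_{t-1}$, that (i) all such nonstationary and trend regressors are asymptotically orthogonal to $\bW_{t-1}$, so no bias enters the score, and (ii) after the Schur reduction their net contribution to the $\h{\theta}$-block of the Hessian is $o(T)$, so that only $\bH_1$ survives; the bilinearity of the model and the error $\hat\bD-\bD$ produce further second-order terms that must be shown negligible. Discharging these estimates, together with establishing consistency of the iterate and invertibility of $\bH_1$, is the technical heart; the ergodic LLN and the martingale CLT in the previous step are standard.
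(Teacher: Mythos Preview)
Your proposal is correct and follows essentially the same route as the paper: derive the first-order (gradient) conditions of the least-squares criterion, profile out $\bD$ by centering, linearize to obtain $\bH_1\,\vec(\hat{\h\theta}-\h\theta)=T^{-1}\sum_t\bW_{t-1}\vec(\bE_t)+o_p(T^{-1/2})$, and conclude by the martingale CLT. The paper packages the consistency and rate results you flag as the ``technical heart'' into a preliminary lemma (Lemma~\ref{LSE_rate}) giving $(\hat\Phi-\Phi)\bB_T=O_p(T^{-1/2})$ and $\hat\bbeta_i\hat\bbeta_i'-\bbeta_i\bbeta_i'=O_p(T^{-1})$, and your Lagrange-multiplier interpretation of the $\h\delta_i\h\delta_i'$ terms is a clean conceptual gloss on what the paper leaves implicit.
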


The proof is given in Appendix~\ref{Appendix:proofs}. The parameters not involving cointegrating vectors and constant terms are asymptotically normal under a convergence rate of $O_p(1/\sqrt{T})$. Theorem~\ref{Theorem:LSEtheta} includes the CVAR model considered \cite{johansen1995likelihood} as a special case with $d_2=1$. The asymptotic properties of the constant term $\hat\bd = \vec(\hat{\bD})$ can be obtained directly from \cite{johansen1995likelihood}, and we put it in Appendix~\ref{Appendix:theorems}.

\begin{theorem}\label{Theorem:LSEtildebeta}
Assume the same conditions as Theorem~\ref{Theorem:LSEtheta}. In addition, assume the error matrix $\bE_t$
are IID normal. It holds that
\begin{align}\label{Formula:LSEtildebeta}
\begin{bmatrix}
     \vec\left((\tilde\bbeta_1^{\text{ls}} - \bbeta_1)' (T\h{\gamma}_1, T^{3/2}\h{\tau}_1)\right) \\
     \vec\left((T\h{\gamma}_2, T^{3/2}\h{\tau}_2)' (\tilde\bbeta_2^{\text{ls}} - \bbeta_2)\right)
\end{bmatrix}
\overset{w}{\longrightarrow}
    \left(\int_0^1 \bG\bG' \mathrm{d}u \right)^{-1} \int_0^1 \bG \mathrm{d}\bW.
\end{align}
\end{theorem}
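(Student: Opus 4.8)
### Proof Proposal for Theorem~\ref{Theorem:LSEtildebeta}

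The plan is to start from the column-wise reduction to the CVAR setting described in Section~\ref{sec:lse}. Fixing $\bA_2,\bB_{i2}$, the first-block least squares estimation of $\bA_1=\balpha_1\bbeta_1'$ is exactly a reduced-rank regression of $\by_{tj}$ on $\bx_{tj}$ concentrated on $\bz_{tj}$, so the estimator $\tilde\bbeta_1^{\text{ls}}$ has the standard Johansen-type closed form in terms of $\hat\bS_{yx.z}$, $\hat\bS_{xx.z}$. The key identity I would exploit is the decomposition $\tilde\bbeta_1 - \bbeta_1 = \bB_{T1}\bU_{T1}$ from the text, which localizes the error onto the orthogonal complement directions $(\bar{\h\gamma}_1, T^{-1/2}\bar{\h\tau}_1)$. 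The whole point of the rescaling by $(T\h\gamma_1, T^{3/2}\h\tau_1)$ on the right is to put the $\bar{\h\gamma}_1$-direction error at rate $T$ and the $\bar{\h\tau}_1$-direction error (which sits along the deterministic trend) at the faster rate $T^{3/2}$, matching the two blocks of $\bG_1(u)$ — the Brownian-bridge-type term and the linear-trend term $\h\Lambda_1\bV_1'(u-\tfrac12)$.

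First I would derive the ``normal equation'' for the rescaled error. Writing $\Delta\bX_t = \bA_1\bX_{t-1}\bA_2' + \sum_i\bB_{i1}\Delta\bX_{t-i}\bB_{i2}' + \bD + \bE_t$ and substituting into the first-order condition for $\tilde\balpha_1\tilde\bbeta_1'$, one gets, after concentrating out the stationary regressors $\bz_{tj}$ and the estimate of $\balpha_1$, an expression of the form
\[
(\tilde\bbeta_1^{\text{ls}}-\bbeta_1)'\,\Big(\sum_t \bB_{T1}'(\bX_{t-1}-\bar\bX)\bA_2\bA_2'(\bX_{t-1}-\bar\bX)'\bB_{T1}\Big)\bB_{T1}
= \Big(\sum_t \bB_{T1}'(\bX_{t-1}-\bar\bX)\bA_2\,\bE_t^{\perp}\Big)(\cdots) + o_p,
\]
where $\bE_t^{\perp}$ is the residual of $\bE_t$ after projecting onto $\balpha_1$ (this is where normality of $\bE_t$ is used: it makes the projection orthogonal and kills the cross term asymptotically, as in Johansen's Theorem~13.3). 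I would do the same for the second block to handle $\tilde\bbeta_2^{\text{ls}}$, and stack the two. Then the two pieces to control are the ``design'' moment $\sum_t \bB_{T1}'(\bX_{t-1}-\bar\bX)\bA_2\otimes\balpha_1'(\cdots)$ and the ``noise'' moment $\sum_t (\cdots)\otimes(\cdots)\vec(\bE_t)$.

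The convergence of these two moments is where Lemma~\ref{G} does the heavy lifting. By Lemma~\ref{G}, $T^{-1/2}\bB_{T1}'(\bX_{[Tu]}-\bar\bX)\bA_2\otimes\balpha_1' \to \bG_1(u)$, so by a continuous mapping / Riemann-sum argument the rescaled design moment converges to $\int_0^1 \bG\bG'\,\mathrm du$ (after also rescaling the $\bB_{T1}$ on the far right by $(T\h\gamma_1,T^{3/2}\h\tau_1)$, which exactly cancels the extra $T^{1/2}$'s and the $(\h\gamma_i'\h\gamma_i)^{-1}$ normalizers). For the noise moment, I would invoke a martingale FCLT: $T^{-1/2}\sum_{i=1}^{[Tu]}\bepsilon_i\Rightarrow \bw(u)$ jointly with the partial sums that build $\bX_t$, and then the sum $\sum_t (\text{rescaled }\bX_{t-1}\text{ term})\,\vec(\bE_t)'$ converges to the stochastic integral $\int_0^1 \bG\,\mathrm d\bW$ — here one must be careful that the integrand is adapted (it depends only on $\bX_{t-1}$ and past) so no Itô correction term appears. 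Slutsky then gives the stated product $\left(\int_0^1\bG\bG'\mathrm du\right)^{-1}\int_0^1\bG\,\mathrm d\bW$.

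The main obstacle I anticipate is the bookkeeping around the mixed convergence rates and the concentration step. Two technical points deserve care: (i) showing that after concentrating out $\bz_{tj}$ and the plug-in $\tilde\balpha_1$, the contribution of the stationary regressors and of the estimation error in $\balpha_1$ is genuinely $o_p(1)$ at the relevant ($T$ and $T^{3/2}$) scales — this needs the rate results of Theorem~\ref{Theorem:LSEtheta} for $\hat\balpha$, $\hat\bB_i$, $\hat\bD$ together with the fact that cross-moments between I(1) and I(0) blocks are lower order; and (ii) justifying that the $\h\tau_1$-direction, which carries the deterministic trend $t$, produces the clean $\h\Lambda_1\bV_1'(u-\tfrac12)$ limit rather than a random Brownian term — this is precisely the content of the second display in Lemma~\ref{G} and must be threaded through the Kronecker structure $(\,\cdot\,)\otimes\balpha_1'$ without collapsing the row dimensions. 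Once these are in place, the matrix case differs from Johansen's vector case only in that every Brownian functional is pre- and post-multiplied by the fixed matrices $\balpha_i$, $\bbeta_i$, which is absorbed into the definitions of $\bG_i(u)$ and $\bW(u)$.
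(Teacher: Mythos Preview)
Your skeleton is essentially the paper's: derive a linear ``normal equation'' in $(\tilde\bbeta_1-\bbeta_1,\tilde\bbeta_2-\bbeta_2)$ from the first-order conditions, rescale via $\bB_{Ti}$, and feed the design moment and noise moment through Lemma~\ref{G} (and its immediate consequence, the paper's Lemma on $\bM_{t-1}$) to reach $(\int\bG\bG')^{-1}\int\bG\,\mathrm d\bW$. The paper does exactly this, stacking the two gradient conditions for $\bbeta_1$ and $\bbeta_2$ into
\[
\frac{1}{T^2}\sum_t \bM_{t-1}\bM_{t-1}'\cdot T\begin{bmatrix}\vec(\bU_{T1}')\\ \vec(\bU_{T2})\end{bmatrix}=\frac{1}{T}\sum_t \bM_{t-1}\vec(\bE_t)+o_p(1/T),
\]
with $\bM_{t-1}=\bigl(\bB_{T1}'(\bX_{t-1}-\bar\bX)\bbeta_2\balpha_2'\otimes\balpha_1';\ \balpha_2'\otimes\bB_{T2}'(\bX_{t-1}-\bar\bX)'\bbeta_1\balpha_1'\bigr)$, and then applies the lemma.

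One point to correct: you invoke a Johansen-style concentration of $\balpha_1$ producing a residual $\bE_t^{\perp}$ orthogonal to $\balpha_1$, and locate the role of normality there. That is the MLE mechanism (where the GLS weighting $\Sigma^{-1}$ makes the $\balpha$-concentration yield $dV_\alpha=(\balpha'\Sigma^{-1}\balpha)^{-1}\balpha'\Sigma^{-1}d\bw$). In the \emph{LSE} argument of this paper there is no such projection: differentiating $\sum_t\|\Delta\bX_t-\balpha_1\bbeta_1'\bX_{t-1}\bbeta_2\balpha_2'-\cdots\|_F^2$ with respect to $\bbeta_1$ premultiplies the residual by $\balpha_1'$ directly, so the right-hand side of the normal equation is $\balpha_1'\bE_t\balpha_2\bbeta_2'(\bX_{t-1}-\bar\bX)'\bB_{T1}$, and this $\balpha_1'$ is already absorbed into the definition of $\bG_1(u)$ via the factor $\otimes\,\balpha_1'$. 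If you try to carry the orthogonal-projection step through for OLS you will find the algebra does not simplify as in the MLE case, and you will not land on the stated limit $\int\bG\,\mathrm d\bW$. Use the gradient conditions for $\bbeta_1,\bbeta_2$ directly; your points (i) and (ii) about the lower-order contributions and the trend block are then exactly the remaining work.
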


\begin{corollary}\label{Corollary:LSEtildebeta}
Under the assumptions of Theorem~\ref{Theorem:LSEtildebeta}, it holds that
\begin{align}
    \vec\left((\tilde\bbeta_1^{\text{ls}} - \bbeta_1)' (T\h{\gamma}_1, T^{3/2}\h{\tau}_1)\right) \overset{w}{\longrightarrow}
    \left(\int_0^1 \bG_{1.2}\bG_{1.2}' \mathrm{d}u \right)^{-1} \int_0^1 \bG_{1.2} \mathrm{d}\bW, \label{Formula:tildebeta1}\\
    \vec\left((T\h{\gamma}_2, T^{3/2}\h{\tau}_2)' (\tilde\bbeta_2^{\text{ls}} - \bbeta_2)\right) \overset{w}{\longrightarrow}
    \left(\int_0^1 \bG_{2.1}\bG_{2.1}' \mathrm{d}u \right)^{-1} \int_0^1 \bG_{2.1} \mathrm{d}\bW.\label{Formula:tildebeta2}
\end{align}
where $\bG_{i.j}(u) = \bG_i(u) - \int_{0}^{1} \bG_i \bG_j' du \left[ \int_{0}^{1} \bG_j \bG_j' du \right]^{-1} \bG_j(u)$.
\end{corollary}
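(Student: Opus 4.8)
The plan is to derive the marginal limit of $\vec\big((\tilde\bbeta_1^{\text{ls}}-\bbeta_1)'(T\h\gamma_1,T^{3/2}\h\tau_1)\big)$ from the joint limit in Theorem~\ref{Theorem:LSEtildebeta} by applying the standard block-partitioned Gaussian/mixed-Gaussian conditioning argument that underlies partial regression. Write the joint limiting object on the right-hand side of \eqref{Formula:LSEtildebeta} as $\big(\int_0^1\bG\bG'\,\mathrm du\big)^{-1}\int_0^1\bG\,\mathrm d\bW$, and partition $\bG=(\bG_1',\bG_2')'$ into the blocks coming from $\tilde\bbeta_1$ and $\tilde\bbeta_2$ respectively. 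The first block of the joint limit is exactly the quantity whose marginal law we want. Using the block-inverse formula for $\big(\int_0^1\bG\bG'\,\mathrm du\big)^{-1}$ with Schur complement $\int_0^1\bG_{1.2}\bG_{1.2}'\,\mathrm du$ (where $\bG_{1.2}$ is the residual of $\bG_1$ after $L^2[0,1]$-projection onto $\bG_2$, as defined in the statement), the first block of $\big(\int\bG\bG'\big)^{-1}\int\bG\,\mathrm d\bW$ collapses to $\big(\int_0^1\bG_{1.2}\bG_{1.2}'\,\mathrm du\big)^{-1}\int_0^1\bG_{1.2}\,\mathrm d\bW$, because the cross terms $\int_0^1\bG_1\bG_2'\,\mathrm du\,\big(\int_0^1\bG_2\bG_2'\,\mathrm du\big)^{-1}\int_0^1\bG_2\,\mathrm d\bW$ are precisely subtracted off. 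This is the same algebra as Frisch--Waugh--Lovell, only at the level of the limiting stochastic integrals. The symmetric argument with the roles of $1$ and $2$ swapped yields \eqref{Formula:tildebeta2}.

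First I would record the block decomposition of $\int_0^1\bG\bG'\,\mathrm du$ as $\begin{pmatrix}\bM_{11}&\bM_{12}\\\bM_{21}&\bM_{22}\end{pmatrix}$ with $\bM_{ij}=\int_0^1\bG_i\bG_j'\,\mathrm du$, and of $\int_0^1\bG\,\mathrm d\bW$ as $(\bN_1',\bN_2')'$ with $\bN_i=\int_0^1\bG_i\,\mathrm d\bW$. Next I would invoke the standard identity that the top block of $\begin{pmatrix}\bM_{11}&\bM_{12}\\\bM_{21}&\bM_{22}\end{pmatrix}^{-1}\begin{pmatrix}\bN_1\\\bN_2\end{pmatrix}$ equals $(\bM_{11}-\bM_{12}\bM_{22}^{-1}\bM_{21})^{-1}(\bN_1-\bM_{12}\bM_{22}^{-1}\bN_2)$. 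Then I would identify the Schur complement $\bM_{11}-\bM_{12}\bM_{22}^{-1}\bM_{21}$ with $\int_0^1\bG_{1.2}\bG_{1.2}'\,\mathrm du$ by expanding $\bG_{1.2}=\bG_1-\bM_{12}\bM_{22}^{-1}\bG_2$ and using orthogonality of the projection residual to $\bG_2$ in $L^2[0,1]$; similarly $\bN_1-\bM_{12}\bM_{22}^{-1}\bN_2=\int_0^1\bG_{1.2}\,\mathrm d\bW$, since $\int_0^1(\bM_{12}\bM_{22}^{-1}\bG_2)\,\mathrm d\bW$ is exactly the correction term. Assembling these gives \eqref{Formula:tildebeta1}; \eqref{Formula:tildebeta2} is obtained by interchanging indices.

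The one point that needs care rather than being purely formal is the invertibility, almost surely, of $\bM_{22}=\int_0^1\bG_2\bG_2'\,\mathrm du$ and of the Schur complement $\int_0^1\bG_{1.2}\bG_{1.2}'\,\mathrm du$, which is what makes the block-inverse manipulation legitimate and the limit well-defined; this follows from the nondegeneracy of the Brownian functionals $\bar\h\gamma_i'(\bW(u)-\bar\bW)\bbeta_2$ together with the linearly independent deterministic trend components $\h\Lambda_i\bV_i'(u-\tfrac12)$ entering $\bG_i$, under the rank conditions of Lemma~\ref{G} (here one uses that a finite-dimensional family of continuous functions that is a.s.\ linearly independent has an a.s.\ positive-definite Gram matrix). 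Given that, the continuous mapping theorem applied to the (a.s.\ continuous) map taking $\bG$ and the joint limit to its first block finishes the proof. I do not expect any genuine obstacle here: the corollary is a direct algebraic corollary of Theorem~\ref{Theorem:LSEtildebeta} once the partial-regression identity at the level of the limiting integrals is written down, so the bulk of the work — establishing the joint weak convergence with the trend-adjusted rescalings $(T\h\gamma_i,T^{3/2}\h\tau_i)$ — has already been done in the theorem itself.
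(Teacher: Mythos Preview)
Your proposal is correct and follows essentially the same approach as the paper: both extract the first block of the joint limit in Theorem~\ref{Theorem:LSEtildebeta} via the block-inverse/Schur-complement identity, identify the Schur complement with $\int_0^1\bG_{1.2}\bG_{1.2}'\,\mathrm du$, and recognize the resulting numerator as $\int_0^1\bG_{1.2}\,\mathrm d\bW$. Your write-up is in fact a bit more explicit than the paper's (you spell out the partial-regression structure and flag the a.s.\ invertibility needed for the block inversion), but the argument is the same.
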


The proof of Theorem~\ref{Theorem:LSEtildebeta} and Corollary~\ref{Corollary:LSEtildebeta} are given in Appendix~\ref{Appendix:proofs}. It is shown above $\tilde\bbeta_i$ is super-consistent that $\tilde\bbeta_i - \bbeta_i = O_p(1/T)$. And note that the speed of convergence is different in the different directions $\h{\tau}_i$ and $\h{\gamma}_i$ corresponding to different behaviors of the process $\h{\tau}'_i\bX_t \bbeta_2$ and $\h{\gamma}'_i\bX_t \bbeta_2$ as given in Lemma~\ref{G}. As we known $\tilde{\bbeta}_i$ is normalized by $\bbeta_i$ such that $\bbeta'_i \tilde{\bbeta}_i = \bI_{r_i}$. Next, we generalize the results to the estimators $\hat\bbeta_{ic} = \hat\bbeta_i (\bc' \hat\bbeta_i)^{-1}$ with any normalizing matrix $\bc$ that $\bc' \hat\bbeta_{ic} = \bI_{r_i}$.

\begin{theorem}\label{Theorem:LSEhatbeta}
Assume the same conditions as Theorem~\ref{Theorem:LSEtildebeta}. Let $\hat{\bbeta}_{i}^{\text{ls}} = \hat{\bbeta}_i^{\text{ls}} (\bc_i' \hat{\bbeta}_i^{\text{ls}})^{-1}$, $\bbeta_{ic} = \bbeta_i (\bc_i' \bbeta_i)^{-1}$, such that $\bc_i' \bbeta_{ic} = \bc_i' \hat{\bbeta}_{i}^{\text{ls}} = \bI_{r_i}$, $i=1,2$. Then we have
\begin{align}
T\vec{\left((\hat{\bbeta}_{1}^{\text{ls}})' - \bbeta_{1c}'\right)}
    \overset{w}{\longrightarrow}
    \left(\left((\bI_{d_1} - \bbeta_{1c} \bc_1') \bar{\h{\gamma}}_{1} \right)\otimes \bI_{r_1} \right)\left(\int_0^1 \bQ_{1.2} \bQ_{1.2}' \mathrm{d}u \right)^{-1} \int_0^1 \bQ_{1.2} \mathrm{d}\bW, \\
T\vec{\left(\hat{\bbeta}_{2}^{\text{ls}} - \bbeta_{2c}\right)}
    \overset{w}{\longrightarrow}
    \left(\bI_{r_2} \otimes \left((\bI_{d_2} - \bbeta_{2c} \bc_1') \bar{\h{\gamma}}_{2} \right)\right)\left(\int_0^1 \bQ_{2.1} \bQ_{2.1}' \mathrm{d}u \right)^{-1} \int_0^1 \bQ_{2.1} \mathrm{d}\bW.
\end{align}
where $\bQ_{i.j}(u) = \bG_{i.j,1}(u) - \int_{0}^{1} \bG_{i.j,1} \bG_{i.j,2}' du \left[ \int_{0}^{1} \bG_{i.j,2} \bG_{i.j,2}' du \right]^{-1} \bG_{i.j,2}(u)$, and $\bG_{i.j,1}$, $\bG_{i.j,2}$ are block sub-matrices such that $\bG_{i.j}' = \{\bG_{i.j,1}', \bG_{i.j,2}'\}$.
\end{theorem}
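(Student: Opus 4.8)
The idea is to deduce the asymptotics of $\hat\bbeta_{ic}^{\text{ls}}$ from those of $\tilde\bbeta_i^{\text{ls}}$, already supplied by Theorem~\ref{Theorem:LSEtildebeta} and Corollary~\ref{Corollary:LSEtildebeta}, since the two estimators differ only by a change of normalization. The starting point is the purely algebraic identity
\[
\hat\bbeta_{ic}^{\text{ls}} \;=\; \hat\bbeta_i^{\text{ls}}(\bc_i'\hat\bbeta_i^{\text{ls}})^{-1} \;=\; \tilde\bbeta_i^{\text{ls}}(\bc_i'\tilde\bbeta_i^{\text{ls}})^{-1},
\qquad
\bbeta_{ic} \;=\; \bbeta_i(\bc_i'\bbeta_i)^{-1},
\]
obtained by cancelling the common right factor $\hat\bbeta_i^{\text{ls}}$ and using $\tilde\bbeta_i^{\text{ls}}=\hat\bbeta_i^{\text{ls}}(\bbeta_i'\hat\bbeta_i^{\text{ls}})^{-1}$. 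Since $\bc_i'\bbeta_i$ is invertible and $\tilde\bbeta_i^{\text{ls}}\to\bbeta_i$ in probability by super-consistency, the map $\bM\mapsto\bM(\bc_i'\bM)^{-1}$ is smooth near $\bbeta_i$, with differential $\bH\mapsto(\bI_{d_i}-\bbeta_{ic}\bc_i')\bH(\bc_i'\bbeta_i)^{-1}$.

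Writing $\bN_i:=\tilde\bbeta_i^{\text{ls}}-\bbeta_i=O_p(T^{-1})$ and expanding $(\bc_i'\bbeta_i+\bc_i'\bN_i)^{-1}$ in a Neumann series gives
\[
\hat\bbeta_{ic}^{\text{ls}}-\bbeta_{ic}
\;=\;
(\bI_{d_i}-\bbeta_{ic}\bc_i')\,\bN_i\,(\bc_i'\bbeta_i)^{-1}
\;+\;O_p(T^{-2}),
\]
the remainder being negligible after multiplication by $T$. Next, using the resolution of the identity $\bI_{d_i}=\bbeta_i\bbeta_i'+\bar{\h{\gamma}}_i\h{\gamma}_i'+\bar{\h{\tau}}_i\h{\tau}_i'$ (valid because $\bbeta_i'\bbeta_i=\bI_{r_i}$ and $\bar{\h{\gamma}}_i\h{\gamma}_i'+\bar{\h{\tau}}_i\h{\tau}_i'=\bar\bbeta_{i\bot}\bbeta_{i\bot}'$) together with $\bbeta_i'\bN_i=\bbeta_i'\tilde\bbeta_i^{\text{ls}}-\bI_{r_i}=\h{0}$, we decompose $\bN_i=\bar{\h{\gamma}}_i(\h{\gamma}_i'\bN_i)+\bar{\h{\tau}}_i(\h{\tau}_i'\bN_i)$. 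By Corollary~\ref{Corollary:LSEtildebeta}, $\h{\gamma}_i'\bN_i=O_p(T^{-1})$ while $\h{\tau}_i'\bN_i=O_p(T^{-3/2})$, so after scaling by $T$ only the $\bar{\h{\gamma}}_i$-component survives:
\[
T\bigl(\hat\bbeta_{ic}^{\text{ls}}-\bbeta_{ic}\bigr)
\;=\;
(\bI_{d_i}-\bbeta_{ic}\bc_i')\,\bar{\h{\gamma}}_i\,\bigl(T\,\h{\gamma}_i'\bN_i\bigr)\,(\bc_i'\bbeta_i)^{-1}\;+\;o_p(1).
\]

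It then remains to identify the weak limit of $T\,\h{\gamma}_i'\bN_i$, i.e.\ of $\vec\bigl((\tilde\bbeta_i^{\text{ls}}-\bbeta_i)'T\h{\gamma}_i\bigr)$. This is precisely the sub-vector of $\vec\bigl((\tilde\bbeta_i^{\text{ls}}-\bbeta_i)'(T\h{\gamma}_i,T^{3/2}\h{\tau}_i)\bigr)$ indexed by the $\h{\gamma}_i$-columns, and with the standard ordering of $\vec$ it corresponds to the leading row-block $\bG_{i.j,1}$ in the partition $\bG_{i.j}'=\{\bG_{i.j,1}',\bG_{i.j,2}'\}$. Applying the block-inversion (Schur complement) identity to the mixed-Gaussian limit $\left(\int_0^1\bG_{i.j}\bG_{i.j}'\,\mathrm{d}u\right)^{-1}\int_0^1\bG_{i.j}\,\mathrm{d}\bW$ of Corollary~\ref{Corollary:LSEtildebeta}, the marginal of that block equals $\left(\int_0^1\bQ_{i.j}\bQ_{i.j}'\,\mathrm{d}u\right)^{-1}\int_0^1\bQ_{i.j}\,\mathrm{d}\bW$, where $\bQ_{i.j}$ is exactly the residual of $\bG_{i.j,1}$ after partialling out $\bG_{i.j,2}$ as in the statement (here $(i,j)=(1,2)$ and $(2,1)$). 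Finally one pushes the vectorization through $\vec(\bM_1\bM_2\bM_3)=(\bM_3'\otimes\bM_1)\vec(\bM_2)$ — transposing for the $\bbeta_1$ claim, which is stated for $(\hat\bbeta_{1c}^{\text{ls}})'-\bbeta_{1c}'$ — and concludes by Slutsky's theorem, since every deterministic factor is fixed and the only stochastic factor converges. The main obstacle is this last step: aligning the $\vec$-ordering with the row-blocks of $\bG_{i.j}$, carrying the Schur-complement reduction of the functional limit through correctly, and keeping track of the $\bc_i$-normalization factor $(\bc_i'\bbeta_i)^{-1}$ that accompanies $\bar{\h{\gamma}}_i$; by comparison the Neumann expansion, the remainder bounds, and the suppression of the faster-vanishing $\h{\tau}_i$-direction (of order $O_p(T^{-1/2})$ after scaling) are routine.
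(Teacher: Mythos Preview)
Your proposal is correct and follows essentially the same route as the paper: a first-order (delta-method/Neumann) expansion of $\hat\bbeta_{ic}=\tilde\bbeta_i(\bc_i'\tilde\bbeta_i)^{-1}$ around $\bbeta_i$, the decomposition of $\tilde\bbeta_i-\bbeta_i$ along $\bar{\h{\gamma}}_i$ and $\bar{\h{\tau}}_i$ via $\bB_{Ti}\bU_{Ti}$, and the observation that the $\h{\tau}_i$-component vanishes after scaling by $T$. You go slightly beyond the paper's proof by explicitly carrying out the Schur-complement reduction from $\bG_{i.j}$ to $\bQ_{i.j}$ (the paper stops at ``has the same limit distribution as $(\bI-\bbeta_{ci}\bc')\bar{\h{\gamma}}_iT\bU_{1Ti}$''), and you correctly flag the right-normalization factor $(\bc_i'\bbeta_i)^{-1}$, which the paper's expansion suppresses.
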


Furthermore, we examine the estimated cointegration space, represented by the projection matrix $\hat{\mathbb{P}}_{\beta_i}^{\text{ls}} = \hat\bbeta^{\text{ls}}_i \left((\hat\bbeta^{\text{ls}}_i)' \hat\bbeta^{\text{ls}}_i \right)^{-1}(\hat\bbeta^{\text{ls}}_i)'$, which is unique regardless of the choice of normalization. Note the true cointegration space $\mathbb{P}_{\beta_i} = \bbeta_i (\bbeta_i' \bbeta_i)^{-1} \bbeta_i' = \bbeta_i \bbeta_i'$, since we require $\bbeta_i' \bbeta_i = \bI_{r_i}$, $i=1,2$.
\begin{theorem}\label{Theorem:LSEbb}
Assume the same conditions as Theorem~\ref{Theorem:LSEtildebeta}. It holds that
\begin{align}
\begin{split}
    &T
    \vec\left(  (\hat{\mathbb{P}}_{\beta_1}^{\text{ls}} - \mathbb{P}_{\beta_1})(\h{\gamma}_1, T^{1/2}\h{\tau}_1) \right)
    \overset{w}{\longrightarrow} \\
    &\left(\bI_{d_1} \otimes \bbeta_1 + (\bbeta_1 \otimes \bI_{d_1})\bP_{d_1-r_1, r_1}\right) \left(\int_0^1 \bG_{1.2}\bG_{1.2}' \mathrm{d}u \right)^{-1} \int_0^1 \bG_{1.2} (\mathrm{d}\bW),
\end{split}
\end{align}
where $\bP_{d_1-r_1, r_1}$ is the permutation matrix such that $\vec(\bbeta_{1\bot}) = \bP_{d_1-r_1, r_1} \vec(\bbeta_{1\bot}')$.
\end{theorem}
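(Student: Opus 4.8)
The plan is to reduce the statement to Corollary~\ref{Corollary:LSEtildebeta} by a first‑order perturbation expansion of the projection matrix, exploiting the fact that $\hat{\mathbb{P}}_{\beta_1}^{\text{ls}}$ does not depend on how $\hat\bbeta_1^{\text{ls}}$ is normalized. Since $\hat{\mathbb{P}}_{\beta_1}^{\text{ls}}=\hat\bbeta_1^{\text{ls}}\big((\hat\bbeta_1^{\text{ls}})'\hat\bbeta_1^{\text{ls}}\big)^{-1}(\hat\bbeta_1^{\text{ls}})'$ depends only on $\col(\hat\bbeta_1^{\text{ls}})$, I would replace $\hat\bbeta_1^{\text{ls}}$ throughout by the normalization $\tilde\bbeta_1^{\text{ls}}=\hat\bbeta_1^{\text{ls}}(\bbeta_1'\hat\bbeta_1^{\text{ls}})^{-1}$, which is the one controlled by Corollary~\ref{Corollary:LSEtildebeta} and satisfies $\bbeta_1'\tilde\bbeta_1^{\text{ls}}=\bI_{r_1}$. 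Write $\bN_T:=\tilde\bbeta_1^{\text{ls}}-\bbeta_1$; then $\bbeta_1'\bN_T=\bzero$, so the columns of $\bN_T$ lie in $\col(\bbeta_{1\bot})$.

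From $\bbeta_1'\bbeta_1=\bI_{r_1}$ and $\bbeta_1'\bN_T=\bzero$ one gets $(\tilde\bbeta_1^{\text{ls}})'\tilde\bbeta_1^{\text{ls}}=\bI_{r_1}+\bN_T'\bN_T$, hence $\big((\tilde\bbeta_1^{\text{ls}})'\tilde\bbeta_1^{\text{ls}}\big)^{-1}=\bI_{r_1}-\bN_T'\bN_T+O_p(\|\bN_T\|_F^4)$. Substituting and collecting terms yields
\[
\hat{\mathbb{P}}_{\beta_1}^{\text{ls}}-\mathbb{P}_{\beta_1}
=\bbeta_1\bN_T'+\bN_T\bbeta_1'+\bN_T\bN_T'-\bbeta_1\bN_T'\bN_T\bbeta_1'+(\text{terms of smaller order}),
\]
which is symmetric, as it must be; the first two summands are the leading ones.

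Next I would right‑multiply by $(\h{\gamma}_1,T^{1/2}\h{\tau}_1)$ and rescale by $T$. By construction $\h{\gamma}_1$ is orthogonal to $\bbeta_1$, and $\h{\tau}_1=\bU_1$ spans $\col([\vec^{-1}(\bC\bd)]\bbeta_2)\subseteq\col(\bbeta_{1\bot})=\col(\bbeta_1)^{\perp}$, so $\bbeta_1'(\h{\gamma}_1,T^{1/2}\h{\tau}_1)=\bzero$; this annihilates the terms $\bN_T\bbeta_1'(\cdot)$ and $\bbeta_1\bN_T'\bN_T\bbeta_1'(\cdot)$ identically. By Corollary~\ref{Corollary:LSEtildebeta}, $\bN_T'(T\h{\gamma}_1,T^{3/2}\h{\tau}_1)=O_p(1)$, so $\bN_T'\h{\gamma}_1=O_p(T^{-1})$ and $\bN_T'\h{\tau}_1=O_p(T^{-3/2})$; since $\col(\bbeta_{1\bot})=\col(\h{\gamma}_1)\oplus\col(\h{\tau}_1)$ and the columns of $\bN_T$ lie in $\col(\bbeta_{1\bot})$, this also gives $\bN_T=O_p(T^{-1})$ in Frobenius norm. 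A direct bound, splitting $\bN_T$ along the $(\bbeta_1,\h{\tau}_1,\h{\gamma}_1)$ basis, then shows $T\bN_T\bN_T'(\h{\gamma}_1,T^{1/2}\h{\tau}_1)=o_p(1)$ and that the higher‑order remainder is $o_p(1)$ as well. Consequently
\[
T\big(\hat{\mathbb{P}}_{\beta_1}^{\text{ls}}-\mathbb{P}_{\beta_1}\big)(\h{\gamma}_1,T^{1/2}\h{\tau}_1)
=\bbeta_1\,(\tilde\bbeta_1^{\text{ls}}-\bbeta_1)'(T\h{\gamma}_1,T^{3/2}\h{\tau}_1)+o_p(1).
\]

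Finally I would apply $\vec$ to both sides, recognising the coefficient on the right as the Jacobian $\bI_{d_1}\otimes\bbeta_1+(\bbeta_1\otimes\bI_{d_1})\bP_{d_1-r_1, r_1}$ of the map $\bbeta\mapsto\bbeta(\bbeta'\bbeta)^{-1}\bbeta'$ evaluated at $\bbeta_1$ along directions orthogonal to $\bbeta_1$, where $\bP_{d_1-r_1, r_1}$ is the commutation matrix characterised by $\vec(\bbeta_{1\bot})=\bP_{d_1-r_1, r_1}\vec(\bbeta_{1\bot}')$; it enters through the contrast between $\vec(\bbeta_1\bM')=(\bI\otimes\bbeta_1)\vec(\bM')$ and $\vec(\bM\bbeta_1')=(\bbeta_1\otimes\bI)\vec(\bM)$. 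Combining this with the weak limit $\vec\big((\tilde\bbeta_1^{\text{ls}}-\bbeta_1)'(T\h{\gamma}_1,T^{3/2}\h{\tau}_1)\big)\overset{w}{\longrightarrow}\big(\int_0^1\bG_{1.2}\bG_{1.2}'\,\mathrm{d}u\big)^{-1}\int_0^1\bG_{1.2}\,\mathrm{d}\bW$ from Corollary~\ref{Corollary:LSEtildebeta} and Slutsky's theorem gives the claimed distributional limit; the second mode is handled by the identical argument after transposing, with $\bG_{2.1}$ replacing $\bG_{1.2}$. \textbf{The main obstacle} is the remainder control in the third step: the rescaling $(\h{\gamma}_1,T^{1/2}\h{\tau}_1)$ treats the two parts of $\col(\bbeta_{1\bot})$ asymmetrically, so one must use the direction‑specific super‑consistency rates of Corollary~\ref{Corollary:LSEtildebeta} ($O_p(T^{-1})$ along $\h{\gamma}_1$, $O_p(T^{-3/2})$ along $\h{\tau}_1$) with care, checking in particular that the slow $\h{\gamma}_1$‑rate combined with the amplified $\h{\tau}_1$‑column leaves no surviving cross term in the limit.
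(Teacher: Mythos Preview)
Your route is genuinely different from the paper's. The paper does not expand the projection via $\tilde\bbeta_1$; instead it works with the un-normalised $\hat\bbeta_1$ (so $\hat{\mathbb P}_{\beta_1}=\hat\bbeta_1\hat\bbeta_1'$), returns to the gradient conditions to obtain an equation for $T\vec(\hat\bbeta_1\hat\bbeta_1'-\bbeta_1\bbeta_1')$ with a rank-deficient coefficient, and then invokes a pseudo-inverse lemma (twice) to extract the limit. Your perturbation argument is cleaner: it bypasses the pseudo-inverse machinery and feeds directly into Corollary~\ref{Corollary:LSEtildebeta}, and the remainder control you outline (using the direction-specific $O_p(T^{-1})$ and $O_p(T^{-3/2})$ rates along $\h\gamma_1$ and $\h\tau_1$) is exactly what is needed.

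There is, however, an internal inconsistency at the very last step. You correctly argue that $\bbeta_1'(\h\gamma_1,T^{1/2}\h\tau_1)=\bzero$ kills the term $\bN_T\bbeta_1'(\h\gamma_1,T^{1/2}\h\tau_1)$, so that only $\bbeta_1\bN_T'(\h\gamma_1,T^{1/2}\h\tau_1)$ survives. Vectorising that single surviving product gives
\[
\vec\!\big(\bbeta_1\,\bN_T'(T\h\gamma_1,T^{3/2}\h\tau_1)\big)=(\bI_{d_1-r_1}\otimes\bbeta_1)\,\vec\!\big((\tilde\bbeta_1-\bbeta_1)'(T\h\gamma_1,T^{3/2}\h\tau_1)\big),
\]
with \emph{one} Kronecker factor. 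The second summand $(\bbeta_1\otimes\bI_{d_1})\bP_{d_1-r_1,r_1}$ in the theorem's coefficient is precisely the vectorisation of the term $\bN_T\bbeta_1'(\cdot)$ that you have just annihilated, so it cannot reappear from your display. Your sentence ``recognising the coefficient on the right as the Jacobian $\bI_{d_1}\otimes\bbeta_1+(\bbeta_1\otimes\bI_{d_1})\bP_{d_1-r_1,r_1}$'' therefore does not follow from the line above it. You need to reconcile this: either show that the extra $(\bbeta_1\otimes\bI_{d_1})\bP_{d_1-r_1,r_1}$ piece acts as zero on the limit (and say why), or acknowledge that your derivation yields only the $(\bI\otimes\bbeta_1)$ factor and explain how this matches the stated form. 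As written, the identification step is asserted rather than proved.
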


\subsection{Asymptotics for MLE estimators}\label{CMAR:TheoremsMLE}

\begin{theorem}\label{Theorem:MLEtheta}
Assume that $\{\bE_t\}$ are IID with mean zero and finite second moments. Assume the Assumption~\ref{rootA} holds, and $0<\rank\bA_i = r_i$, and $\Sigma_e$ takes the form~(\ref{sigma_e}), and is non-singular. Then the asymptotic distribution of the estimator $\hat{\h\theta}^{\text{ml}}$ is given by
\begin{align}\label{MLEtheta}
    \sqrt{T} \vec{(\hat{\h{\theta}}^{\text{ml}} - \h{\theta})} \to N(\bzero, \Xi_2),
\end{align}
where $\Xi_2 = \bH_2^{-1}\mathbb{E}(\bW_t \Sigma^{-1} \bW_t')\bH_2^{-1}$, $\bH_2 = \mathbb{E} \left( \bW_t \Sigma^{-1} \bW_t' \right) + \sum_{i=1}^{k}\h{\delta}_i \h{\delta}_i'$. And we defined
$\h{\theta}$, $\bW_t$, and $\h{\delta}_i$ in Theorem~\ref{Theorem:LSEtildebeta}.
\end{theorem}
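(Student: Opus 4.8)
The plan is to run the argument of Theorem~\ref{Theorem:LSEtheta} with the Euclidean metric on residuals replaced throughout by the $\Sigma^{-1}$-weighted metric coming from the separable Gaussian criterion (\ref{like}) extended to model (\ref{model form 2}); since normality is not assumed here, ``$\hat{\h\theta}^{\text{ml}}$'' is really the root of the corresponding Gaussian estimating equations, and the same machinery applies. A useful reduction, which I would use both as a check and to import the $\hat\bbeta$-asymptotics of Section~\ref{CMAR:TheoremsLSE}, is pre-whitening: with $\bY_t=\Sigma_1^{-1/2}\bX_t\Sigma_2^{-1/2}$, equation (\ref{model form 2}) becomes a CMAR for $\bY_t$ with identity-covariance white noise and coefficients $\Sigma_1^{-1/2}\balpha_1$, $\Sigma_2^{-1/2}\balpha_2$, $\Sigma_j^{-1/2}\bB_{ij}\Sigma_j^{1/2}$, $\Sigma_1^{-1/2}\bD\Sigma_2^{-1/2}$, so that the known-$\Sigma$ version of $\hat{\h\theta}^{\text{ml}}$ is the LSE of Section~\ref{sec:lse} applied to $\bY_t$ (after re-imposing $\bbeta_i'\bbeta_i=\bI_{r_i}$, $\|\bA_1\|_F=1$), and Theorems~\ref{Theorem:LSEtheta}--\ref{Theorem:LSEtildebeta} transfer to $\bY_t$.

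\emph{Consistency and the role of $\hat\Sigma_j$.} I would first establish joint consistency of the alternating estimator: $\hat{\h\theta}^{\text{ml}}\to\h\theta$, $\hat{\mathbb{P}}_{\beta_i}\to\mathbb{P}_{\beta_i}$, $\hat\bd^{\text{ml}}\to\bd$, $\hat\Sigma_j\to\Sigma_j$, together with the super-consistency $\hat\bbeta_i^{\text{ml}}-\bbeta_i=O_p(1/T)$ (rate $T$ along $\h{\gamma}_i$, rate $T^{3/2}$ along $\h{\tau}_i$, by the analogue of Theorem~\ref{Theorem:LSEtildebeta} via the whitening reduction) and $\hat\bd^{\text{ml}}-\bd=O_p(1/\sqrt T)$. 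These follow from the representation (\ref{vecProcess}), the ergodic behaviour of the stationary blocks $\bbeta_1'\bX_t\bbeta_2$ and $\Delta\bX_{t-i}$, the matrix functional limits of Lemma~\ref{G}, and identifiability (Section~\ref{SEC:CMARmodels}). Because $\hat\Sigma_j-\Sigma_j=o_p(1)$ and $\Sigma_j$ enters the criterion only through the weights, replacing $\hat\Sigma_j$ by $\Sigma_j$ perturbs $\hat{\h\theta}^{\text{ml}}$ only by $o_p(T^{-1/2})$, so the rest of the argument may treat $\Sigma_j$ as known.

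\emph{Local expansion and assembly.} Writing $\bE_t(\cdot)$ for the residual of (\ref{model form 2}), the $\h\theta$-score at the truth is $S_T=\sum_t\bW_t\Sigma^{-1}\vec(\bE_t)$, where $\bW_t$ is exactly the derivative of $\vec\!\big(\bA_1\bX_{t-1}\bA_2'+\sum_i\bB_{i1}\Delta\bX_{t-i}\bB_{i2}'\big)$ with respect to $\h\theta$ already displayed in centered form. Its entries are stationary, $\bW_t$ is measurable with respect to the past, and the $\bE_t$ are i.i.d.\ with $\Cov(\vec(\bE_t))=\Sigma$, so a martingale CLT (finite second moments suffice) gives $T^{-1/2}S_T\overset{w}{\longrightarrow}N\!\big(\bzero,\mathbb{E}(\bW_t\Sigma^{-1}\Sigma\,\Sigma^{-1}\bW_t')\big)=N\!\big(\bzero,\mathbb{E}(\bW_t\Sigma^{-1}\bW_t')\big)$, while an ergodic LLN gives that the stationary part of the $\h\theta$-block of the Hessian converges to $\mathbb{E}(\bW_t\Sigma^{-1}\bW_t')$. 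The remaining term $\sum_i\h{\delta}_i\h{\delta}_i'$ of $\bH_2$ carries no $\Sigma$ and hence is not a likelihood-curvature term; it is a geometric correction coming from the identifiability normalizations $\|\bA_1\|_F=\|\bB_{i1}\|_F=1$, which I would extract by parametrizing $\bA_1$ and each $\bB_{i1}$ on the unit Frobenius sphere (equivalently, attaching a multiplier to each normalization) and carrying the constraint through the expansion, the relevant constraint gradients being proportional to $\h{\delta}_1=(\vec(\balpha_1)',\h{0}')'$ and $\h{\delta}_i=(\h{0}',\vec(\bB_{i1})',\h{0}')'$. Finally, using the super-consistency rates above together with $\bbeta'\bC=\bzero$ (so $\bbeta_1'\bX_t\bbeta_2$ has no linear trend), I would show every cross term between the $\h\theta$-block and the $(\bbeta_1,\bbeta_2,\bd,\Sigma_j)$-blocks is $o_p(1)$, whence $\sqrt T\,\vec(\hat{\h\theta}^{\text{ml}}-\h\theta)=\bH_2^{-1}\,T^{-1/2}S_T+o_p(1)\overset{w}{\longrightarrow}N(\bzero,\Xi_2)$.

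\emph{Main obstacle.} The crux is the joint bookkeeping in the last step: one must simultaneously argue that the alternating iterations converge to a consistent local root and that the only imprint the fast-rate nuisance blocks leave on the limit of $\hat{\h\theta}^{\text{ml}}$ is the deterministic perturbation $\sum_i\h{\delta}_i\h{\delta}_i'$ of the ``bread'' of the sandwich---that is, every stochastic cross term involving $\hat\bbeta_i-\bbeta_i$, $\hat\bd-\bd$ or $\hat\Sigma_j-\Sigma_j$ must vanish or be exactly absorbed. Pinning this correction down to precisely $\sum_i\h{\delta}_i\h{\delta}_i'$, in a way consistent with the eigenvector-based closed forms of Section~\ref{SEC:estimation}, is the computational heart of the proof, and the Kronecker coupling of the row and column cointegrating directions makes the accounting of the three regimes ($\h{\tau}_i$, $\h{\gamma}_i$, $\bbeta_i$) more delicate than in the vector CVAR case.
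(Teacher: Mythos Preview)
Your proposal is correct and follows essentially the same route as the paper: the paper's own proof of Theorem~\ref{Theorem:MLEtheta} is deliberately terse, stating only that ``the proofs of all theorems in Section~\ref{CMAR:TheoremsMLE} would be very similar with the proofs for the theorems in Section~\ref{CMAR:TheoremsLSE}'' after establishing a rate lemma (Lemma~\ref{MLE_rate}) giving $\hat\Sigma_i-\Sigma_i=O_p(T^{-1/2})$ and the other $O_p(T^{-1/2})$ and $O_p(T^{-1})$ rates you list, and then deferring to the gradient-condition argument of Theorem~\ref{Theorem:LSEtheta} with the $\Sigma^{-1}$ weighting inserted. Your score/Hessian expansion and martingale CLT are exactly that argument spelled out.

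Two remarks on where you go beyond the paper. First, your pre-whitening reduction $\bY_t=\Sigma_1^{-1/2}\bX_t\Sigma_2^{-1/2}$ is a nice device the paper does not use; it gives a clean way to import the LSE results, though as you note the identifiability normalizations do not transfer cleanly, so it is best kept as a sanity check rather than the main line. Second, your reading of $\sum_i\h{\delta}_i\h{\delta}_i'$ as the imprint of the unit-Frobenius normalizations $\|\bA_1\|_F=\|\bB_{i1}\|_F=1$ is correct and more explicit than anything the paper writes: in the paper this term simply appears in the definition of $\bH_1,\bH_2$ and in the phrase ``after some algebra'' in the proof of Theorem~\ref{Theorem:LSEtheta}. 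One small sharpening: the paper actually proves the rate $\hat\Sigma_i-\Sigma_i=O_p(T^{-1/2})$ rather than mere consistency, and this is what makes the replacement of $\hat\Sigma_j$ by $\Sigma_j$ an $o_p(T^{-1/2})$ perturbation straightforward; your claim that $o_p(1)$ suffices would need an extra adaptivity argument.
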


\begin{theorem}\label{Theorem:MLEtildebeta}
Assume the same conditions as Theorem~\ref{Theorem:MLEtheta}. It holds that
\begin{align}\label{Formula:MLEtildebeta}
\begin{bmatrix}
     \vec\left((\tilde\bbeta_1^{\text{ml}} - \bbeta_1)' (T\h{\gamma}_1, T^{3/2}\h{\tau}_1)\right) \\
     \vec\left((T\h{\gamma}_2, T^{3/2}\h{\tau}_2)' (\tilde\bbeta_2^{\text{ml}} - \bbeta_2)\right)
\end{bmatrix}
\overset{w}{\longrightarrow}
    \left(\int_0^1 \bG \Sigma^{-1}\bG' \mathrm{d}u \right)^{-1} \int_0^1 \bG \Sigma^{-1}\mathrm{d}\bW.
\end{align}
\end{theorem}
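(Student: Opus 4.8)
The plan is to follow the proof of Theorem~\ref{Theorem:LSEtildebeta} essentially line by line, the single structural change being that every unweighted sample cross-moment is replaced by its $\Sigma_e^{-1}$-weighted analogue; this is exactly what propagates the extra factor $\Sigma^{-1}$ into the limit, the MLE being the efficient, GLS-type counterpart of the LSE. I would carry this out in four steps.

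\emph{Step 1 (consistency of the nuisance estimates).} At a fixed point of the alternating MLE recursion, record that $\hat\balpha_i\to\balpha_i$, $\hat\bB_{ij}\to\bB_{ij}$, $\hat\bD\to\bD$ — these follow from Theorem~\ref{Theorem:MLEtheta} and the preliminary consistency argument behind it — and that $\hat\Sigma_1\to\Sigma_1$, $\hat\Sigma_2\to\Sigma_2$ up to the scalar ambiguity of the Kronecker factorization, so that the normalized $\tilde\Sigma_{\epsilon\epsilon}$ converges to $\Sigma_2\otimes\Sigma_1$ up to scale. This licenses replacing all estimated weights by the true $\Sigma_1,\Sigma_2$ at no cost in the leading term: $\tilde\bbeta_i$ enters every residual only through a super-consistent combination, and $T^{-1}$ times the $\bbeta_i$-direction moments of $\bX_{t-1}$ dominates the error caused by plugging in $\hat\Sigma_j$, $\hat\balpha$, $\hat\bB_i$, $\hat\bD$.

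\emph{Step 2 (the weighted estimating equation).} Holding $\bA_2,\Sigma_2,\bB_{ij}$ fixed, $\hat\bbeta_1$ is characterized, after concentrating out the stationary regressors $\bz_{tj}$ and the covariance, by the generalized eigenvalue problem for $\tilde\bS_{xy.z}\,\tilde\Sigma_{\epsilon\epsilon}^{-1}\,\tilde\bS_{yx.z}$ relative to $\tilde\bS_{xx.z}$, exactly as in the reduced-rank regression of \cite{johansen1995likelihood}. Now $\tilde\bS_{xx.z}$ and $\tilde\bS_{yx.z}$ already carry a $\Sigma_2^{-1}$ inside (from the definitions of $\bx_{tj},\by_{tj}$), and $\tilde\Sigma_{\epsilon\epsilon}^{-1}\to\Sigma_2\otimes\Sigma_1^{-1}$ up to scale, so the weight multiplying the I(1) regressor in the score for $\hat\bbeta_1$ is precisely $\Sigma_2^{-1}\otimes\Sigma_1^{-1}=\Sigma_e^{-1}$. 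This is the only place the argument genuinely departs from the LSE proof, where the corresponding weight is the identity, and it is the mechanism producing $\Sigma^{-1}$ in~\eqref{Formula:MLEtildebeta}.

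\emph{Steps 3--4 (rotation, limits, assembly).} Substitute $\tilde\bbeta_i=\bbeta_i+\bB_{Ti}\bU_{Ti}$ and pass to the rotated coordinates $(\bbeta_i,\h{\tau}_i,\h{\gamma}_i)$, splitting each moment matrix into its stationary block (the $\bbeta_i$-directions, of order $T$) and its I(1) block (the $\h{\tau}_i,\h{\gamma}_i$-directions, of order $T^2$ after the $\bB_{Ti}$-scaling, the $\h{\tau}_i$ piece trend-dominated at the faster rate $T^{3/2}$). Using Lemma~\ref{G} together with $T^{-1/2}\bB_{T1}'(\bX_{[Tu]}-\bar\bX_T)\bA_2\otimes\balpha_1'\to\bG_1(u)$ and its $i=2$ analogue, the $\Sigma$-weighted I(1)-block second moment converges to $\int_0^1\bG\,\Sigma^{-1}\bG'\,\mathrm{d}u$, while the cross term with $\Delta\vec(\bX_t)$ converges, by the martingale CLT applied to $\sum_t(\text{rotated regressor})\otimes(\Sigma^{-1}\text{-weighted }\bepsilon_t)$, to $\int_0^1\bG\,\Sigma^{-1}\,\mathrm{d}\bW$; the stationary-block and cross-block terms are $O_p(1)$ against $O_p(T)$ (resp.\ $O_p(T^{3/2})$) leading terms and vanish — the usual super-consistency/block-diagonality that decouples $\bbeta$ from the $O_p(1/\sqrt T)$ estimation of $\balpha,\bB_i,\bD,\Sigma$. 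Solving the weighted normal equation then gives the left-hand side of~\eqref{Formula:MLEtildebeta} converging to $\left(\int_0^1\bG\Sigma^{-1}\bG'\,\mathrm{d}u\right)^{-1}\int_0^1\bG\Sigma^{-1}\,\mathrm{d}\bW$ in the stated block form, and joint convergence of the $i=1,2$ blocks holds because both are continuous functionals of the same $(\bW,\bepsilon)$-driven limit.

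The main obstacle is tracking the $\Sigma^{-1}$ weight cleanly through the concentration of the likelihood and the rotation — in particular verifying that (a) replacing $\hat\Sigma_j,\hat\balpha,\hat\bB_i,\hat\bD$ by their limits is $o_p$ of the leading term, and (b) the trend direction $\h{\tau}_i$ (rate $T^{3/2}$) and the random-walk direction $\h{\gamma}_i$ (rate $T$) stay cleanly separated even under the GLS weighting, so that the single object $\bG$ with its $\h{\tau},\h{\gamma}$ blocks still carries all the information. Everything else is a routine repetition of the Theorem~\ref{Theorem:LSEtildebeta} argument with $\Sigma^{-1}$ inserted in the obvious places.
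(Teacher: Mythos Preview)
Your proposal is correct and mirrors the paper's own treatment: the paper proves only the MLE rate lemma (the analogue of Lemma~\ref{LSE_rate} with $\hat\Sigma_i$ included) and then states that the proofs of Theorems~\ref{Theorem:MLEtheta}--\ref{Theorem:MLEbb} are ``very similar'' to those of the LSE theorems, with the sole change that the likelihood~\eqref{like} inserts $\Sigma_e^{-1}=\Sigma_2^{-1}\otimes\Sigma_1^{-1}$ into every cross-moment---exactly the GLS-weighting mechanism you describe in Step~2. One terminological nit: the convergence $T^{-1}\sum_t\bM_{t-1}\Sigma^{-1}\vec(\bE_t)\Rightarrow\int_0^1\bG\Sigma^{-1}\,\mathrm d\bW$ in your Step~3 is weak convergence to a stochastic integral (the $\Sigma^{-1}$-weighted analogue of~\eqref{ME} in Lemma~\ref{GG}), not a martingale CLT to a Gaussian limit; otherwise your outline is more detailed than, but structurally identical to, the paper's argument.
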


\begin{corollary}
Under the same assumptions of Theorem~\ref{Theorem:MLEtildebeta}, it holds that
\begin{align*}
    \vec\left((\tilde\bbeta_1^{\text{ml}} - \bbeta_1)' (T\h{\gamma}_1, T^{3/2}\h{\tau}_1)\right) \overset{w}{\longrightarrow}
    \left(\int_0^1 \tilde\bG_{1.2} \Sigma^{-1} \tilde\bG_{1.2}' \mathrm{d}u \right)^{-1} \int_0^1 \tilde\bG_{1.2} \Sigma^{-1} \mathrm{d}\bW,\\
    \vec\left((T\h{\gamma}_2, T^{3/2}\h{\tau}_2)' (\tilde\bbeta_2^{\text{ml}} - \bbeta_2)\right) \overset{w}{\longrightarrow}
    \left(\int_0^1 \tilde\bG_{2.1} \Sigma^{-1} \tilde\bG_{2.1}' \mathrm{d}u \right)^{-1} \int_0^1 \tilde\bG_{2.1} \Sigma^{-1} \mathrm{d}\bW.
\end{align*}
where $\tilde\bG_{i.j}(u) = \bG_i(u) - \int_{0}^{1} \bG_i \Sigma^{-1}\bG_j' du \left[ \int_{0}^{1} \bG_j \Sigma^{-1} \bG_j' du \right]^{-1} \bG_j(u)$.
\end{corollary}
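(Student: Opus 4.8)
The plan is to derive both marginal limits directly from the joint limit in Theorem~\ref{Theorem:MLEtildebeta} by a partitioned-inverse (Frisch--Waugh) computation, in complete parallel with the passage from Theorem~\ref{Theorem:LSEtildebeta} to Corollary~\ref{Corollary:LSEtildebeta}, the only change being that the Euclidean inner product is replaced by the $\Sigma^{-1}$-weighted one. First I would record the block structure. Write $\bG = (\bG_1'\,;\,\bG_2')'$ as in the definition preceding this subsection, and set
\[
\bM_{ij} = \int_0^1 \bG_i \Sigma^{-1} \bG_j'\,\mathrm{d}u, \qquad \bN_i = \int_0^1 \bG_i \Sigma^{-1}\,\mathrm{d}\bW, \qquad i,j\in\{1,2\}.
\]
Then $\int_0^1 \bG\Sigma^{-1}\bG'\,\mathrm{d}u$ is the $2\times 2$ block matrix with blocks $\bM_{ij}$ (with $\bM_{22}$ symmetric and a.s.\ positive definite, and $\bM_{21} = \bM_{12}'$ since $\Sigma$ is symmetric), while $\int_0^1 \bG\Sigma^{-1}\,\mathrm{d}\bW = (\bN_1'\,;\,\bN_2')'$.

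Since a coordinate projection is continuous, the joint weak convergence \eqref{Formula:MLEtildebeta} implies that each block of the left-hand side converges weakly to the corresponding block of the right-hand side. In particular the first $(d_1-r_1)r_1$-dimensional block gives
\[
\vec\!\left((\tilde\bbeta_1^{\text{ml}} - \bbeta_1)'(T\h\gamma_1, T^{3/2}\h\tau_1)\right) \overset{w}{\longrightarrow} \left[\left(\int_0^1 \bG\Sigma^{-1}\bG'\,\mathrm{d}u\right)^{-1}\int_0^1 \bG\Sigma^{-1}\,\mathrm{d}\bW\right]_{[1]},
\]
and by the standard $2\times2$ block-inversion formula this first block equals
\[
\left(\bM_{11} - \bM_{12}\bM_{22}^{-1}\bM_{21}\right)^{-1}\left(\bN_1 - \bM_{12}\bM_{22}^{-1}\bN_2\right).
\]

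It then remains to identify the two factors with the quantities built from $\tilde\bG_{1.2}$. From the definition, $\tilde\bG_{1.2}(u) = \bG_1(u) - \bM_{12}\bM_{22}^{-1}\bG_2(u)$; multiplying out $\int_0^1 \tilde\bG_{1.2}\Sigma^{-1}\tilde\bG_{1.2}'\,\mathrm{d}u$ and using $\bM_{22}' = \bM_{22}$, $\bM_{21} = \bM_{12}'$ collapses the cross terms to yield $\int_0^1 \tilde\bG_{1.2}\Sigma^{-1}\tilde\bG_{1.2}'\,\mathrm{d}u = \bM_{11} - \bM_{12}\bM_{22}^{-1}\bM_{21}$, and the same substitution gives $\int_0^1 \tilde\bG_{1.2}\Sigma^{-1}\,\mathrm{d}\bW = \bN_1 - \bM_{12}\bM_{22}^{-1}\bN_2$. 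This proves the first displayed limit. The second follows by interchanging the indices $1$ and $2$ (equivalently, reading off the other diagonal block of the block inverse), using $\tilde\bG_{2.1}(u) = \bG_2(u) - \bM_{21}\bM_{11}^{-1}\bG_1(u)$.

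There is no genuine analytic obstacle; the work is entirely bookkeeping. The points to keep track of are: placing the $\Sigma^{-1}$ weight consistently in every inner product; matching block dimensions against the vectorizations on the left (sizes $(d_1-r_1)r_1$ and $(d_2-r_2)r_2$); and noting that the a.s.\ invertibility of $\bM_{22}$ and of the Schur complement $\bM_{11} - \bM_{12}\bM_{22}^{-1}\bM_{21}$ is inherited from the a.s.\ invertibility of $\int_0^1 \bG\Sigma^{-1}\bG'\,\mathrm{d}u$ already used in Theorem~\ref{Theorem:MLEtildebeta}, so that all inverses appearing above are well defined almost surely.
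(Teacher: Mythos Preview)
Your proposal is correct and follows essentially the same route as the paper. The paper does not spell out a separate proof for the MLE corollary but refers back to the proof of Corollary~\ref{Corollary:LSEtildebeta}, which is exactly the Schur-complement/block-inverse computation you carry out, now with the $\Sigma^{-1}$ weight inserted throughout.
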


\begin{theorem}\label{Theorem:MLEhatbeta}
Assume the same conditions as Theorem~\ref{Theorem:MLEtheta}. Let $\hat{\bbeta}_i^{\text{ml}} = \hat{\bbeta}_i^{\text{ml}} (\bc_i' \hat{\bbeta}_i^{\text{ml}})^{-1}$, such that $\bc_i' \hat{\bbeta}_{i}^{\text{ml}} = \bI_{r_i}$. 
Then we have
\begin{align*}
T\vec{\left((\hat{\bbeta}_1^{\text{ml}})' - \bbeta_{1c}'\right)}
    \overset{w}{\longrightarrow}
    \left(\left((\bI_{d_1} - \bbeta_{1c} \bc_1') \bar{\h{\gamma}}_{1} \right)\otimes \bI_{r_1} \right)\left(\int_0^1 \tilde\bQ_{1.2} \Sigma^{-1} \tilde\bQ_{1.2}' \mathrm{d}u \right)^{-1} \int_0^1 \tilde\bQ_{1.2} \Sigma^{-1} \mathrm{d}\bW, \\
T\vec{\left(\hat{\bbeta}_2^{\text{ml}} - \bbeta_{2c}\right)}
    \overset{w}{\longrightarrow}
    \left(\bI_{r_2} \otimes \left((\bI_{d_2} - \bbeta_{2c} \bc_2') \bar{\h{\gamma}}_{2} \right)\right)\left(\int_0^1 \tilde\bQ_{2.1} \Sigma^{-1} \tilde\bQ_{2.1}' \mathrm{d}u \right)^{-1} \int_0^1 \tilde\bQ_{2.1} \Sigma^{-1} \mathrm{d}\bW.
\end{align*}
where $\tilde\bQ_{i.j}(u) = \tilde\bG_{i.j,1}(u) - \int_{0}^{1} \tilde\bG_{i.j,1} \Sigma^{-1} \tilde\bG_{i.j,2}' du \left[ \int_{0}^{1} \tilde\bG_{i.j,2} \Sigma^{-1} \tilde\bG_{i.j,2}' du \right]^{-1} \tilde\bG_{i.j,2}(u)$, and $\tilde\bG_{i.j,1}$, $\tilde\bG_{i.j,2}$ are block sub-matrices such that $\tilde\bG_{i.j}' = \{\tilde\bG_{i.j,1}', \tilde\bG_{i.j,2}'\}$.
\end{theorem}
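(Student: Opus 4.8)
The plan is to combine the asymptotic distribution of the normalized estimator $\tilde\bbeta_i^{\text{ml}}$ established in the Corollary following Theorem~\ref{Theorem:MLEtildebeta} with a linear-algebraic change of normalization, exactly paralleling the argument used to pass from Theorem~\ref{Theorem:LSEtildebeta} to Theorem~\ref{Theorem:LSEhatbeta}. First I would record the algebraic identity relating the two normalizations: since both $\tilde\bbeta_i^{\text{ml}} = \hat\bbeta_i^{\text{ml}}(\bbeta_i'\hat\bbeta_i^{\text{ml}})^{-1}$ and $\hat\bbeta_{ic}^{\text{ml}} = \hat\bbeta_i^{\text{ml}}(\bc_i'\hat\bbeta_i^{\text{ml}})^{-1}$ share the same column space as $\hat\bbeta_i^{\text{ml}}$, we have $\hat\bbeta_{ic}^{\text{ml}} = \tilde\bbeta_i^{\text{ml}}(\bc_i'\tilde\bbeta_i^{\text{ml}})^{-1}$. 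Subtracting $\bbeta_{ic} = \bbeta_i(\bc_i'\bbeta_i)^{-1}$ and using $\bc_i'\bbeta_{ic} = \bI_{r_i}$, a first-order expansion gives
\begin{align*}
\hat\bbeta_{ic}^{\text{ml}} - \bbeta_{ic} = (\bI_{d_i} - \bbeta_{ic}\bc_i')(\tilde\bbeta_i^{\text{ml}} - \bbeta_i) + o_p(\|\tilde\bbeta_i^{\text{ml}} - \bbeta_i\|),
\end{align*}
because $\bc_i'(\tilde\bbeta_i^{\text{ml}} - \bbeta_i) = \bI_{r_i} - \bc_i'\bbeta_i$ is $O(1)$ only through the part of $\tilde\bbeta_i^{\text{ml}} - \bbeta_i$ that is annihilated after left-multiplication by $(\bI-\bbeta_{ic}\bc_i')$; the slow direction survives and the fast ($\h\tau_i$) direction is of smaller order, which is what reduces the limit to the $\h\gamma_i$-block only.

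Next I would project the super-consistency rates onto this identity. Theorem~\ref{Theorem:MLEtildebeta} and its corollary say that $(\tilde\bbeta_i^{\text{ml}} - \bbeta_i)$ converges at rate $T$ in the $\h\gamma_i$-direction and at rate $T^{3/2}$ in the $\h\tau_i$-direction. Writing $\tilde\bbeta_i^{\text{ml}} - \bbeta_i = \bar{\h\gamma}_i\,[\h\gamma_i'(\tilde\bbeta_i^{\text{ml}}-\bbeta_i)] + \bar{\h\tau}_i\,[\h\tau_i'(\tilde\bbeta_i^{\text{ml}}-\bbeta_i)]$ and multiplying through by $T$, the second term is $O_p(T^{-1/2}) \to 0$, so only the $\bar{\h\gamma}_i$ component contributes to the $T$-scaled limit. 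For $i=1$, transposing and vectorizing,
\begin{align*}
T\vec\!\left((\hat\bbeta_{1c}^{\text{ml}})' - \bbeta_{1c}'\right) = \left((\bI_{d_1} - \bbeta_{1c}\bc_1')\bar{\h\gamma}_1 \otimes \bI_{r_1}\right)\,T\vec\!\left((\tilde\bbeta_1^{\text{ml}} - \bbeta_1)'\h\gamma_1\right) + o_p(1),
\end{align*}
using $\vec(\bA\bX\bB') = (\bB\otimes\bA)\vec(\bX)$. I would then substitute the corollary's limit for $\vec((\tilde\bbeta_1^{\text{ml}}-\bbeta_1)'(T\h\gamma_1, T^{3/2}\h\tau_1))$ and extract its first block of rows — i.e., the block dual to $\h\gamma_1$. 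Extracting the $\h\gamma_1$-block from the joint limit $\left(\int_0^1 \tilde\bG_{1.2}\Sigma^{-1}\tilde\bG_{1.2}'\,\mathrm{d}u\right)^{-1}\int_0^1 \tilde\bG_{1.2}\Sigma^{-1}\mathrm{d}\bW$ is precisely a further partial-regression (Frisch–Waugh) step inside the already-partialled limit: splitting $\tilde\bG_{1.2}' = \{\tilde\bG_{1.2,1}', \tilde\bG_{1.2,2}'\}$ along the $(\h\gamma_1,\h\tau_1)$ split and using the block-inverse formula for $\int \tilde\bG_{1.2}\Sigma^{-1}\tilde\bG_{1.2}'$ yields exactly the residual process $\tilde\bQ_{1.2}$ defined in the statement. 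The case $i=2$ is identical with the Kronecker factors in the other order.

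The main obstacle, and the step I would spend the most care on, is the block-extraction/partial-regression algebra: one must verify that selecting the $\h\gamma_i$-rows of the inverse-of-integrated-Gram times integrated-$\mathrm{d}\bW$ object genuinely produces the $\tilde\bG\Sigma^{-1}\tilde\bG$-residualized functional $\tilde\bQ_{i.j}$ rather than something only proportional to it, and that the leftover cross-terms involving $\h\tau_i$ vanish at the right rate rather than contributing a bias. This is the same computation as in the LSE case (Theorem~\ref{Theorem:LSEhatbeta}) but carried out with the weight $\Sigma^{-1}$ threaded through every inner product, so I would first establish a clean lemma: for a partitioned limiting Gaussian functional of the form $(\int \bG\Sigma^{-1}\bG')^{-1}\int\bG\Sigma^{-1}\mathrm{d}\bW$ with $\bG' = \{\bG_a', \bG_b'\}$, the $a$-block equals $(\int \bG_{a\cdot b}\Sigma^{-1}\bG_{a\cdot b}')^{-1}\int \bG_{a\cdot b}\Sigma^{-1}\mathrm{d}\bW$ where $\bG_{a\cdot b}$ is the $\Sigma^{-1}$-residual of $\bG_a$ on $\bG_b$. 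Everything else — the first-order expansion, the rate bookkeeping, the vectorization identities — is routine once that lemma is in hand, and the theorem follows by applying it twice (once to go from the full $\bG$ to $\tilde\bG_{i.j}$, once more to peel off the $\h\tau_i$ block) and then invoking the continuous mapping theorem.
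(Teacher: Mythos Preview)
Your proposal is correct and follows essentially the same route as the paper. The paper treats all the MLE asymptotic theorems together by stating that the proofs mirror the corresponding LSE proofs with $\Sigma^{-1}$ inserted as a weight, and the LSE proof of Theorem~\ref{Theorem:LSEhatbeta} proceeds exactly as you outline: write the first-order expansion $T(\hat\bbeta_{ic}-\bbeta_{ic}) = (\bI-\bbeta_{ic}\bc_i')\,T(\tilde\bbeta_i-\bbeta_i) + O_p(T\|\tilde\bbeta_i-\bbeta_i\|^2)$, decompose $\tilde\bbeta_i-\bbeta_i$ along $(\bar{\h\gamma}_i,\bar{\h\tau}_i)$, observe that the $\h\tau_i$-component is $O_p(T^{-1/2})$ after scaling by $T$ and therefore vanishes, and conclude that the limit is carried by $(\bI-\bbeta_{ic}\bc_i')\bar{\h\gamma}_i\cdot T\bU_{1Ti}$; your additional Frisch--Waugh/Schur-complement lemma to identify the $\h\gamma_i$-block of the $\tilde\bG_{1.2}$ limit with the $\tilde\bQ_{1.2}$ functional is a step the paper leaves implicit, so you are in fact being more explicit than the original proof.
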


\begin{theorem}\label{Theorem:MLEbb}
Assume the same conditions as Theorem~\ref{Theorem:MLEtheta}. It holds that
\begin{align}
\begin{split}
    &T
    \vec\left(  (\hat{\mathbb{P}}_{\beta_1}^{\text{ml}} - \mathbb{P}_{\beta_1})(\h{\gamma}_1, T^{1/2}\h{\tau}_1) \right)
    \overset{w}{\longrightarrow} \\
    &\left(\bI_{d_1} \otimes \bbeta_1 + (\bbeta_1 \otimes \bI_{d_1})\bP_{d_1-r_1, r_1}\right) \left(\int_0^1 \tilde\bG_{1.2} \Sigma^{-1} \tilde\bG_{1.2}' \mathrm{d}u \right)^{-1} \int_0^1 \tilde\bG_{1.2} \Sigma^{-1} \mathrm{d}\bW, \label{Formula:mletildebeta1}
\end{split}
\end{align}
where $\bP_{d_1-r_1, r_1}$ is the permutation matrix such that $\vec(\bbeta_{1\bot}) = \bP_{d_1-r_1, r_1} \vec(\bbeta_{1\bot}')$.
\end{theorem}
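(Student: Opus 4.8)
The plan is to derive Theorem~\ref{Theorem:MLEbb} as the maximum-likelihood counterpart of the least-squares result Theorem~\ref{Theorem:LSEbb}: the two arguments are structurally the same, the only difference being that every Brownian functional acquires the weight $\Sigma^{-1}$, which is exactly the discrepancy between the limits in Theorem~\ref{Theorem:LSEtildebeta} and Theorem~\ref{Theorem:MLEtildebeta}. I treat the $\bbeta_1$ block; the $\bbeta_2$ block is symmetric. First I would exploit that $\hat{\mathbb{P}}_{\beta_1}^{\text{ml}}=\hat\bbeta_1^{\text{ml}}\bigl((\hat\bbeta_1^{\text{ml}})'\hat\bbeta_1^{\text{ml}}\bigr)^{-1}(\hat\bbeta_1^{\text{ml}})'$ depends on $\hat\bbeta_1^{\text{ml}}$ only through its column space, so I may replace $\hat\bbeta_1^{\text{ml}}$ by the $\bbeta_1$-normalized estimator $\tilde\bbeta_1^{\text{ml}}=\hat\bbeta_1^{\text{ml}}(\bbeta_1'\hat\bbeta_1^{\text{ml}})^{-1}$, giving $\hat{\mathbb{P}}_{\beta_1}^{\text{ml}}=\tilde\bbeta_1^{\text{ml}}\bigl((\tilde\bbeta_1^{\text{ml}})'\tilde\bbeta_1^{\text{ml}}\bigr)^{-1}(\tilde\bbeta_1^{\text{ml}})'$. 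Setting $\bN=\tilde\bbeta_1^{\text{ml}}-\bbeta_1=\bB_{T1}\bU_{T1}$, we have $\bbeta_1'\bN=\bzero$ by construction, and by Theorem~\ref{Theorem:MLEtildebeta} together with the orthogonality of $(\bbeta_1,\h{\gamma}_1,\h{\tau}_1)$ the direction-dependent rates $\h{\gamma}_1'\bN=O_p(T^{-1})$ and $\h{\tau}_1'\bN=O_p(T^{-3/2})$, so in particular $\bN=O_p(T^{-1})$.

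Next I would linearize the projection. Using $\bbeta_1'\bbeta_1=\bI_{r_1}$ and $\bbeta_1'\bN=\bzero$ one gets $(\tilde\bbeta_1^{\text{ml}})'\tilde\bbeta_1^{\text{ml}}=\bI_{r_1}+\bN'\bN$ and hence
\[
\hat{\mathbb{P}}_{\beta_1}^{\text{ml}}-\mathbb{P}_{\beta_1}=\bbeta_1\bN'+\bN\bbeta_1'+\bN\bN'-\bbeta_1\bN'\bN\bbeta_1'+O_p(\|\bN\|_F^3),
\]
a symmetric expansion with leading term $\bbeta_1\bN'+\bN\bbeta_1'$. Post-multiplying by the normalizer $(\h{\gamma}_1,T^{1/2}\h{\tau}_1)$ and scaling by $T$, the pieces carrying $\bbeta_1'$ on the right drop out (since $\bbeta_1'\h{\gamma}_1=\bbeta_1'\h{\tau}_1=\bzero$) and $\bN\bN'(\h{\gamma}_1,T^{1/2}\h{\tau}_1)=O_p(T^{-2})$ is negligible; recalling $\bN'(T\h{\gamma}_1,T^{3/2}\h{\tau}_1)=(\tilde\bbeta_1^{\text{ml}}-\bbeta_1)'(T\h{\gamma}_1,T^{3/2}\h{\tau}_1)$ this leaves
\[
T\bigl(\hat{\mathbb{P}}_{\beta_1}^{\text{ml}}-\mathbb{P}_{\beta_1}\bigr)(\h{\gamma}_1,T^{1/2}\h{\tau}_1)=\bbeta_1\,(\tilde\bbeta_1^{\text{ml}}-\bbeta_1)'(T\h{\gamma}_1,T^{3/2}\h{\tau}_1)+o_p(1).
\]
Finally I would vectorize this display: writing $\vec$ of the leading symmetric term $\bbeta_1\bN'+\bN\bbeta_1'$ with the Kronecker and commutation-matrix identities (the latter producing the matrix $\bP_{d_1-r_1,r_1}$) yields the constant prefactor $\bI_{d_1}\otimes\bbeta_1+(\bbeta_1\otimes\bI_{d_1})\bP_{d_1-r_1,r_1}$, and the stochastic factor is the weak limit of $\vec\bigl((\tilde\bbeta_1^{\text{ml}}-\bbeta_1)'(T\h{\gamma}_1,T^{3/2}\h{\tau}_1)\bigr)$, which by the corollary to Theorem~\ref{Theorem:MLEtildebeta} equals $\bigl(\int_0^1\tilde\bG_{1.2}\Sigma^{-1}\tilde\bG_{1.2}'\,\mathrm{d}u\bigr)^{-1}\int_0^1\tilde\bG_{1.2}\Sigma^{-1}\,\mathrm{d}\bW$; an application of the continuous mapping theorem, together with the $o_p(1)$ remainder, gives the stated limit.

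The hard part will be the bookkeeping in the linearization step: one has to verify that, once multiplied on the right by $(\h{\gamma}_1,T^{1/2}\h{\tau}_1)$ and scaled by $T$, all contributions of order smaller than $T^{-1}$ really vanish, which rests entirely on the direction-dependent rates of Lemma~\ref{G} --- the $T^{-1}$ behaviour along $\h{\gamma}_1$ against the $T^{-3/2}$ behaviour along $\h{\tau}_1$, which is why the same normalizer $(\h{\gamma}_1,T^{1/2}\h{\tau}_1)$ appearing in Theorems~\ref{Theorem:MLEtildebeta}--\ref{Theorem:MLEhatbeta} is the correct choice here --- and on assembling the constant matrix through the Kronecker/commutation-matrix algebra. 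Everything else --- the reduced-rank-regression algebra behind $\tilde\bbeta_1^{\text{ml}}$, and the convergence of the $\Sigma^{-1}$-weighted sample second moments to the stochastic integrals $\int_0^1\tilde\bG_{1.2}\Sigma^{-1}\tilde\bG_{1.2}'\,\mathrm{d}u$ and $\int_0^1\tilde\bG_{1.2}\Sigma^{-1}\,\mathrm{d}\bW$ --- is identical to what is already established for Theorem~\ref{Theorem:MLEtildebeta} and for the least-squares counterpart Theorem~\ref{Theorem:LSEbb}, and may be quoted.
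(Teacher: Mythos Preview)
Your argument is sound and proceeds by a genuinely different, more elementary route than the paper. The paper does not linearize the projection directly; instead it returns to the score equations for the likelihood, obtains a system of the form (\ref{bbeq}) with $\Sigma^{-1}$-weighting, and then invokes its rank-deficient inversion device (Lemma~\ref{inverse}) twice---once because the sample second-moment matrix $\bM_{t-1}\bM_{t-1}'$ annihilates $\vec(\bbeta_1')$, and again because the resulting coefficient matrix $\bS_1=\bI\otimes\bbeta_1\bbeta_1'+\bbeta_1\bbeta_1'\otimes\bI$ is itself singular---before extracting the limit. You sidestep all of this by passing to the $\bbeta_1$-normalized estimator $\tilde\bbeta_1^{\text{ml}}$, Taylor-expanding $\hat{\mathbb{P}}_{\beta_1}^{\text{ml}}$ about $\mathbb{P}_{\beta_1}$, and invoking the corollary to Theorem~\ref{Theorem:MLEtildebeta}. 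Your route is shorter, treats the $\bbeta_1$ and $\bbeta_2$ blocks separately, and needs nothing beyond the already-established $\tilde\bbeta_i^{\text{ml}}$ asymptotics; the paper's route stays closer to the estimating equations and handles the two blocks jointly.

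One loose end in your write-up: once you have shown $\bN\bbeta_1'(\h{\gamma}_1,T^{1/2}\h{\tau}_1)=\bzero$, vectorizing the surviving display $\bbeta_1\,\bN'(T\h{\gamma}_1,T^{3/2}\h{\tau}_1)$ produces the single-term prefactor $\bI_{d_1-r_1}\otimes\bbeta_1$, and no commutation matrix enters. The two-term expression $\bI_{d_1}\otimes\bbeta_1+(\bbeta_1\otimes\bI_{d_1})\bP_{d_1-r_1,r_1}$ is what arises if one vectorizes $\hat{\mathbb{P}}_{\beta_1}-\mathbb{P}_{\beta_1}$ \emph{before} applying the $(\h{\gamma}_1,T^{1/2}\h{\tau}_1)$ normalizer, since then both $\bbeta_1\bN'$ and $\bN\bbeta_1'$ contribute; this is in fact what the paper's own proof computes. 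Your expansion is internally consistent and delivers the cleaner single-term form; to reproduce the stated two-term prefactor you would need to vectorize prior to post-multiplication, or else note that the second summand acts trivially on the range of the stochastic factor here.
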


\section{Numerical Results}\label{SEC:simulation}

\subsection{Simulations}

In this section, we investigate the empirical performance of the proposed estimators under various simulation setups. The simulations compare the proposed alternating least square estimator $\hat\beta^{ls}$ (LSE), the maximum likelihood estimator $\hat\bbeta^{ml}$ (MLE), and the estimator of the cointegrated vector autoregressive model (CVAR) as a benchmark. 

The true ranks are taken as known. Given dimensions $d_i$ and ranks $r_i$, $i=1,2$, the observed data $\bX_t$ are simulated according to the model (\ref{model form 2}) with $k=1$. The coefficient matrix $\bA_1$ is generated according to $\bA_1 = \bQ_1 \Lambda \bQ_2'$, where the elements of the $r_1 \times r_1$ diagonal matrix $\Lambda$ are IID absolute standard normal random variables, and the $d_1 \times r_1$ orthogonal matrix $\bQ_1$ and $\bQ_2$ are randomly generated from the Haar measure. The matrix $\bA_2$ is generated in the same way. The matrices $\bB_{i1}$ are generated in a similar way except that $\Lambda$, $\bQ_1$ and $\bQ_2$ are $d_i \times d_i$ matrices. Note that $\rank\bA_j = r_j$, so we have the decomposition $\bA_i = \balpha_i \bbeta_i'$. Denote $\bbeta = \bbeta_2 \otimes \bbeta_1$, $\balpha = \balpha_2 \otimes \balpha_1$, $\bB = \bB_2 \otimes \bB_1$. After vectorization, the model~(\ref{vecCMAR}) without the constant term can be written in a compact form
\[\begin{pmatrix}
    \bbeta' \bX_t \\
    \Delta \bX_t
\end{pmatrix} =
\Phi
\begin{pmatrix}
    \bbeta' \bX_{t-1} \\
    \Delta \bX_{t-1}
\end{pmatrix}
+
\begin{pmatrix}
    \bbeta' \bE_{t} \\
    \bE_{t}
\end{pmatrix}, \ \text{where} \ \Phi := \begin{pmatrix}
    (\bI_{r_1r_2} + \bbeta' \balpha) & \bbeta' \bB \\
    \balpha & \bB
\end{pmatrix}.
\]
To generate an I(1) process, we set the spectral radius $\rho(\Phi) < 1$. The model with and without the constant term are both studied. If included the constant term, $\bD$ is generated randomly with entries are IID standard normal random variables, and $\|\bD\|_F = 0.8$. The error tensors $\bE_t$ are IID normal with covariance matrix $\Sigma_e:= \mathrm{Cov}[\vec(\mathcal{E}_t)]$, for which we consider two settings:

\begin{itemize}
    \item[(I)] $\Sigma_e = \bQ \Lambda \bQ'$, where the elements of the diagonal matrix $\Lambda$ are equally spaced over the interval $[1, 10]$, and $\bQ$ is a random orthogonal matrix generated from the Haar measure.
    \item[(II)] $\Sigma_e$ takes the form $\Sigma_e = \Sigma_2 \otimes \Sigma_1$, where $\Sigma_1$ and $\Sigma_2$ are generated similarly as the $\Sigma_e$ in Setting~I, except that the diagonal entries of $\Lambda$ are equally spaced over $[1, 5]$.
\end{itemize}
For each configuration of sample size $T$, dimensions $d_i$ and ranks $r_i$, we repeat the simulation 100 times, and show the box plots of the estimation error
\begin{equation*}
    \log \left(\|\hat{\mathbb{P}}_{\beta_1} - \mathbb{P}_{\beta_1}\|_s^2 + \|\hat{\mathbb{P}}_{\beta_2} - \mathbb{P}_{\beta_2}\|_s^2 \right),
\end{equation*}
where $\mathbb{P}_{\beta_i} = \bbeta_i (\bbeta_i' \bbeta_i)^{-1} \bbeta_i'$ is the projection matrix of $\bbeta_i$. For a particular simulation setting with multiple repetitions, the coefficient matrices $\bA_i$, $\bB_{1i}$, $i=1,2$. We consider the model with the constant term, under the two settings of $\Sigma_e$, represented by Figure~\ref{Figure:CMARSVD2} and Figure~\ref{Figure:CMARMLE2}, respectively. Also, Figure~\ref{Figure:CMARSVD1} under Setting~I and Figure~\ref{Figure:CMARMLE1} under Setting~II are simulated without the constant term.

\begin{figure}[!ht]
    \centering
    \includegraphics[width = \textwidth]{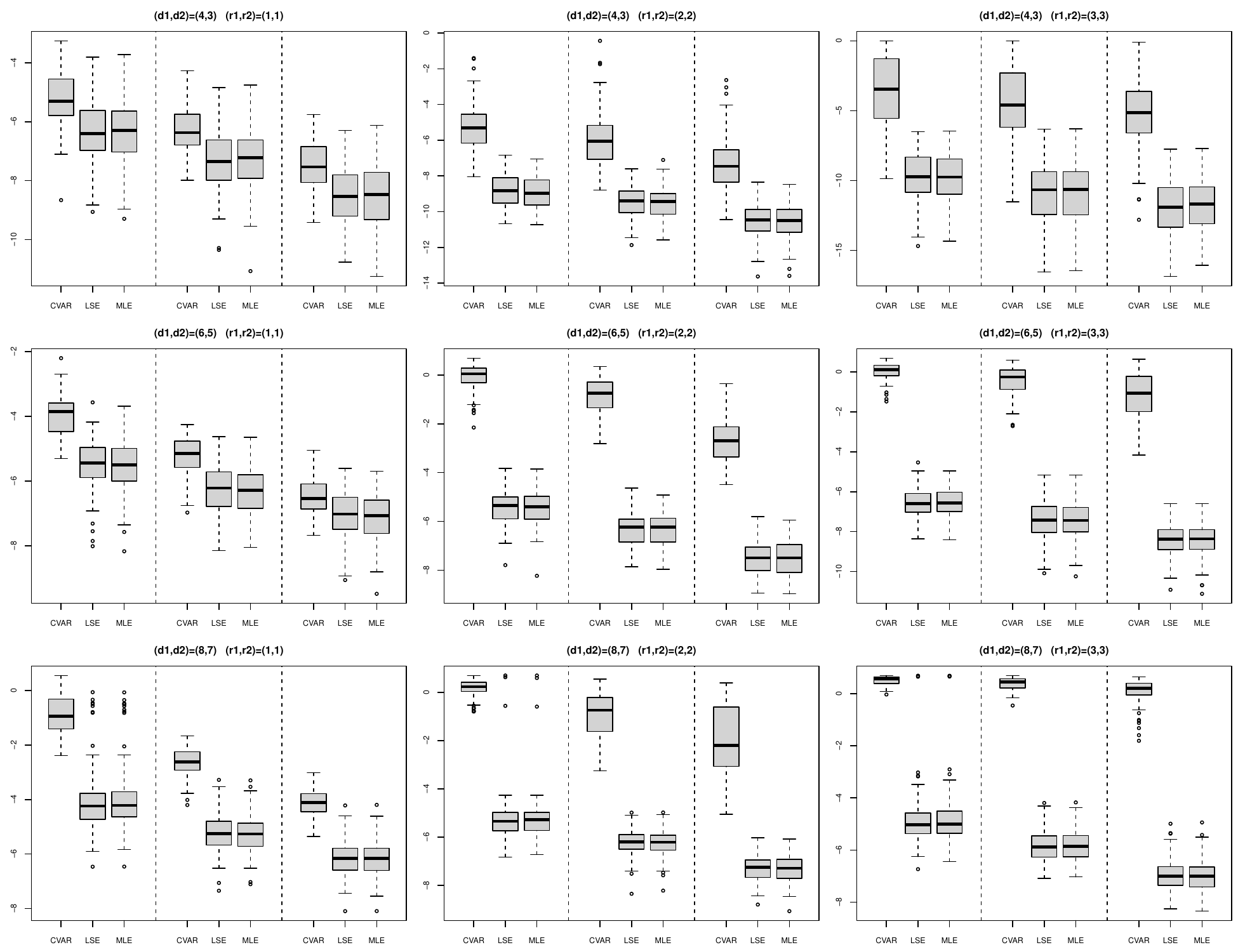}
    \caption{Comparison of CVAR, LSE and MLE. True model is the model~\ref{model form 2}, $k=1$, without the constant term, under setting~I. The three panels in each figure correspond to sample sizes 400, 600 and 1000 respectively.}
    \label{Figure:CMARSVD1}
\end{figure}

\begin{figure}[!ht]
    \centering
    \includegraphics[width = \textwidth]{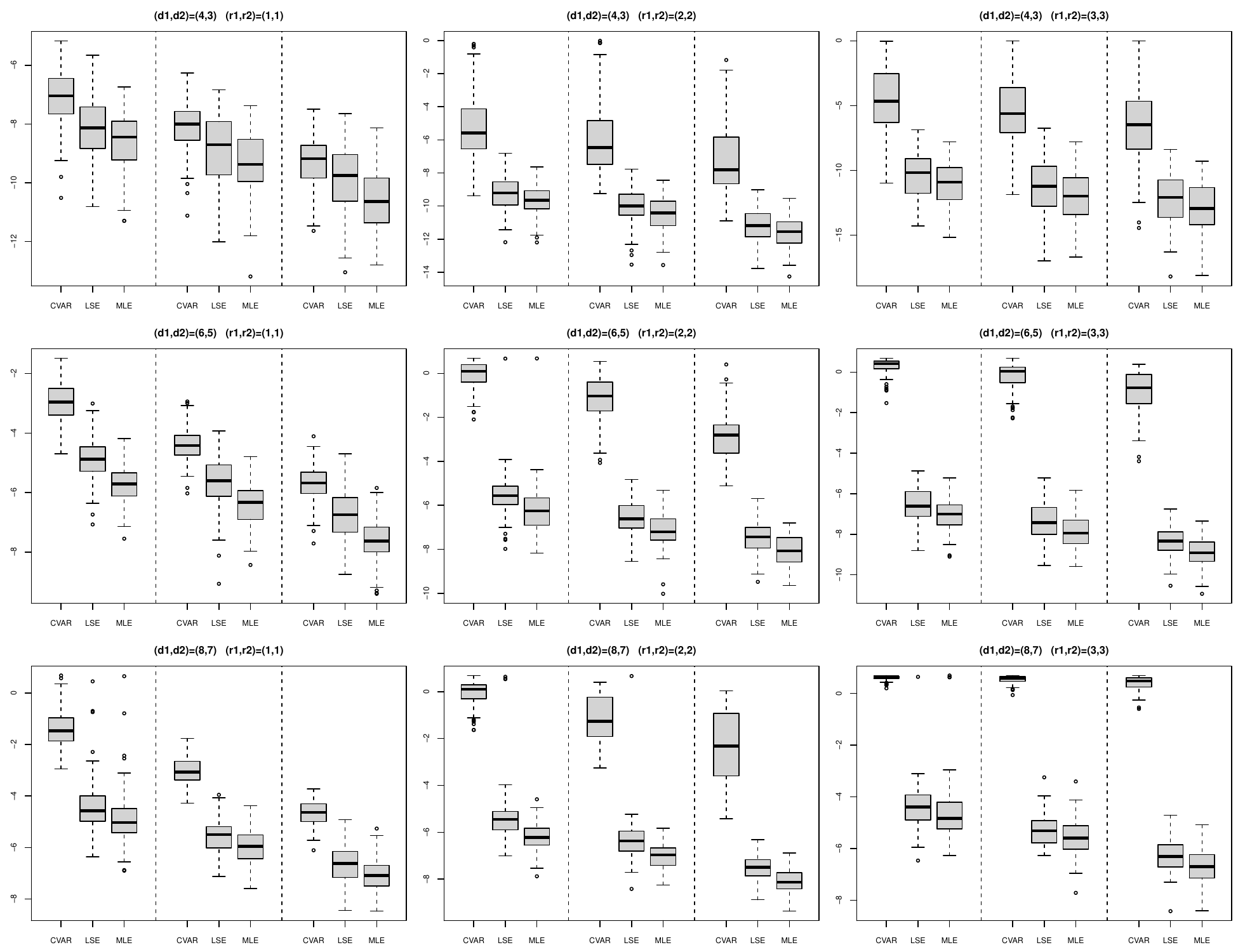}
    \caption{Comparison of CVAR, LSE and MLE. True model is the model~\ref{model form 2}, $k=1$, without the constant term, under setting~II. The three panels in each figure correspond to sample sizes 400, 600 and 1000 respectively.}
    \label{Figure:CMARMLE1}
\end{figure}

\begin{figure}[!ht]
    \centering
    \includegraphics[width = \textwidth]{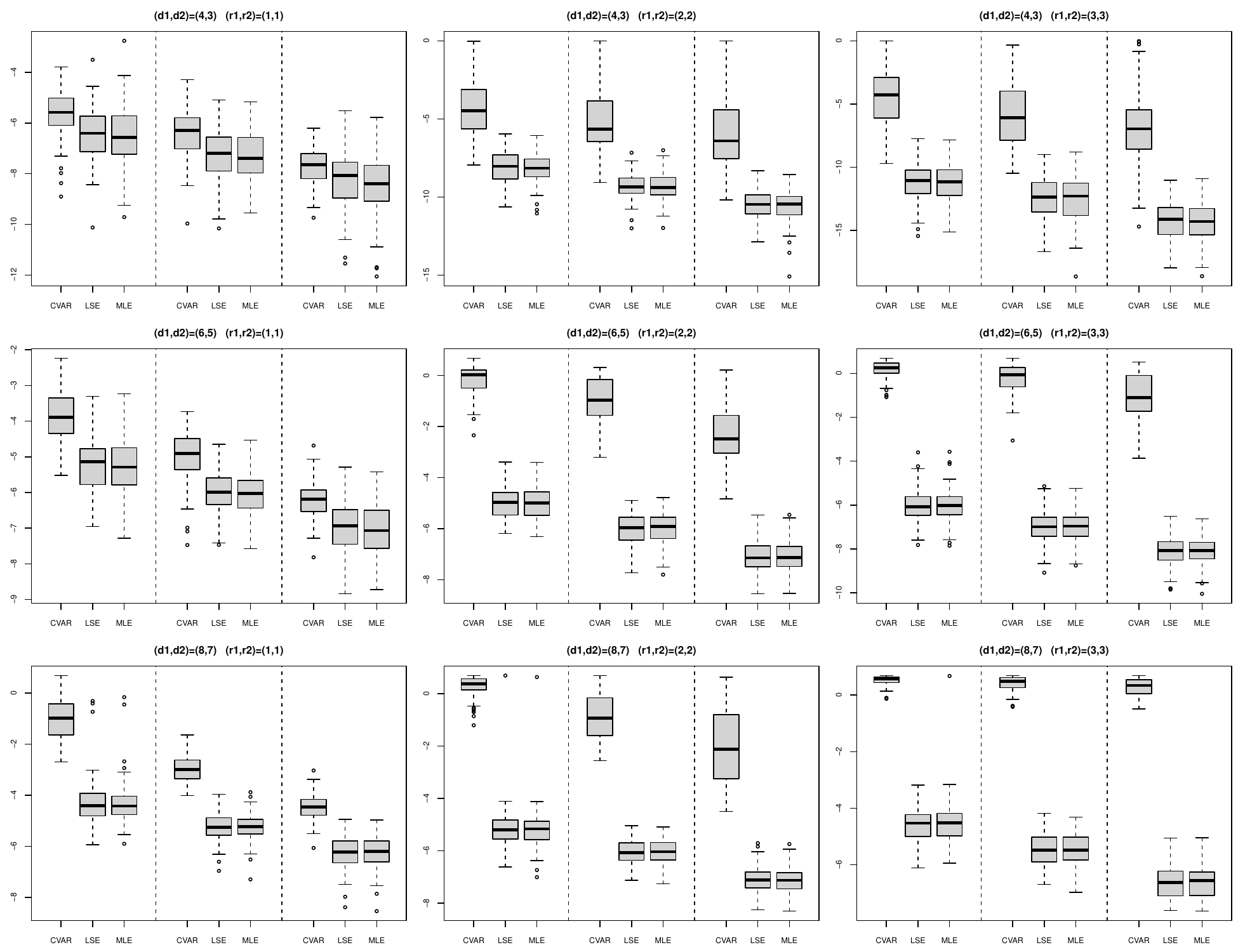}
    \caption{Comparison of CVAR, LSE and MLE. True model is the model~\ref{model form 2}, $k=1$, with the constant term, under setting~I. The three panels in each figure correspond to sample sizes 400, 600 and 1000 respectively.}
    \label{Figure:CMARSVD2}
\end{figure}

\begin{figure}[!ht]
    \centering
    \includegraphics[width = \textwidth]{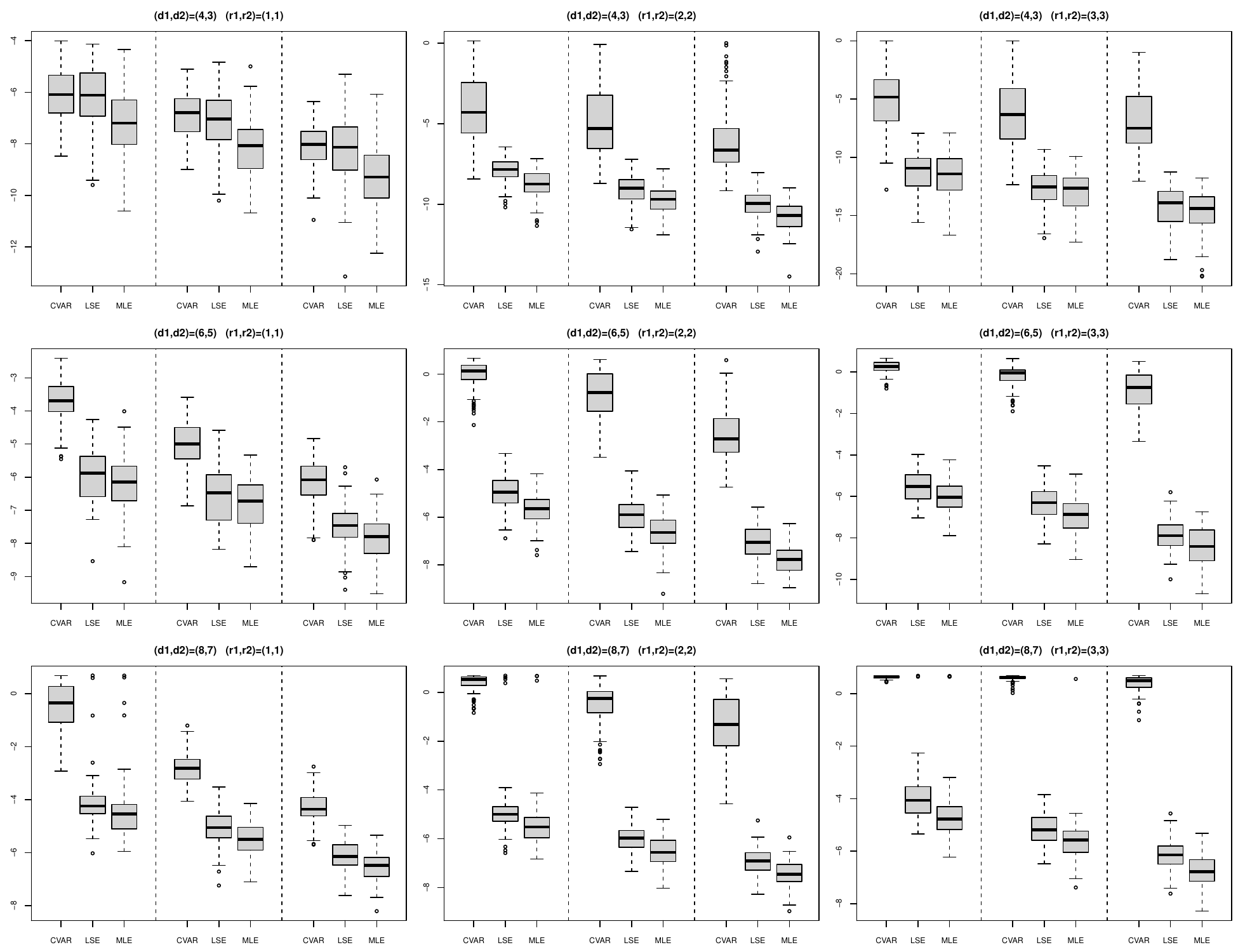}
    \caption{Comparison of CVAR, LSE and MLE. True model is the model~\ref{model form 2}, $k=1$, with the constant term, under setting~II. The three panels in each figure correspond to sample sizes 400, 600 and 1000 respectively.}
    \label{Figure:CMARMLE2}
\end{figure}

It is evident from both figures that the advantage of LSE and MLE over CVAR increases with higher dimensions. Furthermore, for fixed dimensions, this advantage remains significant as the ranks increase, since the CVAR does not consider the Kronecker structure of the cointegrating vectors. Additionally, from Figure~\ref{Figure:CMARSVD1}, under Setting~I of the error covariance matrix, MLE and LSE performs very similarly, even though the covariance matrix $\Sigma_e$ does not have the form~(\ref{sigma_e}). While Figure~\ref{Figure:CMARMLE1} explicitly depicts the advantages of MLE over LSE under Setting~II, which is assumed for MLE. The LSE and MLE also both performs much better than CVAR in the model including the constant term, as seen in Figure~\ref{Figure:CMARSVD2} with Setting~I and Figure~\ref{Figure:CMARMLE2} with Setting~II.

\subsection{Pairs Trading}\label{SEC:trading}

We apply our model to a well-known investment strategy on Wall Street, referred to as pairs trading, i.e., relative-value arbitrage strategies involving two or more securities. Pairs trading is a widely used trading strategy in the investment industry, popular among hedge funds and investment banks. The references provided are commonly cited works in the literature on pairs trading. \cite{vidyamurthy2004pairs} provides an introduction to pairs trading and related strategies, while \cite{gatev2006pairs} study the performance of pairs trading on US equities over 40 years, and \cite{krauss2017statistical} reviews the growing literature on pairs trading frameworks.

This strategy involves a two-step process. First, pairs of assets whose prices have historically moved together are identified during a formation period. Second, a trading strategy is designed to short the winner and buy the loser if the prices diverge and the spread widens. The key step is how to identify and formulate the pairs. Various methods have been proposed in the literature, including minimum distance between normalized historical prices \citep{gatev2006pairs}, and the cointegration analysis based on vector error-correction models \citep{vidyamurthy2004pairs, zhao2018mean}, or even simply by the intuition of experienced traders. Among all these methods, cointegration is a very interesting property that can be exploited in finance for trading. Idea is that while it may be difficult to predict individual stocks, it may be easier to predict relative behavior of stocks.

In this section, we focus on using the cointegrating vectors estimated by the proposed CMAR model for pairs formation. For example, we use daily data of Fama-French portfolios from July 2021 to Dec 2022 to formulate the pairs in the first 12 months, and subsequently implement a rolling trading strategy for the following six months. The data is publicly available at the data library maintained by Prof. Kennth R. French. Specifically, since the Fama-French portfolios are grouped into a $5 \times 5$ matrix-valued time series based on the Market Ratio (five levels from low to high) and Operating Profitability (five levels from low to high), we estimate the cointegrating vectors $\hat\bbeta=\hat\bbeta_2 \otimes \hat\bbeta_1$ using the CMAR model (\ref{model form 2}) with $k=1$ and including the constant term. Firstly, we employ the maximum likelihood estimation with $r_1=r_2=1$. Second, let $\hat\bbeta_{+}$ and $\hat\bbeta_{-}$ denote the positive and negative parts of $\hat\bbeta$, respectively. Then, the portfolios formulated by $\hat\bbeta' \vec(\bX_t) = \hat\bbeta'_{+} \bX_t - \hat\bbeta'_{-} \bX_t$ can be viewed as the spread between the two sub-portfolios, $\hat\bbeta'_{+}\bX_t$ and $\hat\bbeta'_{-}\bX_t$, where the negative sign corresponding to $\hat\bbeta'_{-} \bX_t$ is being shorted.

Since the cointegration model assumes that $\bbeta' \vec(\bX_t)$ is stationary, it can be viewed as an equilibrium relationship which will revert to its historical mean repeatedly. We design the trading strategy as follows: suppose the first day of trading is $t_0$, and denote $\hat\mu$, $\hat\sigma$ as the historical mean and standard deviation of $\{\hat\bbeta' \vec(\bX_{t})\}$ in the pairs formation period (the 12 months prior to $t_0$). Then we would have three situations on day one of trading $t_0$. (1) If $\hat\bbeta' \vec(\bX_{t_0}) \ge \hat\mu + s\hat\sigma$, we opened the position by longing $\hat\bbeta'_{-} \bX_{t_0}$ and shorting $\hat\bbeta'_{+} \bX_{t_0}$, then wait until the day $t_1$ that $\hat\bbeta' \vec(\bX_{t_1}) \le \hat\mu - s\hat\sigma$, we closed the position by selling all $\hat\bbeta'_{-} \bX_{t_0}$ and buying back all $\hat\bbeta'_{+} \bX_{t_0}$, where $s$ is the selected thresh-hold. (2) If on day one of trading, $t_0$, $\hat\bbeta' \vec(\bX_{t_0}) \le \mu - s\sigma$ the actions and positions would be reversed (long $\hat\bbeta'_{+} \bX_{t_0}$ and short $\hat\bbeta'_{-} \bX_{t_0}$ then close the position when $\hat\bbeta' \vec(\bX_{t_1}) \ge \hat\mu + s\hat\sigma$). (3) If neither of the above two cases occurs, we wait until one of these cases happens and denote that day as $t_0$, and then take actions accordingly.

In particular, we used a rolling strategy, such that each time when the position is closed at $t_1$, the parameters $\hat\bbeta$, $\hat\mu$, and $\hat\sigma$ are re-estimated using the data from the 12 months prior $t_1$. The money is then reinvested by opening long positions first, and $t_1$ becomes the new $t_0$, and the above trading rules are adopted again. The returns are compounded over time. We repeated this rolling strategy until the last day of the trading period $T$. On the last day $T$, the position was closed, regardless of the price.

Figure~\ref{pairsTrading2022} illustrates the movement of the spread $\hat\bbeta' \vec(\bX_t)$ from January 3, 2022 to December 30, 2022. The last position was opened on December 19, 2022 and closed on December 30, 2022. Although the spread did not touch the upper threshold on the closing day, it still increased, resulting in a positive cash flow for the final trade.

\begin{figure}[!ht]
    \centering
    \includegraphics[width = 0.9\textwidth]{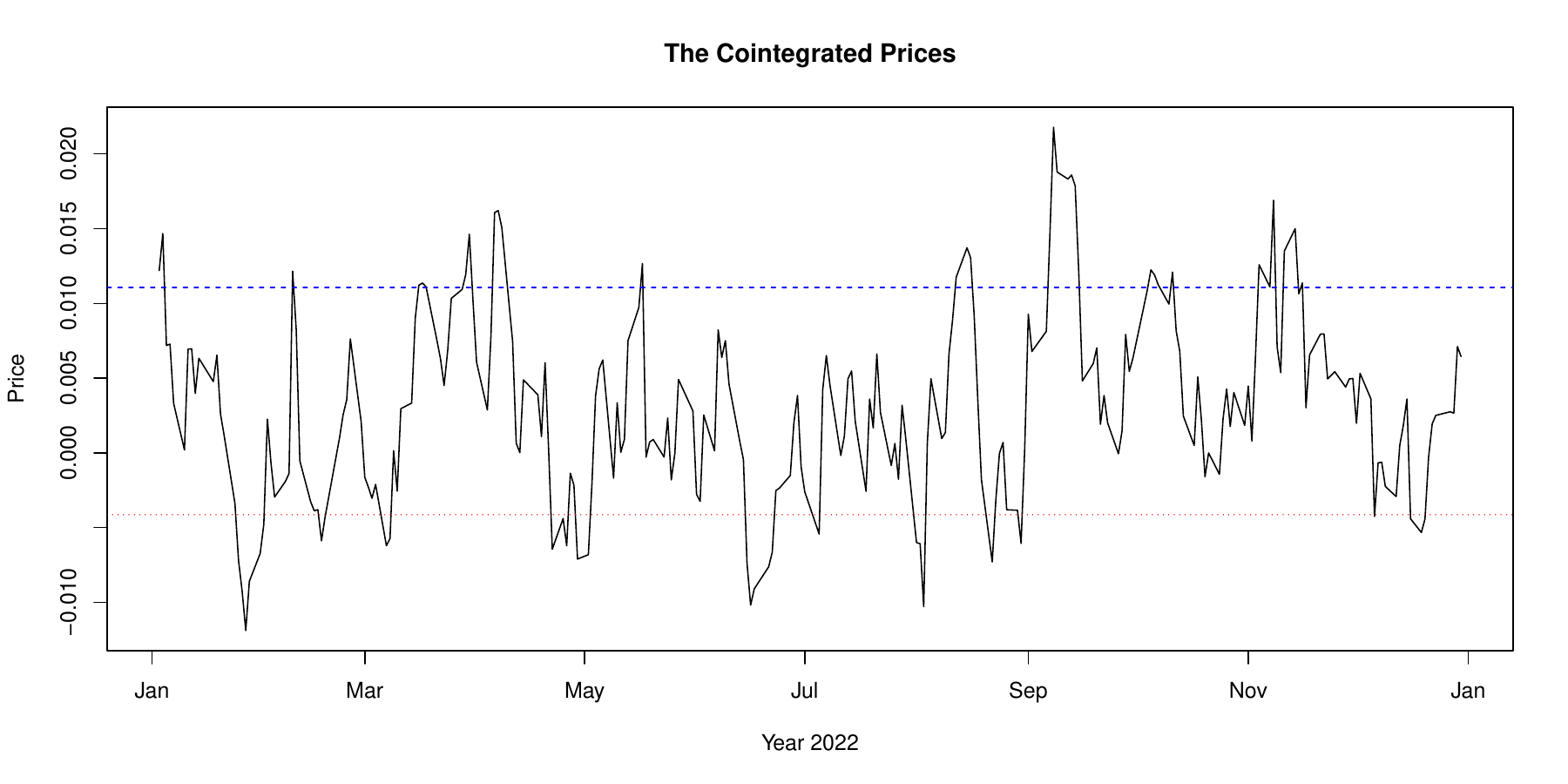}
    \caption{The cointegrated series $\hat{\bbeta}' \vec(\bX_t)$ in 2022, where $\hat{\bbeta}$ is estimated by the data from 2022/01/03 to 2022/12/30, the upper dashed line is $\hat\mu + s \hat{\sigma}$, the lower dashed line is $\hat{\mu} - s \hat{\sigma}$ with $s=1$.}
    \label{pairsTrading2022}
\end{figure}

The top row of Table~\ref{tradingReturns} displays the cumulative returns of our strategy during the trading period from July 1, 2022 to December 30, 2022. To provide a benchmark for comparison, we also tested the trading strategy proposed in \cite{gatev2006pairs}, in which equal value positions are opened, with one dollar long $\hat\bbeta_{-}$ and one dollar short $\hat\bbeta_{+}$ each time. Our strategy outperforms both the market and the benchmark for the year 2022. It is effective not only during bear markets but also when the market is relatively stable, as seen in the first half of 2018 and 2015. Furthermore, if we extend our trading period to the entire year, our strategy can beat the market in 2020.


\begin{table}[!ht]
\centering
\begin{tabular}{@{}lllll@{}}
\toprule
Trading Period   & $s=1.5$ & $s=1$ & $s=0.8$ &  S\&P \\ \midrule
\multirow{2}{*}{2022 Jul-Dec} & $3.041\%$  & $5.445\%$ & $0.897\%$ & $-7.477\%$ \\
                              & $2.421\%$  & $3.926\%$ & $0.568\%$ &  \\ \cmidrule(l){2-5}
\multirow{2}{*}{2022 Jan-Jun} & $1.913\%$  & $2.766\%$ & $2.806\%$ & $-20.869\%$ \\
                              & $1.519\%$  & $0.028\%$ & $0\%$ &  \\ \cmidrule(l){2-5}
\multirow{2}{*}{2018 Jan-Jun} & $1.815\%$  & $1.901\%$ & $3.532\%$ & $0.791\%$ \\
                              & $-0.431\%$  & $-0.519\%$ & $-1.698\%$ &  \\ \cmidrule(l){2-5}
\multirow{2}{*}{2015 Jan-Jun} & $0.721\%$  & $2.193\%$ & $1.474\%$ & $0.437\%$ \\
                              & $-0.706\%$  & $-0.308\%$ & $0.147\%$ &  \\ \cmidrule(l){2-5}
\multirow{2}{*}{2020} & $7.970\%$  & $16.730\%$ & $16.719\%$ & $16.26\%$ \\
                              & $-2.352\%$  & $15.726\%$ & $15.056\%$ &  \\ 
\bottomrule
\end{tabular}
\caption{The cumulative returns for different trading periods with different threshold values ($s=0.8, 1, 1.5$). There are two rows where the first row showing the returns of our strategy and the second for the equal value strategy.}
\label{tradingReturns}
\end{table}

The equal value strategy does not perform well since it uses pairs formulated by cointegration but does not trade them in proportion to their share ratio. While our strategy performs well for almost all threshold values, we still need to explore how to select the threshold in a consistent way. Moreover, since pairs trading can be viewed as a hedging method, making it difficult to beat the market in a strong bull market like 2021, however, it is particularly effective in the bear markets like 2022.

\section{Conclusions}\label{SEC:conclutions}

We introduce the cointegrated matrix autoregressive model (CMAR), which relies on an autoregressive
term involving bilinear cointegrating vectors in the error-correction form, with rank deficiency of the coefficient matrices. Comparing with the traditional vector error-correction model (CVAR) \citep{engle1987co, johansen1995likelihood}, the CMAR model considers the natural matrix structure of the data and hence can better reveals the underlying cointegration relations among the variables, leads to more efficient estimation. On the other hand, we use the MAR estimates as the warm-start initial values for the estimation of the CMAR model. Both LSE and MLE are studied, where the latter is considered under an additional assumption that the covariance tensor of the error matrix is separable. Our numerical analysis suggests that even if the separability assumption on the covariance tensor does not hold, MLE still has reasonable and almost equally good performance, comparing with LSE. On the other hand, MLE can perform much better when that assumption does stand. Therefore, we would recommend the use of MLE in practice.

There are a number of directions to extend the study of the cointegrated matrix autoregressive model.
Firstly, the bilinear form of the cointegrating vectors in the proposed model renders traditional testing methods inadequate, thus motivating the exploration of novel testing procedures specific to the matrix-form model. Moreover, studying processes with hybrid cointegrated orders would be more practical and interesting. However, testing and estimation for such models can be more challenging. More importantly, the asymptotic analysis has been carried out for the fixed dimensional case in
the current paper. It is interesting and important to study the model under the high dimensional
paradigm. Also, the model can be extended for tensor time series as well.

\newpage
\bibliography{mybibfile}
\newpage

\begin{appendices}\label{Appendix}

\section{Additional Theorems}\label{Appendix:theorems}

\subsection{Asymptotic for the constant term}

The asymptotic properties of the constant term $\hat\bd = \vec(\hat{\bD})$ can be obtained directly from \cite{johansen1995likelihood}. For convenience, we reproduce the relevant results below.
Define \[\bY_{t}(\bbeta)' = ((\bbeta' \vec'{(\bX_{t-1})}), \Delta\vec'{(\bX_{t-1})}, \ldots, \Delta\vec'{(\bX_{t-k})}),\]
We denote $\Omega = \Var{(\bY_{t}(\bbeta))}$, which can be calculated by the known parameters, since both $\bbeta' \vec{(\bX_{t-1})}$ and $\Delta\bX_{t}$ are stationary. And we have
\[\mathbb{E}(\bY_{t}(\bbeta))' = \vec'{(\bD)} \{(\bC' \Gamma' - \bI)\bar{\balpha}, \bC', \ldots, \bC'\} :=  \vec'{(\bD)} \xi'\]
where $\bC = \bbeta_{\bot} (\balpha_{\bot}' \Gamma \bbeta_{\bot})^{-1} \balpha_{\bot}'$, $\Gamma = \bI - \sum_{i=1}^{k} \bB_{i2} \otimes \bB_{i1}$. Let $\h{\tau} = \bC \bd$ and choose $\h{\gamma}$ orthogonal to $\bbeta$ and $\h{\tau}$, which $(\h{\tau}, \h{\gamma}, \bbeta)$ span the whole space. Define
\[
G_0(u) := \begin{pmatrix}
     \bar{\h{\gamma}}' \bC \bw(u) \\
     u
\end{pmatrix}, \
G(u) := G_0(u) - \int_0^1 G_0(u) du =
\begin{pmatrix}
     \bar{\h{\gamma}}' \bC \left(\bw(u) - \int_0^1 \bw(u)\right) \\
     u - 1/2
\end{pmatrix}.
\]

\begin{theorem}
The asymptotic distribution of the constant term $\hat\bd = \vec(\hat{\bD})$ is found from
\begin{align}\label{CLTD}
\begin{split}
    \sqrt{T}(\hat{\bd} - \bd) \to& N(\bzero, \Sigma \bd' \xi' \Omega^{-1} \xi \bd) \\
    +& \bw(1) - \balpha \int_{0}^{1} (dV_{\alpha}) G' \left[\int_{0}^{1}GG' du\right]^{-1} \int_0^1 G_0(u) du,
\end{split}
\end{align}
where $V_{\alpha} = \left(\balpha' \Sigma^{-1} \balpha \right)^{-1} \balpha' \Sigma^{-1} \bw(u)$.
\end{theorem}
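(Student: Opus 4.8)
The plan is to reduce the claim to its counterpart for the vectorized error--correction model and then run Johansen's (1995, Ch.~13) argument for a model with an unrestricted constant, so that the process carries a linear trend. Put $\bx_t:=\vec(\bX_t)$; by \eqref{vecCMAR} this is a CVAR in error--correction form with $\Pi=\bA_2\otimes\bA_1=\balpha\bbeta'$, $\Gamma_i=\bB_{i2}\otimes\bB_{i1}$ and constant $\bd=\vec(\bD)$, and by Theorem~\ref{Theorem:represent} the representation \eqref{vecProcess} holds with trend $\bC\bd\,t\in\col(\bbeta_{\bot})$. The first--order condition in $\bD$ is $\sum_t\hat\bR_t=\bzero$ in both cases (for the Gaussian objective the Kronecker weights $\Sigma_1,\Sigma_2$ are invertible and drop out), which after vectorization says that $\hat\bd$ is the sample mean of $\Delta\bx_t-\hat\Pi\bx_{t-1}-\sum_{i=1}^k\hat\Gamma_i\Delta\bx_{t-i}$, with $\hat\Pi=\hat\bA_2\otimes\hat\bA_1$, $\hat\Gamma_i=\hat\bB_{i2}\otimes\hat\bB_{i1}$; this is exactly the concentrated normal equation of Johansen's procedure. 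Averaging the model identity $\bd=\Delta\bx_t-\Pi\bx_{t-1}-\sum_i\Gamma_i\Delta\bx_{t-i}-\bepsilon_t$ over $t$ and subtracting gives
\[
\hat\bd-\bd=\tfrac1T\!\sum_t\bepsilon_t\;-\;(\hat\Pi-\Pi)\tfrac1T\!\sum_t\bx_{t-1}\;-\;\sum_{i=1}^k(\hat\Gamma_i-\Gamma_i)\tfrac1T\!\sum_t\Delta\bx_{t-i},
\]
so it remains to analyze these three pieces at scale $\sqrt T$.

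The second step is to split the two estimation--error pieces along $\mathbb{R}^d=\col(\bbeta)\oplus\col(\h{\tau})\oplus\col(\h{\gamma})$, $\h{\tau}=\bC\bd$, $\h{\gamma}\perp\bbeta,\h{\tau}$, via the vectorized analogue of Lemma~\ref{G}: since $\bbeta'\bC=\bzero$, $\bbeta'\bx_{t-1}$ is stationary and its sample mean converges to $\mathbb{E}[\bbeta'\bx_{t-1}]$ at rate $\sqrt T$; $T^{-1/2}\bar{\h{\gamma}}'\bx_{[Tu]}$ converges weakly to the scaled Brownian motion $\bar{\h{\gamma}}'\bC\bw(u)$; and $T^{-1}\bar{\h{\tau}}'\bx_{[Tu]}\to u$ is governed by the linear trend. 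Combined with the rates $\hat\balpha-\balpha,\ \hat\bB_i-\bB_i=O_p(T^{-1/2})$ (Theorems~\ref{Theorem:LSEtheta}/\ref{Theorem:MLEtheta}) and the super--consistency $\hat\bbeta_i-\bbeta_i=O_p(T^{-1})$, improving to $O_p(T^{-3/2})$ along $\h{\tau}$ (Theorems~\ref{Theorem:LSEtildebeta}/\ref{Theorem:MLEtildebeta}), this yields three contributions. (i) $\tfrac1T\sum_t\bepsilon_t$ gives $\bw(1)$. (ii) The part of $(\hat\Pi-\Pi)\tfrac1T\sum_t\bx_{t-1}$ and $(\hat\Gamma_i-\Gamma_i)\tfrac1T\sum_t\Delta\bx_{t-i}$ fed by the stationary, nonzero--mean regressors equals $-(\hat\balpha-\balpha)\,\mathbb{E}[\bbeta'\bx_{t-1}]-\sum_i(\hat\Gamma_i-\Gamma_i)\,\mathbb{E}[\Delta\bx_{t-i}]+o_p(T^{-1/2})$, which --- by a martingale CLT applied jointly to $\bigl(T^{-1/2}\sum_t\bepsilon_t,\ T^{-1/2}\sum_t(\bY_t(\bbeta)-\mathbb{E}[\bY_t(\bbeta)])\bigr)$ --- contributes a Gaussian term with covariance $\Sigma\,\bd'\xi'\Omega^{-1}\xi\bd$, where $\Omega=\Var(\bY_t(\bbeta))$ and $\mathbb{E}[\bY_t(\bbeta)]'=\vec'(\bD)\xi'$. (iii) The part of $(\hat\Pi-\Pi)\tfrac1T\sum_t\bx_{t-1}$ meeting the trend and random--walk component of $\tfrac1T\sum_t\bx_{t-1}$ collapses, because $\bbeta'\bC=\bzero$, to $\hat\balpha\,(\hat\bbeta-\bbeta)'\bC\bd$ times the $O(T)$ trend average, and the joint weak limit of the partial--sum process, the deterministic trend and the super--consistent cointegration estimator identifies this with $-\balpha\int_0^1(\mathrm{d}V_\alpha)\,G'\bigl[\int_0^1 GG'\,\mathrm{d}u\bigr]^{-1}\int_0^1 G_0(u)\,\mathrm{d}u$, with $G_0$, $G$ and $V_\alpha=(\balpha'\Sigma^{-1}\balpha)^{-1}\balpha'\Sigma^{-1}\bw(u)$ as defined above. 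Adding (i)--(iii) gives \eqref{CLTD}.

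The main obstacle is step (iii): establishing the joint weak convergence of the (de--meaned) partial--sum process together with the super--consistent cointegration estimator and identifying the limit with the Johansen functional. Here the different convergence rates along $\h{\tau}$ and $\h{\gamma}$ must be tracked carefully --- a bare $O_p(T^{-1})$ bound on $\hat\bbeta-\bbeta$ is not enough to keep $(\hat\Pi-\Pi)\tfrac1T\sum_t\bx_{t-1}$ at order $T^{-1/2}$, so the sharper $O_p(T^{-3/2})$ rate in the trend direction from Theorems~\ref{Theorem:LSEtildebeta}/\ref{Theorem:MLEtildebeta} is essential. The Kronecker structure of the CMAR model enters only through the resulting values of $\bC$, $\Gamma=\bI-\sum_i\bB_{i2}\otimes\bB_{i1}$, the stationary moments $\Omega$, $\xi$, and the root--$T$ block $(\hat\balpha,\hat\Gamma_1,\dots,\hat\Gamma_k)$ contracted against $\mathbb{E}[\bY_t(\bbeta)]$; treating that block as in Johansen's Chapter~13 yields the stated form, completing the reduction and establishing \eqref{CLTD}.
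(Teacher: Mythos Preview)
Your proposal is correct and follows precisely the route the paper takes: the paper's entire proof is the one-line reference ``See Chapter~13 of \cite{johansen1995likelihood},'' i.e., vectorize the CMAR into the CVAR \eqref{vecCMAR}, use the first-order condition \eqref{grad:hatD} (which reduces to $\sum_t\hat\bR_t=\bzero$ for both LSE and MLE), and then run Johansen's Chapter~13 analysis for the unrestricted-constant model along the $(\bbeta,\h{\tau},\h{\gamma})$ decomposition. Your sketch simply makes explicit the three contributions and the rate bookkeeping (in particular the need for the $O_p(T^{-3/2})$ rate in the $\h{\tau}$ direction) that the paper leaves implicit in the citation.
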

\begin{proof}
See Chapter~13 of \cite{johansen1995likelihood}.
\end{proof}

\subsection{Asymptotic for the model without the constant term}
We only highlight some major differences and show some results for LSE as examples. In Section~\ref{SEC:Theorems}, we investigated the process in three different directions, where $(\bbeta_1, \h{\tau}_1, \h{\gamma}_1)$ has a full rank of $d_1$. However, in the following, we only consider two directions $(\bbeta_1, \bbeta_{1\bot})$. Consider LSE estimator for example.
\begin{equation}
    \hat{\hA}_1, \hat{\hA}_2 = \argmin_{\hA_1, \hA_2} \sum_t\|\Delta \hX_t - \hA_1\hX_{t-1}\hA_2' \|^2_F, \quad \hat{\hA}_i = \hat{\balpha}_i \hat{\bbeta}_i'.
\end{equation}
Since we can view the model as 
\begin{equation}\label{LSE}
    \Delta \bX_t = \balpha_1 \boxed{\bbeta_1' \bX_{t-1} \bbeta_2} \balpha_2' + \bE_t,
\end{equation}
where $\bbeta_1\bX_{t-1} \bbeta_2'$ is $r_1 \times r_2$ stationary matrix-valued time series. We can get the CLT for $\hat{\balpha}_1, \hat{\balpha}_2$. Define
\begin{align*}
\boldsymbol{W}_{t-1} = \begin{pmatrix} 
\bbeta_1' \bX_{t-1} \bbeta_2 \balpha_2' \otimes \bI_{d_1}\\
\bI_{d_2} \otimes \bbeta_2' \bX_{t-1}' \bbeta_1 \balpha_1'\\
\end{pmatrix},
\end{align*}
and $\bH = \mathbb{E} \left( \bW_t \bW_t'\right) + \h{\gamma} \h{\gamma}'$.
\begin{theorem}[\textit{CLT for $\balpha$}]\label{tildealpha}
The asymptotic distribution of $\tilde{\balpha}_1 := \hat{\balpha}_1 \hat{\bbeta}_1' \bbeta_1$ is given by
$$T^{\frac{1}{2}} (\tilde{\balpha}_1 - \balpha_1) \to N_{d_1 \times r_1} \left(0, \Xi_1 \right),$$
where $\Xi_1 = \bH^{-1}\mathbb{E}(\bW_t \Sigma \bW_t')\bH^{-1}$.
\end{theorem}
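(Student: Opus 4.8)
The plan is to reduce the problem to a standard central limit theorem for a stationary reduced-rank (bilinear) autoregression, since by the boxed identity~(\ref{LSE}) the process $\bbeta_1'\bX_{t-1}\bbeta_2$ is an $r_1\times r_2$ stationary matrix time series. First I would write out the normal equations for the alternating least squares problem $\min_{\bA_1,\bA_2}\sum_t\|\Delta\bX_t-\bA_1\bX_{t-1}\bA_2'\|_F^2$ under the rank constraints $\rank\bA_i=r_i$, exploiting the factorization $\bA_i=\balpha_i\bbeta_i'$ and the normalization conventions $\bbeta_i'\bbeta_i=\hat\bbeta_i'\hat\bbeta_i=\bI_{r_i}$, $\|\bA_1\|_F=1$. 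The key algebraic observation is that, because $\tilde\balpha_1=\hat\balpha_1\hat\bbeta_1'\bbeta_1$ already projects the estimated loading onto the true cointegration direction, the error $\tilde\balpha_1-\balpha_1$ decouples at first order from the super-consistent error $\hat\bbeta_1-\bbeta_1$ (which is $O_p(1/T)$ by the kind of argument behind Theorem~\ref{Theorem:LSEtildebeta}); the latter contributes only lower-order terms and can be absorbed into the remainder.

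Next I would linearize. Writing $\hat\balpha_i=\balpha_i+\Delta\balpha_i$, $\hat\bbeta_i=\bbeta_i+\Delta\bbeta_i$ and expanding the residual $\Delta\bX_t-\hat\bA_1\bX_{t-1}\hat\bA_2'$ around the truth, one collects the leading terms into a single linear system in $\vec(\Delta\balpha_1)$, $\vec(\Delta\balpha_2)$ (after eliminating the $\bbeta$-increments and the redundant directions killed by the normalization, which is where the correction term $\h\gamma\h\gamma'$ in $\bH$ enters as the price of identifiability). Using the regressor $\bW_{t-1}$ as defined just before the theorem, the first-order condition takes the schematic form
\[
\Bigl(\tfrac1T\sum_t \bW_{t-1}\bW_{t-1}' + \h\gamma\h\gamma' + o_p(1)\Bigr)\,\sqrt T\,\vec(\tilde\balpha-\balpha)
= \tfrac1{\sqrt T}\sum_t \bW_{t-1}\vec(\bE_t) + o_p(1).
\]
By stationarity and ergodicity of $\bW_{t-1}$ (which follows since each block is a product of the stationary $\bbeta_i'\bX_{t-1}\bbeta_j$ with fixed matrices), the bracketed matrix converges in probability to $\bH=\mathbb{E}(\bW_t\bW_t')+\h\gamma\h\gamma'$; by a martingale CLT — $\bW_{t-1}$ is $\mathcal F_{t-1}$-measurable and $\bE_t$ is the innovation — the right-hand side converges in distribution to $N(\bzero,\mathbb{E}(\bW_t\Sigma\bW_t'))$. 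Slutsky's theorem then yields $\sqrt T\,\vec(\tilde\balpha_1-\balpha_1)\to N(\bzero,\bH^{-1}\mathbb{E}(\bW_t\Sigma\bW_t')\bH^{-1})$, which unvectorizes to the claimed matrix-normal limit with $\Xi_1=\bH^{-1}\mathbb{E}(\bW_t\Sigma\bW_t')\bH^{-1}$.

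The main obstacle I anticipate is the careful bookkeeping in the linearization: one must verify rigorously that the $O_p(1/T)$ errors in $\hat\bbeta_i$ and the constraint-induced terms genuinely contribute only $o_p(1/\sqrt T)$ to $\tilde\balpha_1-\balpha_1$, so that the nonstationary (random-walk and trend) directions $\h\gamma_i,\h\tau_i$ do not leak into this CLT. This is exactly the separation-of-rates phenomenon emphasized after Lemma~\ref{G}, and it is what makes $\tilde\balpha$ (rather than $\hat\balpha$ itself) the right object to state a clean $\sqrt T$-limit for. A secondary technical point is handling the alternating nature of the estimator: I would either argue that the alternating updates converge to a stationary point of the joint least-squares objective and analyze that stationary point, or invoke a one-step argument showing that from a consistent warm start (the MAR estimate) a single update already achieves the stated asymptotics. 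Beyond these, the remaining steps — ergodic theorem for the Gram matrix, martingale CLT for the score — are routine given the finite-dimensional, finite-second-moment setup.
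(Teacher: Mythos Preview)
Your proposal is correct and follows essentially the same route the paper uses for the analogous Theorem~\ref{Theorem:LSEtheta}: write the gradient (normal) equations for the least-squares criterion, use the super-consistency of $\hat\bbeta_i$ (Lemma~\ref{LSE_rate} in the paper) to absorb the $\bbeta$-errors into an $o_p(T^{-1/2})$ remainder, arrive at $\bH\,\vec(\tilde\balpha-\balpha)=T^{-1}\sum_t\bW_{t-1}\vec(\bE_t)+o_p(T^{-1/2})$, and finish with the martingale CLT plus Slutsky. One minor remark: since this appendix theorem is for the model \emph{without} constant term, there is no trend direction $\h\tau_i$ (the paper explicitly reduces to the two directions $(\bbeta_i,\bbeta_{i\bot})$ here), so the separation-of-rates issue you flag as the main obstacle is actually simpler than in the general case.
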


The process $\bX_t$ of the model without the constant term does not have the linear trend and is given as a mixture of a random walk and a stationary process:
\begin{align}
    \vec{(\bX_t)} = \bC \sum_{i=1}^{t} \bepsilon_i + C(L)(\bepsilon_t) + \vec{(\bX_0)},
\end{align}
where $\bepsilon_i = \vec{(\bE_i)}$, $\bd = \vec{(\bD)}$, $\bC = \bbeta_{\bot} (\balpha_{\bot}' \bbeta_{\bot})^{-1} \balpha_{\bot}'$, and $C(L)(\bepsilon_t)$ is the stationary part. Thus, we only consider two directions $(\bbeta_1, \bbeta_{1\bot})$.

\begin{lemma}
\begin{align*}
    T^{-\frac{1}{2}} \bar{\bbeta}_{1\bot}' \bX_{[Tu]} \bA_2' \overset{w}{\longrightarrow} \bar{\bbeta}_{1\bot}' \left[ \vec^{-1}(\bC\bw(u)) \right] \bA_2',\\
        \left[T^{-\frac{1}{2}}  \bA_1 \bX_{[Tu]} \bar{\bbeta}_{2\bot}\right]' \overset{w}{\longrightarrow} \left[\bA_1 \left[ \vec^{-1}( \bC \bw(u)) \right] \bar{\bbeta}_{2\bot}\right]').
\end{align*}
\end{lemma}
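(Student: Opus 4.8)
The plan is to read off both weak limits from the random-walk representation of $\vec(\bX_t)$ displayed just above the lemma, together with the functional central limit theorem, following the same route as the proof of Lemma~\ref{G} but with the linear-trend component absent: since the model carries no constant term, $\bD=\bzero$ forces $\bC\bd=\bzero$, so there is no $\h\tau$-direction and $\bar{\bbeta}_{1\bot}\bbeta_{1\bot}'$ already plays the role that $\bar{\h\gamma}_1\h\gamma_1'+\bar{\h\tau}_1\h\tau_1'$ had in Lemma~\ref{G}. Consequently only the random-walk part survives after scaling by $T^{-1/2}$.

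First I would vectorize both quantities. Using $\vec(\bP\bX\bQ)=(\bQ'\otimes\bP)\vec(\bX)$,
\[
\vec\!\bigl(\bar{\bbeta}_{1\bot}'\bX_{[Tu]}\bA_2'\bigr)=(\bA_2\otimes\bar{\bbeta}_{1\bot}')\,\vec(\bX_{[Tu]}),\qquad \vec\!\bigl(\bA_1\bX_{[Tu]}\bar{\bbeta}_{2\bot}\bigr)=(\bar{\bbeta}_{2\bot}'\otimes\bA_1)\,\vec(\bX_{[Tu]}).
\]
Substituting $\vec(\bX_t)=\bC\sum_{i=1}^{t}\bepsilon_i+C(L)(\bepsilon_t)+\vec(\bX_0)$ and multiplying by $T^{-1/2}$, the random-walk part of the first quantity is $(\bA_2\otimes\bar{\bbeta}_{1\bot}')\bC\cdot T^{-1/2}\sum_{i=1}^{[Tu]}\bepsilon_i$. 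Since $T^{-1/2}\sum_{i=1}^{[Tu]}\bepsilon_i\overset{w}{\longrightarrow}\bw(u)$ in $D[0,1]$ by Donsker's theorem, the continuous mapping theorem gives convergence to $(\bA_2\otimes\bar{\bbeta}_{1\bot}')\bC\,\bw(u)=\vec\!\bigl(\bar{\bbeta}_{1\bot}'[\vec^{-1}(\bC\bw(u))]\bA_2'\bigr)$; un-vectorizing yields the first display. The second display follows by the identical computation with $(\bar{\bbeta}_{2\bot}'\otimes\bA_1)$ in place of $(\bA_2\otimes\bar{\bbeta}_{1\bot}')$, followed by a transpose.

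Next I would dispose of the two remainder terms. The initial-value term $T^{-1/2}(\bA_2\otimes\bar{\bbeta}_{1\bot}')\vec(\bX_0)$ is $T^{-1/2}$ times a fixed random vector and tends to $\bzero$ almost surely. For the stationary component, absolute summability of the coefficients of $C(L)$ (guaranteed by the Representation Theorem~\ref{Theorem:represent}) together with the finite second moments of $\bE_t$ makes $\mathbb{E}\|C(L)(\bepsilon_t)\|^2$ bounded uniformly in $t$; a union bound plus dominated convergence then gives $T^{-1/2}\max_{1\le t\le T}\|C(L)(\bepsilon_t)\|=o_p(1)$, so this term is negligible uniformly in $u\in[0,1]$, and Slutsky's theorem in $D[0,1]$ upgrades the fixed-$u$ convergence above to weak convergence as processes.

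The computation is essentially routine; the only delicate point is establishing the uniform-in-$u$ negligibility of the stationary component, so that the statement holds as weak convergence in the Skorokhod space $D[0,1]$ rather than merely in finite-dimensional distributions. This is exactly where the finite-second-moment assumption on $\{\bE_t\}$ and the summability of the moving-average coefficients enter; everything else is bookkeeping with Kronecker-product identities, Donsker's theorem, and the continuous mapping theorem.
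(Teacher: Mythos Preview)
Your proposal is correct and follows essentially the same route the paper uses for the analogous Lemma~\ref{G}: plug in the Granger--Johansen representation of $\vec(\bX_t)$, identify the random-walk piece as the only surviving contribution after the $T^{-1/2}$ scaling, and discard the stationary and initial-value pieces. The paper does not supply a separate proof for this lemma (it is the constant-free specialization of Lemma~\ref{G}), but your argument matches the paper's template exactly; your treatment of the uniform-in-$u$ negligibility of $C(L)(\bepsilon_t)$ is in fact more explicit than what the paper writes there.
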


\begin{theorem}
Let $\tilde{\balpha}_i = \hat{\balpha}_i \hat{\bbeta}_i' \bbeta_i$, $\tilde{\bbeta}_i = \hat{\bbeta}_i(\bbeta_i' \hat{\bbeta}_i)^{-1}$. Then we have
\begin{align}
    T\begin{pmatrix}
        \vec{\left((\tilde{\bbeta}_1 - \bbeta_1)'\bbeta_{1\bot}\right)} \\
        \vec{\left(\bbeta_{2\bot}'(\tilde{\bbeta}_2 - \bbeta_2)\right)}
    \end{pmatrix} \overset{w}{\longrightarrow}
    \left(\int_0^1 \bG_3\bG_3' \mathrm{d}u \right)^{-1} \int_0^1 \bG_3 (\mathrm{d}\bw).
\end{align}
where $\bG_3$ is defined as
\[
\bG_3(u) :=
\begin{pmatrix}
     \bar{\bbeta}_{1\bot}' \left[ \vec^{-1}(\bC\bw(u)) \right] \bA_2' \otimes \balpha_1' \\
     \balpha_2 \otimes \left[\bA_1 \left[ \vec^{-1}( \bC \bw(u)) \right] \bar{\bbeta}_{2\bot}\right]'
\end{pmatrix}.
\]
\end{theorem}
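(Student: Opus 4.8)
The plan is to adapt the argument behind Theorem~\ref{Theorem:LSEtildebeta} to the simpler setting where there is no trend, so that the only super-consistent direction is $\bbeta_{i\bot}$ rather than the split $(\h\tau_i,\h\gamma_i)$. First I would set up the profiled least-squares normal equations. Writing $\bX_t^* = \bbeta_1'\bX_{t-1}\bbeta_2$ for the stationary ``core'' and using the identifiability normalization $\bbeta_i'\bbeta_i = \bI_{r_i}$, one concentrates out $\balpha_1,\balpha_2$ and obtains a stationary-score part (handled by Theorem~\ref{tildealpha}) plus a part that loads on $\bbeta_{1\bot},\bbeta_{2\bot}$. The key algebraic step is the decomposition $\tilde\bbeta_i - \bbeta_i = \bar\bbeta_{i\bot}\,\bbeta_{i\bot}'(\tilde\bbeta_i - \bbeta_i)$, valid because $\bbeta_i'\tilde\bbeta_i = \bI_{r_i}$ by construction; this reduces everything to finding the limit of $T\,\bbeta_{i\bot}'(\tilde\bbeta_i-\bbeta_i)$.

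Next I would identify the estimating equation for these super-consistent components. From \eqref{LSE}, the first-order condition with respect to the direction $\bbeta_{1\bot}$ (after partialling out the stationary regressors and $\balpha$'s) has the schematic form
\[
\Big(\sum_t \bG_{3,T}(t/T)\bG_{3,T}(t/T)'\Big)\,T\,\vec(\text{super-consistent part}) \;=\; \sum_t \bG_{3,T}(t/T)\,\vec(\Delta\bX_t) + o_p(1),
\]
where $\bG_{3,T}(t/T)$ is the finite-sample analogue of $\bG_3(u)$: it stacks $T^{-1/2}\bar\bbeta_{1\bot}'\bX_{t-1}\bA_2'\otimes\balpha_1'$ and $\balpha_2\otimes(T^{-1/2}\bA_1\bX_{t-1}\bar\bbeta_{2\bot})'$. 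By the Lemma just above the statement, these normalized partial sums converge weakly to $\bG_3(u)$ in $D[0,1]$. I would then invoke a matrix functional central limit theorem together with continuous-mapping: $T^{-1}\sum_t \bG_{3,T}\bG_{3,T}' \Rightarrow \int_0^1 \bG_3\bG_3'\,\mathrm{d}u$, and $T^{-1/2}\sum_t \bG_{3,T}(t/T)\,\vec(\bE_t) \Rightarrow \int_0^1 \bG_3\,\mathrm{d}\bw$, where the stochastic integral is well defined because $\bG_3$ is adapted to the same Brownian filtration driving $\bw$ (the integrand is a functional of $\bw$ on $[0,u]$). Joint convergence of these two objects, plus continuity of inversion at the a.s.\ invertible limit $\int_0^1\bG_3\bG_3'\,\mathrm{d}u$, yields the stated limit. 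The cross terms between the stationary score and the $\bG_{3,T}$ block must be shown to vanish after normalization: these are of order $T^{-1/2}$ times a stationary average against a random-walk average, hence $O_p(T^{-1/2})\to 0$, which is the standard ``stationary $\perp$ integrated'' cancellation.

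The main obstacle I anticipate is bookkeeping the partialling-out cleanly in the bilinear setting: one must verify that after concentrating $\balpha_1,\balpha_2$ the residual estimating equation for $(\bbeta_{1\bot}'(\tilde\bbeta_1-\bbeta_1),\ \bbeta_{2\bot}'(\tilde\bbeta_2-\bbeta_2))$ genuinely decouples into exactly the stacked form with kernel $\bG_3$, with no leftover coupling between the row-wise and column-wise corrections at the $O_p(1/T)$ scale. This requires checking that the Hessian block mixing the two directions is $o_p(T)$ after normalization (again by a stationary-vs-integrated rate argument), so that the limiting $2\times 2$ block system is governed solely by $\int_0^1 \bG_3\bG_3'\,\mathrm{d}u$. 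The remaining pieces — tightness of the normalized partial sums, the FCLT, and the continuous mapping — are routine given the representation \eqref{vecProcess} specialized to $\bd = \bzero$ and the moment assumptions already in force.
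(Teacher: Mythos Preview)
Your plan follows essentially the same blueprint as the paper's proof of Theorem~\ref{Theorem:LSEtildebeta}, specialized to the trend-free model: linearize the first-order conditions, identify the normalized Gram matrix and score, and pass to the limit via the lemma immediately preceding the statement together with continuous mapping. The paper's execution is somewhat more direct than what you describe: it does \emph{not} profile out $\balpha_1,\balpha_2$. Instead it writes the gradient conditions for $\bbeta_1$ and $\bbeta_2$, replaces $\hat\balpha_i$ by $\balpha_i$ using the $O_p(T^{-1/2})$ rate already established in the rate lemma, and reads off the joint linear system
\[
\frac{1}{T^2}\sum_t\bM_{t-1}\bM_{t-1}'\cdot T
\begin{pmatrix}\vec(\bU_{T1}')\\ \vec(\bU_{T2})\end{pmatrix}
=\frac{1}{T}\sum_t\bM_{t-1}\vec(\bE_t)+o_p(1),
\]
with $\bM_{t-1}$ the finite-sample analogue of $\bG_3$. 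This sidesteps the concentration bookkeeping you flag as the main obstacle.

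One point in your last paragraph is not right and would mislead the computation. You say the ``Hessian block mixing the two directions'' (the row-wise correction $\tilde\bbeta_1-\bbeta_1$ versus the column-wise correction $\tilde\bbeta_2-\bbeta_2$) should be $o_p(T)$ after normalization, leaving ``no leftover coupling'' between them. That cross block does \emph{not} vanish: it is exactly the off-diagonal block of $\int_0^1\bG_3\bG_3'\,\mathrm{d}u$, it is $O_p(1)$ after normalization, and it is essential to the joint limit. In the paper's linearized gradient equations each of the two equations contains both $(\tilde\bbeta_1-\bbeta_1)$ and $(\tilde\bbeta_2-\bbeta_2)$, and this coupling survives in the limit. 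What \emph{does} vanish by the stationary-versus-integrated rate argument you invoke is the cross block between the super-consistent $\bbeta_{i\bot}$ directions and the $\sqrt{T}$-consistent $\balpha_i$ directions. Keep these two kinds of cross terms separate and the rest of your plan goes through.
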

We omit the details for MLE since the theorems can be derived similarly with respect to the above two directions $(\bbeta_1, \bbeta_{1\bot})$.

\section{Proof of the Theorems}\label{Appendix:proofs}

We collect the proofs of Lemma~\ref{G}, Theorem~\ref{Theorem:LSEtheta}, Theorem~\ref{Theorem:LSEtildebeta}, Corollary~\ref{Corollary:LSEtildebeta}, Theorem~\ref{Theorem:LSEhatbeta}, Theorem~\ref{Theorem:LSEbb} in this section. Proof of Theorem~\ref{Theorem:represent} can be found in Chapter~4 \cite{johansen1995likelihood}. For the following proofs, unless otherwise specified, we omit the subscript $\hat\bA^{\text{ls}}$ or $\hat\bA^{\text{ml}}$ for brevity and simply use $\hat\bA$ to denote the estimator used in each particular theorem.

\subsection{Proof of the Lemma~\ref{G}}
\begin{proof}
Since the vectorized CMAR process has the representation (\ref{vecProcess}) by Theorem~\ref{Theorem:represent}, we see that $\vec(\bX_t)$ is decomposed into a random rank, a linear trend, and a stationary process. Taking inverse vectorization of both sides of (\ref{vecProcess}), such that $\vec^{-1}$ makes the vector $\vec(\bX_t)$ back into a matrix $\bX_t$, the weak convergence of $T^{-\frac{1}{2}} \bar{\h{\gamma}}'_1 \bX_{[Tu]}$ follows from Theorem~\ref{Theorem:represent}. Since the random walk part $T^{-\frac{1}{2}} \bar{\h{\gamma}}'_1 [\vec^{-1}(\bC \sum_i^{t} \bepsilon_i)]\bbeta_2$ gives $\bar{\h{\gamma}}'_1 \bW(u) \bbeta_2$ in the limit; and the trend part vanishes because $\bar{\h{\gamma}}'_1 \vec^{-1}(\bC \bd) \bbeta_2 =\bzero$ by the orthogonal construction of $\h{\gamma}_1$ and $\h{\tau}_1 = \col(\vec^{-1}(\bC \bd) \bbeta_2)$; and the stationary part also vanishes by the factor $T^{-\frac{1}{2}}$.

The weak convergence of $T^{-1} \bar{\h{\gamma}}'_1 \bX_{[Tu]}$ can be also derived from (\ref{vecProcess}). The random walk term  and the stationary term tend to zero by the factor $T^{-1}$, the linear term converges to $u$.

And note that the weak convergence of the average $T^{-\frac{1}{2}} \bB'_{1T} \bar{\bX}_t = T^{-1} \sum_{t=1}^{T} T^{-\frac{1}{2}} \bB'_{1T}\bX_{t-1}$ follows from the continuous mapping theorem, where the function $x \to \int_0^1 x(u) du$ is continuous.

\end{proof}

\subsection{Proof of Theorem~\ref{Theorem:LSEtheta}}
To prove Theorem~\ref{Theorem:LSEtheta}, we first state and prove the following lemma. Consider the general form
\begin{equation}\label{Proof:modelForm1}
    \Delta \bX_t = \bA_1 \bX_{t-1} \bA_2' + \sum_{i=1}^{k} \bB_{i1} \Delta \bX_{t-i} \bB_{i2}' +  \bD  + \bE_t, \ t=1,\ldots,T,
\end{equation}
where $\bD$ is a constant $d_1 \times d_2$ matrix, and $\bA_1 = \balpha_1 \bbeta_1'$, $\bA_2 = \balpha_2 \bbeta_2'$. Denote $\bd = \vec(\bD)$, $\Phi = \bA_2 \otimes \bA_1$, $\bB_i = \bB_{i2} \otimes \bB_{i1}$.
We introduce the notation for the stacked vectors and the corresponding parameters,
\[\bZ_t' = \left[ (\bU_T \vec{(\bX_{t-1})})',\ \Delta \vec{(\bX_{t-1})}', \ldots, \Delta \vec{(\bX_{t-k})}' \right], \bC' = \left[ \Phi\bB_T, \bB_1, \ldots, \bB_{k} \right],\]
where $\bU_T := (\bbeta, T^{-\frac{1}{2}} \bbeta_{\bot})' \in \mathbb{R}^{d_1 d_2 \times d_1 d_2}$ which is full rank so invertible.
$\bB_T:=(\bar{\bbeta}, T^{\frac{1}{2}} \bar{\bbeta}_{\bot})$ where $\bar{\bbeta} = \bbeta (\bbeta' \bbeta)^{-1}$. Note that $\bB_T = \bU_T^{-1}$. 

Using the above notations, we can rewrite the model (\ref{Proof:modelForm1}) as follows:
\begin{equation}\label{Proof:modelForm1vec}
    \Delta \vec{(\bX_{t})} = \bC' \bZ_t + \bd + \vec{(\bE_t)}, \ t=1,\ldots,T.
\end{equation}
The following lemma states the $\hat\bC$ is asymptotic normal with the convergence rate $O_p(1/\sqrt{T})$. In addition, the projection matrix $\mathbb{P}_{\beta} = \bbeta_i \bbeta_i'$ is super-consistent with the convergence rate $O_p(1/T)$.
\begin{lemma}\label{LSE_rate}
For the LSE estimators,
\begin{enumerate}
    \item[(a)] $(\hat{\Phi} - \Phi)' \bB_T = O_p(\frac{1}{\sqrt{T}})$;
    \item[(b)] $(\hat{\bB} - \bB) = O_p(\frac{1}{\sqrt{T}})$;
    \item[(c)] $(\hat{\bD} - \bD) = O_p(\frac{1}{\sqrt{T}})$;
    \item[(d)] $(\hat{\balpha}_i - \balpha_i) = O_p(\frac{1}{\sqrt{T}}), i=1,2$;
\end{enumerate}
where $\bB_T:=(\bar{\bbeta}, T^{\frac{1}{2}} \bar{\bbeta}_{\bot})$, $\bar{\bbeta} = \bbeta (\bbeta' \bbeta)^{-1}$.
Moreover,
\begin{enumerate}
    \item[(e)] $\hat{\bbeta}_i\hat{\bbeta}'_i - \bbeta_i \bbeta'_i = O_p(\frac{1}{T}), i=1,2$.
\end{enumerate}
\end{lemma}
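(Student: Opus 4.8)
The five claims are handled in two layers. First I establish the stochastic orders for the rank‑unconstrained least‑squares coefficients in the vectorized regression \eqref{Proof:modelForm1vec}, which yields (a)--(c); then I pass to the reduced‑rank step for (d), and deduce the super‑consistency (e) from (a) and (d) by elementary matrix algebra. \emph{Step 1 (transformed normal equations).} Absorbing the intercept into the lag block, write \eqref{Proof:modelForm1vec} as $\Delta\vec(\bX_t)=\bC'\bZ_t+\vec(\bE_t)$ with $\bZ_t=(\bZ_{1t}',\bZ_{2t}')'$, where $\bZ_{1t}=\bU_T\vec(\bX_{t-1})$ splits into the stationary sub‑block $\bbeta'\vec(\bX_{t-1})$ and the scaled sub‑block $T^{-1/2}\bbeta_\bot'\vec(\bX_{t-1})$, and $\bZ_{2t}$ collects $\Delta\vec(\bX_{t-1}),\dots,\Delta\vec(\bX_{t-k})$ together with the constant. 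By the representation \eqref{vecProcess}, every component of $\bZ_{2t}$ and the first sub‑block of $\bZ_{1t}$ is a linear functional of the stationary part, hence stationary and ergodic, while the second sub‑block carries both the random‑walk component and, because $\bd\neq\bzero$, a linear‑trend component. The least‑squares estimators, characterized by their (joint) normal equations, obey $(\hat\bC-\bC)'=\big(\sum_t\vec(\bE_t)\bZ_t'\big)\big(\sum_t\bZ_t\bZ_t'\big)^{-1}$.

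\emph{Step 2 (moment matrices and assembly of (a)--(c)).} By the ergodic theorem the stationary blocks $T^{-1}\sum_t\bZ_{2t}\bZ_{2t}'$, $T^{-1}\sum_t\bZ_{2t}(\bbeta'\vec(\bX_{t-1}))'$ and $T^{-1}\sum_t(\bbeta'\vec(\bX_{t-1}))(\bbeta'\vec(\bX_{t-1}))'$ converge to deterministic positive‑definite limits, and the corresponding error cross‑moments are $O_p(T^{-1/2})$ by the martingale central limit theorem. For the non‑stationary sub‑block I rotate $\bbeta_\bot$ into the trend direction $\h\tau$ and a random‑walk direction $\h\gamma$ orthogonal to it, exactly as in Lemma~\ref{G}; after the block scalings $T^{-1},T^{-1/2}$ the associated residualized moment matrix converges weakly to a non‑degenerate functional of the limiting Brownian motion and of $u$ — of the same type as $\int_0^1\bG\bG'\,\mathrm{d}u$ of Section~\ref{SEC:Theorems}, which is a.s.\ invertible — while the error cross‑moment with this sub‑block, after the same scaling, is $O_p(1)$ by stochastic‑integral convergence. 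Partitioned‑regression algebra (residualize the non‑stationary block on the stationary block, and conversely) then gives $(\hat\bC-\bC)'$ equal to $O_p(T^{-1/2})$ in the stationary directions and $O_p(T^{-1})$ in the $\bbeta_\bot$ direction; reading off the blocks of $\bC'=[\Phi\bB_T,\bB_1,\dots,\bB_k,\bd]$ yields (a), (b) and (c).

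\emph{Step 3 (reduced‑rank step; (d) and (e)).} The reduced‑rank estimator is the unrestricted one premultiplied by the eigenprojection $\hat\bU_i\hat\bU_i'$ onto the top $r_i$ eigenvectors of a sample matrix converging to a limit with a strict spectral gap after $r_i$; Davis--Kahan gives $\hat\bU_i\hat\bU_i'-\mathbb{P}_i^{\ast}=O_p(T^{-1/2})$, so $\hat\bA_i$, and hence $\hat\bD$ and $\hat\bB_i$ after back‑substitution, keep the orders of (a)--(c). Because $(\bA_1,\bA_2)\mapsto\bA_2\otimes\bA_1$ is a local diffeomorphism near the truth under the identification $\|\bA_1\|_F=1$, the rate of $\hat\bA_i-\bA_i$ matches that of $\hat\Phi-\Phi$ in each direction; together with $\hat\bA_i\bbeta_i=\hat\balpha_i(\hat\bbeta_i'\bbeta_i)=\hat\balpha_i(\bI_{r_i}+o_p(1))$ and $\bA_i\bbeta_i=\balpha_i$ this gives (d). For (e): $(\hat\bA_i-\bA_i)\bbeta_{i\bot}=O_p(T^{-1})$ by (a) while $\bA_i\bbeta_{i\bot}=\bzero$, so $\hat\balpha_i\hat\bbeta_i'\bbeta_{i\bot}=O_p(T^{-1})$; as $\hat\balpha_i$ has full column rank in the limit, $\hat\bbeta_i'\bbeta_{i\bot}=O_p(T^{-1})$, and $\hat\bbeta_i'\hat\bbeta_i=\bbeta_i'\bbeta_i=\bI_{r_i}$ forces $\bbeta_i'\hat\bbeta_i=\bI_{r_i}+O_p(T^{-2})$, whence $\hat\bbeta_i\hat\bbeta_i'-\bbeta_i\bbeta_i'=O_p(T^{-1})$.

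\emph{Main obstacle.} The crux is Step 2 in the presence of the intercept: since $\bd\neq\bzero$ puts a linear trend into $\bbeta_\bot'\vec(\bX_t)$, the naively scaled level regressor $T^{-1/2}\bbeta_\bot'\vec(\bX_{t-1})$ fails to converge and the level moment matrix degenerates at direction‑dependent rates. One must carry out the $\h\tau$/$\h\gamma$ rotation of $\bbeta_\bot$, show the limiting residualized design matrix is a.s.\ positive definite, and verify that the sharper directional rates ($T^{-3/2}$ along $\h\tau$, $T^{-1}$ along $\h\gamma$) are all dominated by the uniform bound $O_p(T^{-1})$ claimed in (a); the bookkeeping is heavier than in the scalar VECM because every object carries the Kronecker factorization $\bbeta=\bbeta_2\otimes\bbeta_1$, $\balpha=\balpha_2\otimes\balpha_1$.
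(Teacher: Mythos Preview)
Your route differs from the paper's and has a real gap in Step~3. The paper does \emph{not} go through the unrestricted normal equations at all. Instead it argues directly on the objective: for any $\bar\Phi$ with $\sqrt{T}\|(\bar\Phi-\Phi)\bB_T\|_F=c_T\to\infty$, the quadratic
\[
\sum_t\|\Delta\vec(\bX_t)-\bar\Phi\vec(\bX_{t-1})\|^2-\sum_t\|\vec(\bE_t)\|^2
\]
is, with probability tending to one, strictly positive (the cross term is $O_p(Tc_T)$ while the quadratic term exceeds $O_p(Tc_T^2 d_T)$ with $d_T\to 0$, $c_Td_T\to\infty$, because $\bS_T:=T^{-1}\sum_t \bU_T\vec(\bX_{t-1})\vec(\bX_{t-1})'\bU_T'$ has a nondegenerate limit). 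Convexity in $\bar\Phi$ propagates this from the sphere to its exterior. Since the true $\Phi=\bA_2\otimes\bA_1$ is feasible for the CMAR constraints, the CMAR minimizer $\hat\Phi$ has objective $\le$ the value at the truth, hence must lie inside the ball; this yields (a) for the \emph{constrained} estimator without ever solving linear normal equations or invoking an eigenprojection. Parts (b)--(d) follow the same template, and (e) is derived essentially as you do, from $(\hat\bbeta-\bbeta)'\bbeta_\bot=O_p(T^{-1})$ together with $\hat\bbeta'\hat\bbeta=\bbeta'\bbeta=\bI$.

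Your Steps~1--2 correctly deliver the rates for the \emph{unrestricted} vectorized VECM LSE, and your treatment of the intercept via the $\h\tau/\h\gamma$ rotation is in fact more careful than the paper's (which simply drops $\bD$ for the rate lemma). The problem is the bridge in Step~3: the CMAR LSE is not the unrestricted $\hat\bC$ premultiplied by a single eigenprojection. It is the fixed point of an alternating scheme that simultaneously enforces the Kronecker factorization $\Phi=\bA_2\otimes\bA_1$, $\bB_i=\bB_{i2}\otimes\bB_{i1}$ \emph{and} two separate rank constraints $\rank\bA_j=r_j$. A Davis--Kahan perturbation of ``a sample matrix converging to a limit with a strict spectral gap'' describes one half-step of the iteration with the other factor held at the truth, not the actual estimator; you would still need to control the fixed point, or linearize the nonlinear first-order conditions of the bilinear problem, which in turn presupposes the very rates you are trying to establish. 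The paper's convexity argument sidesteps this entirely because it bounds \emph{any} $\bar\Phi$ whose objective value does not exceed that at the truth, regardless of how $\bar\Phi$ was produced; that is precisely the idea your proposal is missing.
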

\begin{proof}
Recall that $\bbeta = \bbeta_2 \otimes \bbeta_1 \in \mathbb{R}^{d_1d_2 \times r_1 r_2}$, $\bbeta_{\bot}$ orthogonal to $\bbeta$ such that $\bbeta_{\bot}' \bbeta = \bzero$, $\bbeta_{\bot} \in \mathbb{R}^{d_1d_2 \times (d_1 d_2 - r_1 r_2)}$. To prove (a) to (d), it is sufficient to show for any sequence ${c_T}$ such that $c_T \to \infty$,
\begin{equation}\label{lemma2.0}
P\left[\inf_{\sqrt{T}\|\bar{\bC} - \bC\|_F \ge c_T} \sum_{t=1}^{T} \|\bY_t - \bar{\bC}' \bZ_t\|^2 \le \sum_{t=2}^{T} \|\vec{(\bE_t)}\|^2 \right] \to 0.
\end{equation}
Without loss of generality, to simplify notations, consider the model without term $\bB$ and $\bD$,
\begin{equation}\label{Proof:modelForm2}
    \Delta \bX_t = \bA_1 \bX_{t-1} \bA_2' + \bE_t, \ t=1,\ldots,T.
\end{equation}
The proof can be extended to the model (\ref{Proof:modelForm1}) under same idea. Then, to prove (a), (b) and (d), it is sufficient to show for any sequence ${c_T}$ such that $c_T \to \infty$,
\begin{equation}\label{lemma2.1}
P\left[\inf_{\sqrt{T}\|(\bar{\Phi} - \Phi) \h{B}_T\|_F \ge c_T} \sum_{t=1}^{T} \|\vec{(\bX_t) - \bar{\Phi}} \vec{(\bX_{t-1})}\|^2 \le \sum_{t=2}^{T} \|\vec{(\bE_t)}\|^2 \right] \to 0.
\end{equation}
Now we write
\begin{align}\label{bigMinusSmall}
\begin{split}
    &\sum_{t=2}^{T} \|\vec{(\bX_t)} - \bar{\Phi}\vec{(\bX_{t-1})}\|^2 - \sum_{t=2}^{T} \|\vec{(\bE_t)}\|^2 \\
    =&-2\sum_{t=2}^{T} \tr{[(\bar{\Phi} - \Phi) \vec{(\bX_{t-1})} \vec{(\bE_t)'}] + \sum_{t=2}^{T} \tr{[(\bar{\Phi} - \Phi) \vec{(\bX_{t-1})} \vec{(\bX_{t-1})'} (\bar{\Phi} - \Phi)' ]}}.
\end{split}
\end{align}
Denote $\bS_T := \frac{1}{T} \sum_{t=1}^{T} \bU_T \vec{(\bX_{t-1})} \vec{(\bX_{t-1})}' \bU_T'$. Note that
$$\sum_{t=2}^{T} \tr{[(\bar{\Phi} - \Phi) \vec{(\bX_{t-1})} \vec{(\bX_{t-1})'} (\bar{\Phi} - \Phi)']} 
= T \tr{[(\bar{\Phi} - \Phi) \bB_T \bS_T \bB_T' (\bar{\Phi} - \Phi)']}.$$
Note $\sum_{t=1}^{T} \bU_T \vec{(\bX_{t-1})} = \sum_{t=1}^{T} \begin{pmatrix}
     \bbeta' \vec{(\bX_{t-1})} \\
     T^{-\frac{1}{2}} \bbeta'_{\bot} \vec{(\bX_{t-1})}
\end{pmatrix}$. By Representation Theorem~\ref{Theorem:represent}, we know that $\bbeta' \vec{(\bX_{t-1})}$ is I(0) and $ \bbeta'_{\bot} \vec{(\bX_{t-1})}$ is I(1). Assuming the fourth moment of $\bE_t$ exists, by calculating the variance, we know that
\begin{equation}
    \bS_T := \frac{1}{T} \sum_{t=1}^{T} \bU_T \vec{(\bX_{t-1})} \vec{(\bX_{t-1})}' \bU_T' \overset{w}{\longrightarrow}\Gamma_0
\end{equation}
where $\Gamma_0$ is a symmetric positive definite matrix. Thus, exists sequence $\{d_T\}$ such that $d_T \to 0$ and $d_Tc_T \to \infty$,
$$P\left(\lambda_{\min} (\Gamma_0) > d_T \right) \to 1 \ \text{as} \ T \to \infty.$$
Since $\bS_T \overset{w}{\longrightarrow}\Gamma_0$, we have $\lambda_{\min}(\bS_T) \overset{w}{\longrightarrow} \lambda_{\min}(\Gamma_0)$. By Portmanteau lemma,
$$P\left(\lambda_{\min} (\bS_T) > d_T \right) \to P\left(\lambda_{\min} (\Gamma_0) > d_T \right) \to 1 \ \text{as} \ T \to \infty.$$
So,
\begin{equation}
    P\left(\tr{[(\bar{\Phi} - \Phi) \bB_T \bS_T \bB_T' (\bar{\Phi} - \Phi)']} > T \|(\bar{\Phi} - \Phi)\h{B}_T\|^2_F d_T \right) \to 1  \ \text{as} \ T \to \infty.
\end{equation}
On the boundary set $\sqrt{T}\|(\bar{\Phi} - \Phi) \h{B}_T\|_F = c_T$,
\begin{equation}\label{bigRate}
    T \tr{[(\bar{\Phi} - \Phi) \bB_T \bS_T \bB_T' (\bar{\Phi} - \Phi)']} > O_p(T c_T^2 d_T) 
\end{equation}
On the other hand, by calculating the variance, we know that
\begin{equation}\label{smallRate}
    \sum_{t=2}^{T} \tr{[(\bar{\Phi} - \Phi) \vec{(\bX_{t-1})} \vec{(\bE_t)'}]} = O_p(Tc_T)
\end{equation}
Combining (\ref{bigMinusSmall}), (\ref{bigRate}) and (\ref{smallRate}), we have
\begin{equation}\label{lemma2.2}
    P\left[\inf_{\sqrt{T}\|(\bar{\Phi} - \Phi) \h{B}_T\|_F = c_T} \sum_{t=1}^{T} \|\vec{(\bX_t) - \bar{\Phi}} \vec{(\bX_{t-1})}\|^2 \le \sum_{t=2}^{T} \|\vec{(\bE_t)}\|^2 \right] \to 0.
\end{equation}
Since $\sum_{t=2}^{T} \|\vec{(\bX_t)} - \bar{\Phi}\vec{(\bX_{t-1}}\|^2$ is a convex function of $\bar{\Phi}$, so (\ref{lemma2.1}) is implied by (\ref{lemma2.2}). 

Therefore, (a) and (b) is implied by (\ref{lemma2.1}). To prove (c), consider same idea for (\ref{lemma2.0}). And (d) is implied by (a).

To prove (e), by (a) we know that 
\begin{equation}\label{betabotRate}
    (\hat{\bbeta} - \bbeta)' \bbeta_{\bot} = O_p(\frac{1}{T}).
\end{equation}
And since $\hat{\bbeta}' \hat{\bbeta} = \bbeta' \bbeta = \bI_r$, we have 
\begin{equation}\label{bb'}
    (\hat{\bbeta} - \bbeta)' \bbeta + \bbeta' (\hat{\bbeta} - \bbeta) = O_p(\frac{1}{T}).
\end{equation}
Note that
\begin{equation}\label{b'b}
    \hat{\bbeta} \hat{\bbeta}' - \bbeta \bbeta'  = (\hat{\bbeta} - \bbeta) \bbeta' + \bbeta (\hat{\bbeta} - \bbeta)' + O_p(\frac{1}{T}).
\end{equation}
By (\ref{betabotRate}) and (\ref{b'b}) we know that
\begin{equation}\label{lemma2.3}
    (\hat{\bbeta} \hat{\bbeta}' - \bbeta \bbeta') \bbeta_{\bot} = O_p(\frac{1}{T}).
\end{equation}
So we only need to show $(\hat{\bbeta} \hat{\bbeta}' - \bbeta \bbeta') \bbeta = O_p(\frac{1}{T})$. Since
\begin{align}\label{lemma2.4}
\begin{split}
    (\hat{\bbeta} \hat{\bbeta}' - \bbeta \bbeta') \bbeta  = &(\hat{\bbeta} - \bbeta) \bbeta'\bbeta + \bbeta (\hat{\bbeta} - \bbeta)'\bbeta + O_p(\frac{1}{T}) \\
    =& \hat{\bbeta} - \bbeta - \bbeta \bbeta' (\hat{\bbeta} - \bbeta) + O_p(\frac{1}{T}) \\
    =& (\bI - \bbeta \bbeta')  (\hat{\bbeta} - \bbeta) + O_p(\frac{1}{T}) \\
    =& \bbeta_{\bot} \bbeta'_{\bot} (\hat{\bbeta} - \bbeta) + O_p(\frac{1}{T}) = O_P(\frac{1}{T}).
\end{split}
\end{align}
The (e) is followed by (\ref{lemma2.3}) and (\ref{lemma2.4}).
\end{proof}
Now we are ready to give the proof of Theorem~\ref{Theorem:LSEtheta}.
\begin{proof}
By the gradient condition of (\ref{minlse}) for $\balpha_1$, $\balpha_2$ and $\bB_{i1}$, $\bB_{i2}$, we have
\begin{align}\label{grad:lse}
\begin{split}
\hat\bR_t \hat{\balpha}_2 \hat{\bbeta}'_2 \bX'_{t-1} \hat{\bbeta}_1 &= \bzero, \\
\hat\bR_t \hat{\balpha}_1 \hat{\bbeta}'_1 \bX'_{t-1} \hat{\bbeta}_2 &= \bzero, \\
\hat\bR_t \hat{\bB}_{i2} \Delta \bX'_{t-1} &= \bzero, \\
\hat\bR_t \hat{\bB}_{i1} \Delta \bX_{t-1} &= \bzero.
\end{split}
\end{align}
where $\hat\bR_t = \Delta \bX_t  - \hat{\bA}_1 \bX_{t-1} \bA_2' - \sum_{i=1}^{k} \hat\bB_{i1} \Delta \bX_{t-i} \bB_{i2}' -  \hat\bD$, and we plug in the formula (\ref{Proof:modelForm1}) for $\Delta \bX_t$.
In particular, for $\hat\bD$, we have
\begin{equation}\label{grad:hatD}
    \hat{\bD} = T^{-1}\sum_{t=1}^{T} \left( \Delta\bX_t -  \hat\bA_1  \bX_{t-1} \hat{\bA}'_2 - \sum_{i=1}^{k} \hat{\bB}_{i1} \Delta\bX_{t-1} \hat{\bB}'_{i2}\right).
\end{equation}
Take the (\ref{grad:hatD}) into gradient conditions (\ref{grad:lse}), and note that
\begin{align*}
\sum_{t=1}^{T} \bbeta_1' (\bX_{t-1} - \bar{\bX}_{t-1}) \bbeta_2 \balpha_2' \balpha_2 \bbeta_2' \bX_{t-1}' \bbeta_1 &= \sum_{t=1}^{T} \bbeta_1' (\bX_{t-1} - \bar{\bX}_{t-1}) \bbeta_2 \balpha_2' \balpha_2 \bbeta_2' (\bX_{t-1} - \bar{\bX}_{t-1})' \bbeta_1, \\
\sum_{t=1}^{T} (\bE_t - \bar{\bE}_t) \balpha_2 \bbeta_2' \bX_{t-1}' \bbeta_1 &= \sum_{t=1}^{T} (\bE_t - \bar{\bE}_t) \balpha_2 \bbeta_2' (\bX_{t-1} - \bar{\bX}_{t-1})' \bbeta_1.
\end{align*}
Then taking vectorization of (\ref{grad:hatD}), and after some algebra we have
\[
\bH_1 \vec(\hat{\h{\theta}} - \h{\theta}) = \frac{1}{T} \sum_t \bW_{t-1} \vec(\bE_t) + o_p(T^{-1/2}),
\]
where $\bH_1$ and $\bW_t$ are defined at the beginning of Section~\ref{SEC:Theorems} that
\begin{align*}
\boldsymbol{W}_{t-1} = \begin{pmatrix} 
\bbeta_1' (\bX_{t-1} - \bar{\bX}_{t-1}) \bbeta_2 \balpha_2' \otimes \bI_{d_1}\\
\bI_{d_2} \otimes \bbeta_2' (\bX_{t-1} - \bar{\bX}_{t-1})' \bbeta_1 \balpha_1'\\
(\Delta\bX_{t-i} - \Delta\bar{\bX}_{t-i})  \bB_{i2} \otimes \bI_{d_1} \\
\bI_{d_2} \otimes (\Delta\bX_{t-i} - \Delta\bar{\bX}_{t-i})' \bB_{i1} \\
i=1,\ldots,k 
\end{pmatrix},
\end{align*}
$\bH_1 = \mathbb{E} \left( \bW_t \bW_t'\right) + \sum_{i=1}^{k}\h{\delta}_i \h{\delta}_i'$, where $\h{\delta}_1 = (\vec{(\balpha_1)}', \h{0}')'$, and $\h{\delta}_i = (\h{0}_{d_1r_1+id_2r_2}', \vec{(\bB_{i1})}', \h{0}')'$ for $i=2,\ldots,k$.
By martingale central limit theorem \citep{hall2014martingale},
\[
\sum_t \bW_{t-1} \vec(\bE_t) \to N(\bzero, \mathbb{E}(\bW_t \Sigma \bW_t')).
\]
It follows that
\[\sqrt{T} \vec{(\hat{\h{\theta}} - \h{\theta})} \to N(\h{0}, \Xi_1),\]
where $\Xi_1 = \bH_1^{-1}\mathbb{E}(\bW_t \Sigma \bW_t')\bH_1^{-1}$.
\end{proof}
\subsection{Proof of Theorem~\ref{Theorem:LSEtildebeta}}
To prove Theorem~\ref{Theorem:LSEtildebeta}, we first state and prove the following lemma.
\begin{lemma}\label{GG}
Define
\begin{align*}
\boldsymbol{M}_{t-1} = \begin{pmatrix} 
\bB_{T1}' (\bX_{t-1} - \bar\bX_{t-1}) \bbeta_2 \balpha_2' \otimes \balpha_1'\\
\balpha_2' \otimes \bB_{T2}' (\bX_{t-1} - \bar\bX_{t-1})'\bbeta_1 \balpha_1'\\
\end{pmatrix} \in \mathbb{R}^{(r_1d_1+r_2d_2 -r_1^2-r_2^2 ) \times d_1d_2}.
\end{align*}
Then
\begin{align}
T^{-\frac{1}{2}} \boldsymbol{M}_{[Tu]} &\overset{w}{\longrightarrow} 
 \bG(u) 
 \label{M}\\
T^{-2} \sum_t^{T} \boldsymbol{M}_{t-1} \boldsymbol{M}_{t-1}' &\overset{w}{\longrightarrow}\displaystyle\int_{0}^{1} \bG \bG' \mathrm{d}u, \label{MM}\\
T^{-1} \sum_t^{T} \boldsymbol{M}_{t}  \vec(\bE_t)  &\overset{w}{\longrightarrow} \int_0^1 \bG (\mathrm{d}\bW).\label{ME}
\end{align}
\end{lemma}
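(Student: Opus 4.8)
\textbf{Proof proposal for Lemma~\ref{GG}.} The three displays are proved in order, each leveraging Lemma~\ref{G} and its column-wise counterpart together with the representation (\ref{vecProcess}). For (\ref{M}): the first block row of $\boldsymbol{M}_{t-1}$ is $\bB_{T1}'(\bX_{t-1}-\bar\bX_{t-1})\bbeta_2\balpha_2'\otimes\balpha_1'$, and Lemma~\ref{G} already gives that $T^{-1/2}\bB_{T1}'(\bX_{[Tu]}-\bar\bX_T)\bbeta_2$ converges weakly in $D[0,1]$ to $\big(\bar{\h{\gamma}}_1'(\bW(u)-\bar\bW)\bbeta_2,\ \h\Lambda_1\bV_1'(u-\frac12)\big)'$. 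Replacing $[Tu]$ by $[Tu]-1$ perturbs the prelimit uniformly by $o_p(1)$, and post-multiplying by the fixed matrix $\balpha_2'$ and forming the Kronecker product with the fixed matrix $\balpha_1'$ are continuous linear maps; the analogous statement for the second block row, namely the symmetric (column-wise) version of Lemma~\ref{G} yielding $\bG_2$, is obtained by the same argument with the roles of rows and columns exchanged. Stacking and applying the continuous mapping theorem on $D[0,1]$ gives $T^{-1/2}\boldsymbol{M}_{[Tu]}\overset{w}{\to}\bG(u)$, which is (\ref{M}).

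For (\ref{MM}): write $T^{-2}\sum_{t=1}^{T}\boldsymbol{M}_{t-1}\boldsymbol{M}_{t-1}'=\int_0^1 f_T(u)f_T(u)'\,\mathrm du$ with the step function $f_T(u):=T^{-1/2}\boldsymbol{M}_{[Tu]}$ (a negligible endpoint re-indexing of the Riemann sum). The functional $f\mapsto\int_0^1 ff'$ is continuous on $D[0,1]$ under the Skorokhod topology provided a tightness/uniform-integrability condition holds, and this follows from the finiteness of the second moments of the random-walk and stationary components in (\ref{vecProcess}), which makes $\sup_u\|f_T(u)\|$ stochastically bounded with a square-uniformly-integrable envelope. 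The continuous mapping theorem together with (\ref{M}) then yields the limit $\int_0^1\bG\bG'\,\mathrm du$.

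For (\ref{ME}): this is the genuine difficulty, since it is a stochastic-integral convergence rather than an ordinary continuous-mapping statement, and the limit $\int_0^1\bG\,\mathrm d\bW$ depends on the joint law of integrand and integrator. The plan has two steps. First, establish the joint functional convergence $\big(T^{-1/2}\boldsymbol{M}_{[Tu]},\ T^{-1/2}\sum_{i\le[Tu]}\vec(\bE_i)\big)\overset{w}{\to}(\bG(u),\bw(u))$ in $D([0,1])$; this is immediate from (\ref{M}) and the invariance principle for $\sum\vec(\bE_i)$, because by (\ref{vecProcess}) the process $\boldsymbol{M}_{[Tu]}$ is a continuous functional of the same partial-sum process plus deterministic-trend and stationary terms that are already encoded in $\bG$ (the trend contributes the $(u-\frac12)$ pieces, the stationary part is $O_p(1)$ uniformly in $u$ and vanishes after the $T^{-1/2}$ scaling). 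Second, since the regressor entering the sum is $\mathcal F_{t-1}$-measurable (it is built from $\bX_{t-1}$ and the full-sample mean), $\sum_t\boldsymbol{M}_{t-1}\vec(\bE_t)$ is a martingale with respect to the natural filtration, so one may apply a martingale stochastic-integral convergence argument in the style of \cite{hall2014martingale} and \cite{johansen1995likelihood}: no quadratic-covariation or bias correction arises, and the limit is the It\^o integral $\int_0^1\bG\,\mathrm d\bW$. The two points I expect to require care are: (i) the stationary component of $\bX_{t-1}$ contributes $T^{-1}\sum_t O_p(1)\cdot\vec(\bE_t)=O_p(T^{-1/2})=o_p(1)$ by a martingale maximal inequality; and (ii) the full-sample demeaning splits off a term of the form $T^{-1}\big(\text{fixed projection of }\bar\bX\big)\otimes\sum_t\vec(\bE_t)$, which converges precisely to the $-\bar\bW$-centering already present in the definition of $\bG$, so that the two sides agree. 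Combining the two steps gives (\ref{ME}).
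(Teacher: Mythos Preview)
Your argument for (\ref{M}) and (\ref{MM}) is exactly the paper's: Lemma~\ref{G} gives the functional convergence of each block, and the continuous mapping theorem applied to $f\mapsto\int_0^1 ff'$ handles the quadratic sum. The paper's proof says nothing more than that, and in fact does not prove (\ref{ME}) at all, so your sketch there goes beyond what the paper records.

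One correction worth flagging in your (\ref{ME}) argument: you assert that $\sum_t\boldsymbol{M}_{t-1}\vec(\bE_t)$ is a martingale because the regressor is $\mathcal F_{t-1}$-measurable, but this is false as stated, since $\boldsymbol{M}_{t-1}$ contains the full-sample mean $\bar\bX$, which depends on the entire trajectory. You clearly know this, because your point (ii) is precisely the remedy: split $\boldsymbol{M}_{t-1}$ into the uncentered piece (which \emph{is} adapted and gives a genuine martingale transform, hence a clean stochastic-integral limit $\int_0^1\bG_0\,\mathrm d\bw$ via the Kurtz--Protter/Johansen-type result) and the constant-in-$t$ centering piece $-\bB_{Ti}'\bar\bX\cdots$, which factors as a product of two $O_p(1)$ limits and produces exactly the $-\bar\bW$ correction in $\bG$. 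So the decomposition you announce in (ii) should be done \emph{before} invoking the martingale argument, not after; once you reorder the presentation that way, the proof is complete and matches the standard route implicit in the paper's references to \cite{johansen1995likelihood}.
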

\begin{proof}
The (\ref{M}) follows from the Lemma~\ref{G} and the definition of $\bG(u)$. The (\ref{MM}) follows from the (\ref{M}) together with the continues mapping theorem applied to the mapping $\bG \to \int_0^1 \bG(u)\bG'(u) du$.
\end{proof}

\begin{proof}[Proof of Theorem~\ref{Theorem:LSEtildebeta}]
Note $T \bU_{T1} = T (\h{\gamma}_1, T^{\frac{1}{2}} \h{\tau}_1)' \bB_{T1} \bU_{T1} = (T\h{\gamma}_1, T^{\frac{3}{2}} \h{\tau}_1)' (\tilde{\bbeta}_1 - \bbeta_1)$.
By the gradient condition we have
\begin{align*}
    \sum_t \balpha_1' \left(\balpha_1(\tilde\bbeta_1 - \bbeta_1)'\bX_{t-1}\bbeta_2\balpha_2' + \balpha_1\bbeta_1' (\tilde\bbeta_2 - \bbeta_2) \balpha_2'\right)
    \balpha_2 \bbeta_2' \bX_{t-1}' &= \balpha_1' \bE_t \balpha_2 \bbeta_2' \bX_{t-1} +o_p(1), \\
    \sum_t \bX_{t-1}' \bbeta_1 \balpha_1' \left(\balpha_1(\tilde\bbeta_1 - \bbeta_1)'\bX_{t-1}\bbeta_2\balpha_2' + \balpha_1\bbeta_1' (\tilde\bbeta_2 - \bbeta_2) \balpha_2'\right)
    \balpha_2 &= \bX_{t-1}' \bbeta_1 \balpha_1' \bE_t \balpha_2 +o_p(1).
\end{align*}
We can rewrite above equations in the matrix form,
\begin{align*}
    \frac{1}{T^2} \sum_t \left( \bM_{t-1} \bM_{t-1}' \right)
    T
    \begin{bmatrix}
        \vec(\bU_{T1}') \\
        \vec(\bU_{T2})
    \end{bmatrix}
    = \frac{1}{T} \sum_{t} \bM_{t-1} \vec(\bE_t) + o_p(1/T).
\end{align*}
The results then follows from Lemma~\ref{G} and Lemma~\ref{GG}.
\end{proof}

\begin{proof}[Proof of Corollary~\ref{Corollary:LSEtildebeta}]
    (\ref{Formula:tildebeta1}) and (\ref{Formula:tildebeta2}) are the marginal distribution of the first and second components in (\ref{Formula:LSEtildebeta}).
Since 
\[\int_{0}^{1} \bG\bG' \mathrm{d}u =
\begin{pmatrix}
     \int_{0}^{1}\bG_{1}\bG_{1}'\mathrm{d}u & \int_{0}^{1}\bG_{1}\bG_{2}'\mathrm{d}u \\
     \int_{0}^{1}\bG_{2}\bG_{1}'\mathrm{d}u & \int_{0}^{1}\bG_{2}\bG_{2}'\mathrm{d}u
\end{pmatrix}.\]
After some algebra we have 
\[\int_{0}^{1} \bG_{1.2}\bG_{1.2}' \mathrm{d}u =
\int_{0}^{1}\bG_1\bG_1'\mathrm{d}u - \int_{0}^{1}\bG_1\bG_2'\mathrm{d}u(\int_{0}^{1}\bG_2\bG_2'\mathrm{d}u)^{-1}\int_{0}^{1}\bG_2\bG_1'\mathrm{d}u,
\]
which is the Shur compliment in terms of $\int_{0}^{1}\bG_2\bG_2'\mathrm{d}u$.
By the formula for the inverse of $\int_{0}^{1} \bG\bG' \mathrm{d}u$ in terms of Shur compliment of $\int_{0}^{1}\bG_2\bG_2'\mathrm{d}u$, we can rewrite the first components in Theorem~\ref{Theorem:LSEtildebeta} as
\begin{align*}
    &\int_0^{1}\left({\left(\int_{0}^{1} \bG_{1.2}\bG_{1.2}'  \right)^{-1}} \bG_1  - \left(\int_{0}^{1} \bG_{1.2}\bG_{1.2}'  \right)^{-1} \left(\int_{0}^{1}\bG_{1}\bG_{2}'\right) \left(\int_{0}^{1}\bG_{2}\bG_{2}'\right)^{-1} \bG_2\right) \mathrm{d}\bW \\
    =& \left(\int_0^1 \bG_{1.2}\bG_{1.2}'  \right)^{-1} \int_0^1 \bG_{1.2} \mathrm{d}\bW.
\end{align*}
Similarly we can derive the formula (\ref{Formula:tildebeta2}) using the Shur compliment of $\int_{0}^{1} \bG_1 \bG_1' du$.
\end{proof}

\subsection{Proof of Theorem~\ref{Theorem:LSEhatbeta}}
\begin{proof}
Note that
\begin{align}\label{U1U2}
    \tilde{\bbeta}_i - \bbeta_i = \bB_{Ti} \bU_{Ti} = \bar{\h{\gamma}}_i \bU_{1Ti} + T^{-\frac{1}{2}} \bar{\h{\tau}}_i \bU_{2Ti}, \ i=1,2,
\end{align}
where $\bB_{Ti} = (\bar{\h{\gamma}}, T^{-\frac{1}{2}}\bar{\h{\tau}})$, and $\bU_{Ti} = (\bU_{1Ti}', \bU_{2Ti}')' = (\h{\gamma}, T^{\frac{1}{2}}\h{\tau})' \tilde\bbeta_i$.
We have the expansion
\[T(\hat{\bbeta}_{ci} - \bbeta_{ci}) = (\bI - \bbeta_{ci} \bc') T (\tilde{\bbeta}_i - \bbeta_{ci}) + O_p(T\|\tilde{\bbeta}_i - \bbeta_{ci}\|^2).\]
By (\ref{U1U2}) it follows that
\[T(\hat{\bbeta}_{ci} - \bbeta_{ci}) = (\bI - \bbeta_{ci} \bc') (\bar{\h{\gamma}}_i T \bU_{1Ti} + T^{-\frac{1}{2}} \bar{\h{\tau}}_i T \bU_{2Ti}) + O_p(T\|\tilde{\bbeta}_i - \bbeta_{ci}\|^2).\]
Since $\|\tilde\bbeta_i - \bbeta_{ci}\| = O_p(T^{-1})$ by Theorem~\ref{Theorem:LSEtildebeta}, the last term above $O_p(T\|\tilde\bbeta_i - \bbeta_{ci}\|^2) = O_p(T^{-1})$. Also Theorem~\ref{Theorem:LSEtildebeta} implies that $T\bU_{Ti}$ is convergent so the term $T^{-\frac{1}{2}} \bar{\h{\tau}}_i T \bU_{2Ti}$ vanishes. Thus, $T(\hat\bbeta_{ci} - \bbeta_{ci})$ has the same limit distribution as $(\bI - \bbeta_{ci} \bc') \bar{\h{\gamma}}_i T \bU_{1Ti}$.
\end{proof}

\subsection{Proof of Theorem~\ref{Theorem:LSEbb}}

For the proof of the Theorem~\ref{Theorem:LSEbb}, we need the following lemma. Suppose 
$$a_n \bL_n (\h{\hat{\theta}} - \h{\theta}) \to \mathcal{L},$$
where $\bL_n$ is a $d \times d$ symmetric random matrix such that $\bL_n \to \bL$ as $n \to \infty$, and $\rank(\bL_n) = r < d$. Let $\balpha = \text{Col}(\bL_n)_{\bot}$. If $\balpha' (\h{\hat{\theta}} - \h{\theta}) = o_p(1/a_n)$, we have
\begin{equation}\label{lim1}
    a_n(\h{\hat{\theta}} - \h{\theta}) \to (\bL + \balpha\balpha')^{-1} \mathcal{L}.
\end{equation}
On the other hand, note that 
\[a_n \bL_n [\balpha_{\bot}, \balpha] [\balpha_{\bot}, \balpha]' (\h{\hat{\theta}} - \h{\theta}) = a_n \bL_n \balpha_{\bot} \balpha_{\bot}' (\h{\hat{\theta}} - \h{\theta}) \to \mathcal{L},\]
which implies that
\begin{equation}\label{lim2}
     a_n(\h{\hat{\theta}} - \h{\theta}) \to \balpha_{\bot} \left(\balpha_{\bot}' \bL \balpha_{\bot} \right)^{-1} \balpha_{\bot}'\mathcal{L}.
\end{equation}

\begin{lemma}\label{inverse}
The limit distribution (\ref{lim1}) and (\ref{lim2}) are equivalent.
\end{lemma}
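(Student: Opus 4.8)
The plan is to reduce the statement to a single algebraic identity. Both displays (\ref{lim1}) and (\ref{lim2}) are obtained as the limit of the \emph{same} sequence $a_n(\hat{\h\theta}-\h\theta)$, so the content of the lemma is just that the two closed forms
\[
(\bL+\balpha\balpha')^{-1}\mathcal{L}
\qquad\text{and}\qquad
\balpha_{\bot}\bigl(\balpha_{\bot}'\bL\balpha_{\bot}\bigr)^{-1}\balpha_{\bot}'\mathcal{L}
\]
agree, given the constraint on $\mathcal{L}$ that is forced by the hypotheses. I would establish that identity directly.

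First I would record that $\balpha'\mathcal{L}=\bzero$ almost surely. Since $\bL_n$ is symmetric and $\col(\balpha)=\col(\bL_n)_{\bot}=\mathrm{Null}(\bL_n)$, we have $\balpha'\bL_n=\bzero$, hence $\balpha'\,a_n\bL_n(\hat{\h\theta}-\h\theta)=\bzero$ for every $n$, and letting $n\to\infty$ gives $\balpha'\mathcal{L}=\bzero$; equivalently $\mathcal{L}\in\col(\bL)$. Here I use $\bL_n\to\bL$ together with $\rank\bL_n=r$ to conclude that $\bL$ is symmetric of rank $r$ with $\mathrm{Null}(\bL)=\col(\balpha)$ and $\col(\bL)=\col(\balpha_{\bot})$. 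Because the two closed forms above are projections, they are unchanged if $\balpha$ and $\balpha_{\bot}$ are replaced by other full-rank matrices with the same column spaces, so I may and do assume $[\balpha_{\bot},\balpha]$ has orthonormal columns.

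Next I would decompose $\bL$ in this orthonormal basis. Symmetry of $\bL$ with $\col(\bL)=\col(\balpha_{\bot})$ and $\mathrm{Null}(\bL)=\col(\balpha)$ gives $\bL=\balpha_{\bot}\bL_{11}\balpha_{\bot}'$ with $\bL_{11}:=\balpha_{\bot}'\bL\balpha_{\bot}$; since $\balpha_{\bot}$ has full column rank, $\rank\bL_{11}=\rank\bL=r$, so $\bL_{11}$ is an invertible $r\times r$ matrix. Therefore
\[
\bL+\balpha\balpha'=[\balpha_{\bot},\balpha]\begin{pmatrix}\bL_{11}&\bzero\\\bzero&\bI\end{pmatrix}[\balpha_{\bot},\balpha]',
\qquad
(\bL+\balpha\balpha')^{-1}=\balpha_{\bot}\bL_{11}^{-1}\balpha_{\bot}'+\balpha\balpha'.
\]
Multiplying the last expression by $\mathcal{L}$ and using $\balpha'\mathcal{L}=\bzero$ yields $(\bL+\balpha\balpha')^{-1}\mathcal{L}=\balpha_{\bot}(\balpha_{\bot}'\bL\balpha_{\bot})^{-1}\balpha_{\bot}'\mathcal{L}$, which is exactly the right-hand side of (\ref{lim2}). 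This proves the identity, and hence that (\ref{lim1}) and (\ref{lim2}) describe the same limiting distribution.

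There is no deep obstacle here; the only points needing care are (i) the constraint $\balpha'\mathcal{L}=\bzero$, which comes from combining $\balpha'(\hat{\h\theta}-\h\theta)=o_p(1/a_n)$ with $\balpha'\bL_n=\bzero$, and (ii) the fact that $\bL_n$ and $\bL$ share the same kernel, so that the block decomposition of $\bL$ above is legitimate. Both follow from symmetry of $\bL_n$, the convergence $\bL_n\to\bL$, and the rank condition $\rank\bL_n=r$.
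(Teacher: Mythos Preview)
Your proposal is correct and follows essentially the same route as the paper. The paper writes the eigendecomposition $\bL=\bU\bD\bU'$ (so $\bU$ plays the role of your $\balpha_{\bot}$), checks that $\balpha_{\bot}(\balpha_{\bot}'\bL\balpha_{\bot})^{-1}\balpha_{\bot}'=\bU\bD^{-1}\bU'$ and $(\bL+\balpha\balpha')^{-1}=\bU\bD^{-1}\bU'+\balpha\balpha'$, and then kills the $\balpha\balpha'$ term using $\balpha'\mathcal{L}=\bzero$; this is exactly your block computation in the orthonormal basis $[\balpha_{\bot},\balpha]$. One small point: your derivation of $\balpha'\mathcal{L}=\bzero$ directly from $\balpha'\bL_n=\bzero$ is cleaner than the paper's invocation of the hypothesis $\balpha'(\hat{\h\theta}-\h\theta)=o_p(1/a_n)$, which is really only needed to justify \eqref{lim1} in the first place.
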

\begin{proof}
Let $\bL = \bU \bD \bU'$ be the eigenvalue decomposition. Then 
\[\balpha_{\bot} \left(\balpha_{\bot}' \bL \balpha_{\bot} \right)^{-1} \balpha_{\bot}'  = \bU \left(\bU' \bU \bD \bU' \bU \right)^{-1} \bU' = \bU \bD^{-1} \bU'.\]
And it can be shown that $(\bL + \balpha\balpha')^{-1} = \bU \bD^{-1} \bU' + \balpha\balpha'$. The claim follows by $\balpha' (\h{\hat{\theta}} - \h{\theta}) = o_p(1/a_n)$.
\end{proof}
\begin{proof}[Proof of Theorem \ref{Theorem:LSEbb}]
Since the projection matrix $\hat{\mathbb{P}}_{\beta_i}$ remains the same for $\hat\bbeta_{ic}$ using different normalization matrix $\bc$. We may assume that $\hat\bbeta_i$ is normalized by itself such that $\hat{\mathbb{P}}_{\beta_i} = \hat\bbeta_i \hat\bbeta'_i$. Then, by gradient conditions, we have
\begin{align}\label{bbeq}
\begin{split}
    &\frac{1}{T^2} \sum_t \left( \bM_{t-1} \bM_{t-1}' \right)
    \begin{pmatrix}
     \bI_{d_1} \otimes \bbeta_1' & \bzero \\
     \bzero & \bbeta_2' \otimes \bI_{d_2}
    \end{pmatrix}
    \begin{pmatrix}
    T\vec(\hat{\bbeta}_1\hat{\bbeta}_1' - \bbeta_1 \bbeta_1' ) \\
    T\vec(\hat{\bbeta}_2\hat{\bbeta}_2' - \bbeta_2 \bbeta_2' ) \\
    \end{pmatrix} \\
    = &\frac{1}{T} \sum_{t} \bM_{t-1} \vec(\bE_t) + o_p(1/T).
\end{split}
\end{align}
Observe that $\bM_{t-1} \bM_{t-1}' \left(\vec(\bbeta_1'), \vec(\bbeta_2) \right)' = \bzero$. On the other hand, since $\|{\bbeta}_1 {\bbeta}_1'\|_F = \|\hat{\bbeta}_1 \hat{\bbeta}_1'\|_F = r_1$, it follows that
\begin{align*}
&\left(\vec'(\bbeta_1'), \bzero \right) 
\begin{pmatrix}
     \bI_{d_1} \otimes \bbeta_1' & \bzero \\
     \bzero & \bbeta_2' \otimes \bI_{d_2} \\
\end{pmatrix}
\begin{pmatrix}
     \vec(\hat{\bbeta}_1 \hat{\bbeta}_1' - \bbeta_1 \bbeta_1') \\
     \vec(\hat{\bbeta}_2 \hat{\bbeta}_2' - \bbeta_2 \bbeta_2') \\
\end{pmatrix} \\
= &\vec'(\bbeta_1') (\bI_{d_1} \otimes \bbeta_1') \vec(\hat{\bbeta}_1 \hat{\bbeta}_1' - \bbeta_1 \bbeta_1') \\
= &\tr{\left[ \bbeta_1 \bbeta_1' \left( \hat{\bbeta}_1 \hat{\bbeta}_1' - \bbeta_1 \bbeta_1'\right)\right]} = o_p(1/T).
\end{align*}
Denote $\balpha = \left(\vec'(\bbeta_1'), \bzero'\right)'$, $\bL_n = \frac{1}{T^2}\sum_t \left( \bM_{t-1} \bM_{t-1}' \right)$. By (\ref{lim1}) and (\ref{bbeq}) we have,
\begin{align} \label{bbeq2}
\begin{split}
     &\begin{pmatrix}
     \bI_{d_1} \otimes \bbeta_1 \bbeta_1' & \bzero \\
     \bzero & \bbeta_2\bbeta_2' \otimes \bI_{d_2} \\
    \end{pmatrix}
    \begin{pmatrix}
    T\vec(\hat{\bbeta}_1'\hat{\bbeta}_1 - \bbeta_1' \bbeta_1 ) \\
    T\vec(\hat{\bbeta}_2'\hat{\bbeta}_2 - \bbeta_2' \bbeta_2 ) \\
    \end{pmatrix} \\
    = &
    \begin{pmatrix}
     \bI_{d_1} \otimes \bbeta_1 & \bzero \\
     \bzero & \bbeta_2 \otimes \bI_{d_2}
    \end{pmatrix}
    \left( \bL + \balpha \balpha' \right)^{-1} \frac{1}{T} \sum_{t} \bM_{t-1} \vec(\bE_t) + o_p(1/T),
\end{split}
\end{align}
where 
$$\left( \bL + \balpha \balpha' \right)^{-1} \frac{1}{T} \sum_{t} \bM_{t-1} \vec(\bE_t) \to \left(\int_0^1 \bG\bG' \mathrm{d}u \right)^{-1} \int_0^1 \bG (\mathrm{d}\bW)$$
converges to the right hand side of the limit distribution in Theorem~\ref{Theorem:LSEtildebeta}. And we denote
\[
\Theta := \left(\int_0^1 \bG\bG' \mathrm{d}u \right)^{-1} \int_0^1 \bG (\mathrm{d}\bW).
\]
Similarly, we can have
\begin{align} \label{bbeq3}
\begin{split}
    &\begin{pmatrix}
         \bbeta_1 \bbeta_1' \otimes \bI_{d_1} & \bzero \\
         \bzero & \bI_{d_2} \otimes \bbeta_2\bbeta_2'  \\
    \end{pmatrix}
    \begin{pmatrix}
    T\vec(\hat{\bbeta}_1\hat{\bbeta}_1' - \bbeta_1 \bbeta_1' ) \\
    T\vec(\hat{\bbeta}_2\hat{\bbeta}_2' - \bbeta_2 \bbeta_2' ) \\
    \end{pmatrix} \\
    \to &
    \begin{pmatrix}
     \bbeta_1 \otimes \bI_{d_1} & \bzero \\
     \bzero &  \bI_{d_2} \otimes \bbeta_2
    \end{pmatrix}
    \begin{pmatrix}
         \bP_{r_1,d_1} & \bzero \\
         \bzero & \bP_{r_2, d_2}
    \end{pmatrix}
    \Theta.
\end{split}
\end{align}
By adding (\ref{bbeq2}) and (\ref{bbeq3}), 
\begin{align}\label{proof10:1}
   \begin{split}
         &\bS 
    \begin{pmatrix}
    T\vec(\hat{\bbeta}_1\hat{\bbeta}_1' - \bbeta_1 \bbeta_1' ) \\
    T\vec(\hat{\bbeta}_2\hat{\bbeta}_2' - \bbeta_2 \bbeta_2' ) \\
    \end{pmatrix} \\
    \to&
    \begin{pmatrix}
      \bI_{d_1} \otimes \bbeta_1 + (\bbeta_1 \otimes \bI_{d_1})\bP_{r_1, d_1} & \bzero \\
     \bzero &  \bbeta_2 \otimes \bI_{d_2} + (\bI_{d_2}  \otimes \bbeta_2)\bP_{r_2, d_2}
    \end{pmatrix}
    \Theta
    \end{split}
\end{align}
where
\[ \bS :=
\begin{pmatrix}
     \bI_{d_1} \otimes \bbeta_1 \bbeta_1' & \bzero \\
     \bzero & \bbeta_2\bbeta_2' \otimes \bI_{d_2} \\
\end{pmatrix} +
\begin{pmatrix}
     \bbeta_1 \bbeta_1' \otimes \bI_{d_1} & \bzero \\
     \bzero & \bI_{d_2} \otimes \bbeta_2\bbeta_2'  \\
\end{pmatrix},
\]
which is not a full rank matrix, since $\left( \bI_{d_1} \otimes \bbeta_1 \bbeta_1' + \bbeta_1 \bbeta_1' \otimes \bI_{d_1}\right) \left( \bbeta_{1\bot} \bbeta_{1\bot}' \otimes \bbeta_{1\bot} \bbeta_{1\bot}' \right) = \bzero.$ However, the tricky part is that now we can use Lemma~\ref{inverse} again here, we will use the form (\ref{lim2}), where $\balpha_{\bot} = \text{Col}(\bS)$. 

First, we can verify that
$$\left( \bbeta_{1\bot} \bbeta_{1\bot}' \otimes \bbeta_{1\bot} \bbeta_{1\bot}' \right) \vec(\hat{\bbeta}_1 \hat{\bbeta}_1' - \bbeta_1 \bbeta_1') = o_p(1/T).$$
For simplicity of notations we only consider left-top block.
Denote $\bS_1:=\bI_{d_1} \otimes \bbeta_1 \bbeta_1' + \bbeta_1 \bbeta_1' \otimes \bI_{d_1}$, and eigen decomposition $\bS_1 = \bU \bD \bU'$, then
\begin{equation}\label{proof10:2}
    T \vec{(\hat{\bbeta}_1 \hat{\bbeta}_1' - \bbeta_1 \bbeta'_1)} \to \bU \bD^{-1} \bU' \left[(\bI_{d_1} \otimes \bbeta_1)L_1 + (\bbeta_1 \otimes \bI_{d_1})L_2\right],
\end{equation}
where 
$$T\vec{[(\tilde{\bbeta}_1 - \bbeta_1)' \mathbb{P}_{\beta_1\bot}] \to L_1}, \  T\vec{[\mathbb{P}_{\beta_1\bot}'(\tilde{\bbeta}_1 - \bbeta_1) ] \to L_2}.$$
We have two ways to verify \ref{proof10:2}. One is since 
$$\bS_1 \left[(\bI_{d_1} \otimes \bbeta_1)L_1 + (\bbeta_1 \otimes \bI_{d_1})L_2\right] = \left[(\bI_{d_1} \otimes \bbeta_1)L_1 + (\bbeta_1 \otimes \bI_{d_1})L_2\right],$$
then,
\begin{align*}
    \bU \bD^{-1} \bU' \left[(\bI_{d_1} \otimes \bbeta_1)L_1 + (\bbeta_1 \otimes \bI_{d_1})L_2\right] &= \bU \bD^{-1} \bU' \bS_1 \left[(\bI_{d_1} \otimes \bbeta_1)L_1 + (\bbeta_1 \otimes \bI_{d_1})L_2\right] \\
    &= \bU \begin{pmatrix}
         \bI_{dr} & \\
          & \bzero
    \end{pmatrix} \bU' \left[(\bI_{d_1} \otimes \bbeta_1)L_1 + (\bbeta_1 \otimes \bI_{d_1})L_2\right] \\
    &= \bU \begin{pmatrix}
         \bI_{dr} & \\
          & \bzero
    \end{pmatrix} \bU' \bS_1 \left[(\bI_{d_1} \otimes \bbeta_1)L_1 + (\bbeta_1 \otimes \bI_{d_1})L_2\right] \\
    &= \bU \bD \bU' \left[(\bI_{d_1} \otimes \bbeta_1)L_1 + (\bbeta_1 \otimes \bI_{d_1})L_2\right] \\
    &=\left[(\bI_{d_1} \otimes \bbeta_1)L_1 + (\bbeta_1 \otimes \bI_{d_1})L_2\right] 
\end{align*}

Another way is to note that $\bU \bD^{-1} \bU'$ is the pseudo-inverse of $\bS_1$. Denote $\bU = [\bbeta_1, \bbeta_{1\bot}] \otimes [\bbeta_1, \bbeta_{1\bot}]$.
\begin{align*}
\bU' \bS_1 \bU &= 
    [\bbeta_1, \bbeta_{1\bot}]' \otimes [\bbeta_1, \bbeta_{1\bot}]'\left[\bI_{d_1} \otimes \bbeta_1 \bbeta_1' + \bbeta_1 \bbeta_1' \otimes \bI_{d_1}\right] [\bbeta_1, \bbeta_{1\bot}] \otimes [\bbeta_1, \bbeta_{1\bot}] = \bD.
\end{align*}
Then, the limit distribution in Theorem~\ref{Theorem:LSEbb} follows from the (\ref{proof10:1}) and Lemma~\ref{inverse}.
\end{proof}

\subsection{Proof of Theorem~\ref{Theorem:MLEtheta} to \ref{Theorem:MLEbb}}

The proofs of all theorems in the Section~\ref{CMAR:TheoremsMLE} would be very similar with the proofs for the theorems in Section~\ref{CMAR:TheoremsLSE}. With the exception that $\Sigma_e$ takes the form of a Kronecker product (\ref{sigma_e}), and they are minimizing the likelihood function (\ref{like}). We only need to prove the following Lemma, and omit the rest of the proofs.
\begin{lemma}\label{MLE_rate}
For the MLE estimators,
\begin{enumerate}
    \item[(a)] $(\hat{\Sigma}_i - \Sigma_i) = O_p(\frac{1}{\sqrt{T}}), i=1,2$;
    \item[(b)] $(\hat{\Phi} - \Phi)' \bB_T = O_p(\frac{1}{\sqrt{T}})$, where $\bB_T:=(\bar{\bbeta}, T^{\frac{1}{2}} \bar{\bbeta}_{\bot})$, $\bar{\bbeta} = \bbeta (\bbeta' \bbeta)^{-1}$;
    \item[(c)] $(\hat{\bB} - \bB) = O_p(\frac{1}{\sqrt{T}})$;
    \item[(d)] $(\hat{\bD} - \bD) = O_p(\frac{1}{\sqrt{T}})$;
    \item[(e)] $(\hat{\balpha}_i - \balpha_i) = O_p(\frac{1}{\sqrt{T}}), i=1,2$;
    \item[(f)] $\hat{\bbeta}_i\hat{\bbeta}'_i - \bbeta_i \bbeta'_i = O_p(\frac{1}{T}), i=1,2$.
\end{enumerate}
\end{lemma}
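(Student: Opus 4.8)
The plan is to reprove Lemma~\ref{LSE_rate} almost verbatim, the only change being that the Frobenius-norm objective is replaced by the Gaussian log-likelihood (\ref{like}); under the separability (\ref{sigma_e}) the latter equals, apart from additive constants and the two log-determinant terms, $\sum_t \vec(\bR_t)' \Sigma_e^{-1} \vec(\bR_t)$ with $\Sigma_e^{-1} = \Sigma_2^{-1}\otimes\Sigma_1^{-1}$. Since $\Sigma_e$ is assumed non-singular, $\Sigma_e^{-1}$ lies between two positive multiples of the identity, so this quadratic part is equivalent, up to constants bounded away from $0$ and $\infty$, to $\sum_t\|\vec(\bR_t)\|^2$; consequently every boundary estimate in the proof of Lemma~\ref{LSE_rate} carries over with the weighting inserted.

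I would establish (a)--(f) in a bootstrapped order so as to avoid circularity. First, standard consistency arguments for the Gaussian quasi-MLE --- using that at the optimum $\hat\Sigma_1$ is proportional to $\sum_t\hat\bR_t\hat\Sigma_2^{-1}\hat\bR_t'$ (and symmetrically for $\hat\Sigma_2$), together with the likelihood being at least as large at $\hat{\h\theta}$ as at the truth --- show that with probability tending to one $\hat\Sigma_1$ and $\hat\Sigma_2$ are bounded and bounded away from singularity. On that event the profiled objective in $(\Phi,\bB,\bD)$ is a weighted quadratic with weights of the admissible type, so the boundary argument (\ref{lemma2.0})--(\ref{lemma2.2}) applies: decomposing as in (\ref{bigMinusSmall}) produces a quadratic term $T\,\tr\bigl[\Sigma_e^{-1}(\bar\Phi-\Phi)\bB_T\,\bS_T\,\bB_T'(\bar\Phi-\Phi)'\bigr]$ --- the same $\bS_T\overset{w}{\longrightarrow}\Gamma_0\succ0$ as in Lemma~\ref{LSE_rate}, merely pre-multiplied by the positive-definite $\Sigma_e^{-1}$ --- plus a cross term of order $O_p(Tc_T)$. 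By the Representation Theorem~\ref{Theorem:represent}, $\bbeta'\vec(\bX_{t-1})$ is I(0) and $\bbeta_{\bot}'\vec(\bX_{t-1})$ is I(1), which is what underlies $\bS_T\overset{w}{\longrightarrow}\Gamma_0$; choosing $d_T\to0$ with $d_Tc_T\to\infty$ exactly as before makes the quadratic term dominate on the boundary, and convexity in the unconstrained parameters upgrades the boundary bound to the infimum over the exterior. This yields (b), (c), (d); part (e) follows from (b) because $\hat\balpha_i$ is a smooth function of $\hat\Phi$ restricted to the cointegration directions. Substituting these rates back into the identity $\hat\Sigma_1\propto\sum_t\hat\bR_t\hat\Sigma_2^{-1}\hat\bR_t'$, the residuals satisfy $\hat\bR_t=\bE_t+O_p(T^{-1/2})$ uniformly, so with the CLT for $\frac1T\sum_t\vec(\bE_t)\vec(\bE_t)'$ one obtains $\hat\Sigma_i-\Sigma_i=O_p(T^{-1/2})$, which is (a).

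Part (f) then needs nothing new: from (b), $(\hat\bbeta_i-\bbeta_i)'\bbeta_{i\bot}=O_p(1/T)$, and the normalization $\hat\bbeta_i'\hat\bbeta_i=\bbeta_i'\bbeta_i=\bI_{r_i}$ forces $(\hat\bbeta_i-\bbeta_i)'\bbeta_i=O_p(1/T)$ as in (\ref{bb'}); then (\ref{b'b})--(\ref{lemma2.4}) give $\hat\bbeta_i\hat\bbeta_i'-\bbeta_i\bbeta_i'=O_p(1/T)$ word for word. The genuinely new ingredient, and the step I expect to be the main obstacle, is decoupling the nuisance covariances: showing first that $\hat\Sigma_1,\hat\Sigma_2$ cannot escape a compact, non-degenerate region, and then that their $\sqrt T$-consistency does not contaminate the rates of the mean parameters --- for which it suffices that, on the boundary set, replacing $\hat\Sigma_e$ by $\Sigma_e$ perturbs the objective by $o_p$ of the dominating quadratic term. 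Everything else is a weighted rerun of the proof of Lemma~\ref{LSE_rate}.
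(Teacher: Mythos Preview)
Your proposal is correct and follows the same broad strategy as the paper --- namely, rerun Lemma~\ref{LSE_rate} with the Frobenius objective replaced by the Gaussian likelihood, and conclude (f) from (b) verbatim. The paper's own proof is considerably terser: it writes the model in the stacked form $\mathcal{Y}=\Phi\bB_T\mathcal{X}+\mathcal{E}$, introduces the unrestricted OLS pivot $\check\Phi=\mathcal{Y}\mathcal{X}'(\mathcal{X}\mathcal{X}')^{-1}$, observes that $\bS(\Phi)\overset{\as}{\to}\Sigma$ and $\check\bS\overset{p}{\to}\Sigma$, and then simply invokes Theorem~4 of \cite{chen2021autoregressive} to obtain $(\hat\Phi-\Phi)\bB_T=O_p(T^{-1/2})$; everything else is declared to follow from this rate and the arguments already given for Lemma~\ref{LSE_rate}. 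In particular, the paper does not spell out the two--stage decoupling of the nuisance covariances that you describe --- preliminary boundedness of $\hat\Sigma_i$, then the weighted boundary argument, then the $\sqrt T$-rate for $\hat\Sigma_i$ --- because that work is absorbed into the cited reference. Your route is more self-contained and makes explicit the one point the paper hides, namely that the estimated weights $\hat\Sigma_e^{-1}$ can be controlled before the rate for $\hat\Phi$ is known; the paper's route is shorter but relies on the reader accepting the external result. One minor wording issue: what you call the ``profiled objective'' should be understood as the likelihood with $\Sigma_e$ \emph{fixed at} $\hat\Sigma_e$ (which is indeed quadratic in the mean parameters and is what $\hat\Phi$ minimizes given $\hat\Sigma_e$), not the concentrated likelihood $\log|\bS(\Phi)|$, which is not quadratic.
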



\begin{proof}
The proof would be very similar to the proof of Theorem~4 in \cite{chen2021autoregressive}. Here we only address some major differences.
Without loss of generality, to simplify notations, consider the model without term $\bB$ and $\bD$,
\begin{equation}\label{Proof:lemma10modelForm2}
    \Delta \bX_t = \bA_1 \bX_{t-1} \bA_2' + \bE_t, \ t=1,\ldots,T.
\end{equation}
The proof can be extended to the model (\ref{Proof:modelForm1}) under the same idea. Denote the precision matrix $\Omega:= \Sigma^{-1} = \Omega_1 \otimes \Omega_2$, where $\Omega_i = \Sigma^{-1}_i$, $i=1,2$. And we define
\begin{align}
\begin{split}
    \mathcal{Y} &= [\Delta \vec(\bX_2), \Delta\vec(\bX_3),\ldots, \Delta\vec(\bX_{T})], \\
    \mathcal{X} &= \bU_T [\vec(\bX_1), \vec(\bX_2),\ldots, \vec(\bX_{T-1})],
\end{split}
\end{align}
where $\bU_T := (\bbeta, T^{-\frac{1}{2}} \bbeta_{\bot})' \in \mathbb{R}^{d_1 d_2 \times d_1 d_2}$ which is full rank so invertible.
$\bB_T:=(\bar{\bbeta}, T^{\frac{1}{2}} \bar{\bbeta}_{\bot})$ where $\bar{\bbeta} = \bbeta (\bbeta' \bbeta)^{-1}$. Note that $\bB_T \bU_T = \bI$. By the above definition, $\mathcal{Y}$ is made up by the stationary components, while $\mathcal{X}$ is made up by two components, one is the stationary part $\bbeta' \vec(\bX_t)$, and the other is non-stationary part $T^{-\frac{1}{2}}\bbeta_{\bot}' \vec(\bX_t)$ but controlled by the factor $T^{-\frac{1}{2}}$. Then the model~\ref{Proof:lemma10modelForm2} can be written as
\begin{equation}
    \mathcal{Y} = \Phi \bB_T \mathcal{X} + \mathcal{E},
\end{equation}
where $\Phi = \bA_2 \otimes \bA_1$, and $\mathcal{E}$ is the corresponding error matrix.
For the unrestricted CVAR model, its log like likelihood at the parameters $(\bar\Phi, \bar\Sigma)$ is
\begin{equation}
    l(\bar\Phi, \bar\Sigma) = -\frac{T-1}{2}\cdot h(\bar\Omega, \bS(\bar\Phi)),
\end{equation}
where $\bar\Omega := \bar\Sigma^{-1}$ and $\bS(\bar\Phi) := (\mathcal{Y} - \bar\Phi \bB_T\mathcal{X})(\mathcal{Y} - \bar\Phi \bB_T\mathcal{X})'/(T-1)$. Let $\check{\Phi} := \cY \cX' (\cX \cX')^{-1}$, and $\check{\bS}: =\bS(\check{\Phi}) = (\mathcal{Y} - \check\Phi \bB_T\mathcal{X})(\mathcal{Y} - \check\Phi \bB_T\mathcal{X})'/(T-1)$. Then we can show that $\bS(\Phi) \overset{\as}{\to}\Sigma$ and $\check{\bS} \overset{p}{\to} \Sigma$, and consequently $\check\Omega:=\check\bS^{-1}\overset{p}{\to} \Omega$.
Then we can prove (a) that $(\hat\Phi - \Phi)\bB_T = O_p(\frac{1}{\sqrt{T}})$ by the very similar proof of Theorem~4 in \cite{chen2021autoregressive}.
The rest of the proof of Lemma~\ref{MLE_rate} would follow by (a) and same arguments in Lemma~\ref{LSE_rate}.
\end{proof}

\end{appendices}




\end{document}